\documentclass[fleqn,11pt,twoside]{article}

\usepackage{amsthm,amsthm,amssymb, color, xcolor,epsfig, graphics, subfigure}

\usepackage{amsmath, graphicx, latexsym, lscape}


\usepackage{tikz,tikz-3dplot}
\newcommand{\bse}{\begin{subequations}}
\newcommand{\ese}{\end{subequations}}

\makeatletter
\newcommand{\copyrightnote}[2]{{\renewcommand{\thefootnote}{}
 \footnotetext{\small\it
\begin{flushleft}
 \copyright \ #1   #2  
\end{flushleft}}}}

\newcommand{\Name}[1]{\begin{flushleft}
                       \LARGE \bf #1
                       \end{flushleft}\vspace{-3mm}}

\newcommand{\Author}[1]{\begin{flushleft}
                       \it #1 \end{flushleft}}

\newcommand{\Address}[1]{\begin{flushleft}
                       \it #1 \end{flushleft}}

\newcommand{\Date}[1]{\begin{flushleft}
                      \small  \it #1 \end{flushleft}}

%
\newcommand{\evenhead}{Author \ name}
\newcommand{\oddhead}{Article \ name}

%
\renewcommand{\@evenhead}{
\hspace*{-3pt}\raisebox{-15pt}[\headheight][0pt]{\vbox{\hbox to \textwidth
{\thepage \hfil \evenhead}\vskip4pt \hrule}}}
\renewcommand{\@oddhead}{
\hspace*{-3pt}\raisebox{-15pt}[\headheight][0pt]{\vbox{\hbox to \textwidth
{\oddhead \hfil \thepage}\vskip4pt\hrule}}}
\renewcommand{\@evenfoot}{}
\renewcommand{\@oddfoot}{}

%
\setlength{\textwidth}{150.0mm}
\setlength{\textheight}{220.0mm}
\setlength{\oddsidemargin}{0in}
\setlength{\evensidemargin}{0in}
\setlength{\topmargin}{-1cm}
\setlength{\parindent}{5.0mm}

%
\long\def\@makecaption#1#2{%
  \vskip\abovecaptionskip
  \sbox\@tempboxa{\small \textbf{#1.}\ \ #2}%
  \ifdim \wd\@tempboxa >\hsize
    {\small \textbf{#1.}\ \ #2}\par
  \else
    \global \@minipagefalse
    \hb@xt@\hsize{\hfil\box\@tempboxa\hfil}%
  \fi
  \vskip\belowcaptionskip}

%
\newcommand{\JNMPnumberwithin}[3][\arabic]{%
  \@ifundefined{c@#2}{\@nocounterr{#2}}{%
    \@ifundefined{c@#3}{\@nocnterr{#3}}{%
      \@addtoreset{#2}{#3}%
      \@xp\xdef\csname the#2\endcsname{%
        \@xp\@nx\csname the#3\endcsname .\@nx#1{#2}}}}%
}

%
\renewenvironment{proof}[1][\proofname]{\par
  \normalfont
  \topsep6\p@\@plus6\p@ \trivlist
  \item[\hskip\labelsep\textbf{%
    #1\@addpunct{.}}]\ignorespaces
}{%
  \qed\endtrivlist
}

%
\newcommand{\resetfootnoterule} {
  \renewcommand\footnoterule{%
  \kern-3\p@
  \hrule\@width.4\columnwidth
  \kern2.6\p@}
}

%

\renewcommand{\footnoterule}{}

\makeatother

\theoremstyle{definition}


\newtheorem{theorem}{Theorem}
\newtheorem{prop}{Proposition}
\newtheorem{cor}{Corollary}
\newtheorem{defn}{Definition}

\setcounter{page}{53}


\begin{document}

\renewcommand{\evenhead}{ {\LARGE\textcolor{blue!10!black!40!green}{{\sf \ \ \ ]ocnmp[}}}\strut\hfill 
P Jennings and F Nijhoff
}
\renewcommand{\oddhead}{ {\LARGE\textcolor{blue!10!black!40!green}{{\sf ]ocnmp[}}}\ \ \ \ \  
Hyperspherical Trigonometry and Corresponding Elliptic Functions
}

\thispagestyle{empty}
\newcommand{\FistPageHead}[3]{
\begin{flushleft}
\raisebox{8mm}[0pt][0pt]
{\footnotesize \sf
\parbox{150mm}{{\textcolor{blue!10!black!40!green}{{\bf Open Communications in Nonlinear Mathematical Physics}}}
\ \ {Special Issue: Hietarinta}, 2026\\[0.1cm]
\strut\hfill 
ocnmp:18230
pp #2\hfill {\sc #3}}}\vspace{-13mm}
\end{flushleft}}

\FistPageHead{1}{\pageref{firstpage}--\pageref{lastpage}}{ \ \ }

\strut\hfill

\strut\hfill

\copyrightnote{The author. Distributed under a Creative Commons Attribution 4.0 International License}

\begin{center}

{\bf {\large A Special OCNMP Issue in Honour of Jarmo Hietarinta}}\\[0.2cm]
{\bf {\large on the Occasion of his 80th Birthday}}
\end{center}

\smallskip

\Name{Hyperspherical Trigonometry and Corresponding Elliptic Functions}

\Author{Paul Jennings}

\Address{Current address withheld}

\Author{Frank Nijhoff} 

\Address{School of Mathematics, University of Leeds, 
Leeds LS2 9JT, UK} 

\Date{Received May 19, 2026; Accepted June 4, 2026}

\setcounter{equation}{0}

\smallskip

\noindent
{\bf Citation format for this Article:}\newline
P Jennings and F Nijhoff, Hyperspherical Trigonometry and Corresponding Elliptic Functions,
{\it Open Commun. Nonlinear Math. Phys.}, Special Issue:\,Hietarinta, ocnmp:18230, \pageref{firstpage}--\pageref{lastpage}, 2026.

\strut\hfill

\noindent
{\bf The permanent Digital Object Identifier (DOI) for this Article:}\newline
{\it 10.46298/ocnmp.18230}
\strut\hfill

\begin{abstract}

\noindent 
We develop the basic formulae of hyperspherical trigonometry in multidimensional Euclidean space, using multidimensional vector products, and their conversion to identities for elliptic functions. We show that the basic addition formulae for functions on the 3-sphere embedded in 4-dimensional space lead to addition formulae for elliptic functions, associated with algebraic curves, which have two distinct moduli. We give an application of these formulae to the cases of a multidimensional Euler top, using them to provide a link to the Double Elliptic model.

\end{abstract}

\label{firstpage}


\section{Introduction}

Spherical trigonometry, a branch of geometry dealing with the goniometry of the angles of triangles confined to the surface of a 2-
sphere, is an area of mathematics that has existed since the ancient Greeks, with its foundations laid by Menelaus and Hipparchus. It 
is of vital importance for many calculations in astronomy, navigation and cartography. Further advances made in the Islamic world, in 
order to help calculate its Holy Days based on the phases of the moon, resulted in giving us the basis of spherical trigonometry in its 
modern form. In the western world in the 17th century spherical trigonometry was first considered as a separate mathematical 
discipline, independent of astronomy. One of the protagonists of this era, John Napier, in his work of 1614, treated spherical 
trigonometry alongside his work introducing logarithms. Other protagonists include Delambre, Euler, Cagnoli and l'Huillier.

For obvious reasons the trigonometry of hyperspheres in higher dimensions was not studied so intently. McMahon produced a number of 
formulae as generalisations of some of those from spherical trigonometry, \cite{2}. More recently, Sato considered the relationship 
between the dihedral angles of spherical simplices and those of their polars, \cite{sato}. There also exists a significant amount of work 
looking into the `Law of Sines', generalisations of the sine rule from the spherical case, \cite{eriksson}. Various formulae have also 
arisen in the the work of Derevnin, Mednykh and Pashkevich in their work on spherical volumes, \cite{Derevnin,Derevnin2}.  Recently, it 
has been shown by Petrera and Suris that the cosine rule for spherical triangles and tetrahedra define integrable systems, \cite{Suris}. 

It is well known that elliptic functions are related to spherical trigonometry through their addition formulae, \cite{Lagrange,Legendre}. These functions are commonly defined through the inversion of integrals, and they consequently obey differential equations associated with certain algebraic curves. A great deal of research took place in this area in the 19th century, comprising  works by many of the great mathematicians , including Euler, Jacobi, Legendre and Frobenius. As for higher dimensional hyperspherical trigonometry no such connection is known. It may be expected that this link would be through higher genus elliptic curves, for example Abelian functions, \cite{baker2}. However, this is not in fact the case. We show that instead the link is through the  `Generalised Jacobi Elliptic Functions' explicitly defined by Pawellek, \cite{pawellek}, and building upon Jacobi's work in this area,\cite{Jacobi}, as an elliptic covering, dependent on two distinct moduli.

In this paper we develop a complete set of formulae for hyperspherical trigonometry, and explore their link with elliptic 
functions. We establish a novel connection between the generalised Jacobi elliptic functions and the formulae of hyperspherical trigonometry. We show that through this connection the basic addition formulae of hyperspherical trigonometry lead to addition formulae for these generalised Jacobi elliptic functions. We apply this connection to an example, the four-dimensional Euler top. We then show that this system is equivalent to the one-particle Double Elliptic model.

Our interest in this subject area arises in relation to the tetrahedron equation, a higher dimensional generalisation of the Yang-Baxter equation, \cite{Zamolodchikov}. Zamolodchikov presented an intuitive solution to the latter dependent on spherical trigonometry, cf. also \cite{Baxter}. It is therefore natural for higher generalisations to be solved in terms of higher dimensional hyperspherical trigonometry.

The outline of this paper is as follows. In section 2 we review multidimensional vector products as a higher dimensional analogue of the standard cross-product of vectors, which we need to obtain the formulae of hyperspherical trigonometry. In section 3 we provide a summary of the well-known formulae of spherical trigonometry. We deduce analogous formulae for the four-dimensional hyperspherical case, and  using the same principles, do the same for the general $n$-dimensional case.  In section 4 we review the link between spherical trigonometry and the Jacobi elliptic functions, and generalise this to the link between four-dimensional hyperspherical trigonometry and elliptic functions. Section 5 is a brief introduction to the generalised Jacobi elliptic functions, \cite{pawellek}, complete with a link between the formulae of hyperspherical trigonometry and these functions. We conclude with two examples, the first being the four-dimensional Euler top, and the second the double elliptic (DELL) model, providing a connection between the two.

\noindent{\bf Note:} This paper summarises some collaborative results which 
appeared in the PhD thesis of one of the authors, \cite{Jennings-thesis}.

\section{Multidimensional Vector Products}
\label{sec:multidim products}

\subsection{Definition and Relations}
The formulae involved in spherical trigonometry are dependent on the cross product between vectors. The vector product $\mathbf{a}\times\mathbf{b}$ in 3D Euclidean space is a binary operation defined by
\begin{equation}
(\mathbf{a}\times\mathbf{b})_i=\mathrm{det}(\mathbf{a},\mathbf{b},\mathbf{e}_i),\hspace{20mm}(i=1,2,3),
\end{equation}
with  $\mathbf{e}_1$, $\mathbf{e}_2$, $\mathbf{e}_3$ the standard unit vectors in the orthogonal basis. This product obeys the rules:
\begin{itemize}
	\item Anti-commutative: $\mathbf{a}\times\mathbf{b}=-\mathbf{b}\times\mathbf{a}.$
	\item Vector Triple Product: $(\mathbf{a}\times\mathbf{b})\times\mathbf{c}=(\mathbf{a}\cdot\mathbf{c})\mathbf{b}-(\mathbf{b}\cdot\mathbf{c})\mathbf{a}=-\left|
	\begin{array}{cc}
	\mathbf{a} & \mathbf{b} \\
	\mathbf{a}\cdot\mathbf{c} & \mathbf{b}\cdot\mathbf{c}\\	
	\end{array}\right|,$
	
	from which it follows
	$(\mathbf{a}\times\mathbf{b})\times\mathbf{c}-(\mathbf{a}\times\mathbf{c})\times\mathbf{b}=\mathbf{a}\times(\mathbf{b}\times\mathbf{c}).$
	\item Area of a Parallelogram: The modulus of the vector product, $|\mathbf{a}\times\mathbf{b}|,$ is equivalent to the area of the parallelogram defined by these vectors, $|\mathbf{a}\times\mathbf{b}|=\sin\theta,$ with $\theta$ the obtuse angle between them $(0\leq\theta\leq\pi).$
\end{itemize}
For higher dimensional spherical trigonometry an $n$-ary operation between $n$ vectors is required. A natural vector product in four dimensions will therefore be a ternary vector product, $\mathbf{a}\times\mathbf{b}\times\mathbf{c}$, of three vectors in 4D Euclidean space $E_4$, defined in a similar manner, by
\begin{equation}
(\mathbf{a}\times\mathbf{b}\times\mathbf{c})\cdot\mathbf{d}=\mbox{det}(\mathbf{a},\mathbf{b},\mathbf{c},\mathbf{d}),\hspace{5mm}\mathrm{for\,\,all\,\, vectors}\,\mathbf{d}\in E_4.
\end{equation}
More generally, this could be extended to an $n$-dimensional vector product as an $n$-ary operation of $n-1$ vectors in $E_n$, defined by the expression, \cite{Grassman,Acosta}, 
\begin{equation}
(\mathbf{a}_1\times\mathbf{a}_2\times\dots\times\mathbf{a}_{n-1})\cdot\mathbf{a}_n=\mbox{det}(\mathbf{a}_1,\mathbf{a}_2,\dots,\mathbf{a}_{n-1},\mathbf{a}_n).
\end{equation}
It clearly follows from this definition that the multiple vector products are antisymmetric with respect to the interchangement of their constituent vectors. Furthermore, these multiple vector products are perpendicular to any one of their constituent vectors. In fact, the multidimensional vector product 
follows directly from the wedge product of the exterior algebra, 
\cite{Janich}. 

The following nested product identity involving five vectors holds for the triple vector product in $E_4=\mathbb{R}^{4}$,
\begin{equation}
(\mathbf{a}\times\mathbf{b}\times\mathbf{c})\times\mathbf{d}\times\mathbf{e}=-
\left|
\begin{array}{ccc}
\mathbf{a}&
\mathbf{b}&
\mathbf{c}\\
(\mathbf{a}\cdot\mathbf{d})&
(\mathbf{b}\cdot\mathbf{d})&
(\mathbf{c}\cdot\mathbf{d})\\
(\mathbf{a}\cdot\mathbf{e})&
(\mathbf{b}\cdot\mathbf{e})&
(\mathbf{c}\cdot\mathbf{e})\\
\end{array}
\right|.
\end{equation}
This follows from the more general higher-dimensional analogue. 

\begin{prop} {\rm (Nested Vector Product Identity)}. 
For 2n-1 vectors, $\mathbf{a}_i\in \mathbb{R}^{n+1}$, $i=1,\dots,2n-1$, we have the identity 
\begin{equation}
(\mathbf{a}_1\times\mathbf{a}_2\times\dots\times\mathbf{a}_n)\times\mathbf{a}_{n+1}\times\dots\times\mathbf{a}_{2n-1}=-
\left|
\begin{array}{cccc}
\mathbf{a}_1&
\mathbf{a}_2&
\dots&
\mathbf{a}_n\\
(\mathbf{a}_1\cdot\mathbf{a}_{n+1})&
(\mathbf{a}_2\cdot\mathbf{a}_{n+1})&
\dots&
(\mathbf{a}_n\cdot\mathbf{a}_{n+1})\\
\vdots&
\vdots&
\ddots&
\vdots\\
(\mathbf{a}_1\cdot\mathbf{a}_{2n-1})&
(\mathbf{a}_2\cdot\mathbf{a}_{2n-1})&
\dots&
(\mathbf{a}_n\cdot\mathbf{a}_{2n-1})\\
\end{array}
\right|.
\end{equation}\end{prop}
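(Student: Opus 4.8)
The plan is to identify both sides as vectors in $\mathbb{R}^{n+1}$ and to show they have the same inner product with an arbitrary test vector $\mathbf{x}\in\mathbb{R}^{n+1}$; since a vector is determined by all such inner products, equality of the vectors follows. On the right-hand side the displayed array has vector entries only in its first row, so its ``determinant'' is understood as the cofactor expansion along that row, producing a vector, call it $\det M$. Pairing $\det M$ with $\mathbf{x}$ amounts to replacing each entry $\mathbf{a}_j$ of the first row by the scalar $\mathbf{a}_j\cdot\mathbf{x}$; hence $(\det M)\cdot\mathbf{x}$ equals the genuine scalar determinant $\det N$ of the $n\times n$ matrix $N$ with first row $(\mathbf{a}_1\cdot\mathbf{x},\dots,\mathbf{a}_n\cdot\mathbf{x})$ and remaining rows $(\mathbf{a}_1\cdot\mathbf{a}_{n+i},\dots,\mathbf{a}_n\cdot\mathbf{a}_{n+i})$. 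The goal therefore reduces to proving $\mathbf{w}\cdot\mathbf{x}=-\det N$, where $\mathbf{w}$ denotes the left-hand side.

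For the left-hand side, write $\mathbf{b}:=\mathbf{a}_1\times\cdots\times\mathbf{a}_n$, so that $\mathbf{w}=\mathbf{b}\times\mathbf{a}_{n+1}\times\cdots\times\mathbf{a}_{2n-1}$ is itself a vector product of $n$ vectors. By the defining property of the vector product, $\mathbf{w}\cdot\mathbf{x}=\det(\mathbf{b},\mathbf{a}_{n+1},\dots,\mathbf{a}_{2n-1},\mathbf{x})$, an $(n+1)\times(n+1)$ determinant. I would expand this along its first column, whose entries are the components $b_k=\det(\mathbf{a}_1,\dots,\mathbf{a}_n,\mathbf{e}_k)=(-1)^{k+n+1}\det A_{\hat k}$, where $A$ is the $(n+1)\times n$ matrix with columns $\mathbf{a}_1,\dots,\mathbf{a}_n$ and $A_{\hat k}$ is obtained by deleting its $k$-th row. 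Collecting the cofactor signs, this rewrites $\mathbf{w}\cdot\mathbf{x}$ as $(-1)^n\sum_{k=1}^{n+1}\det A_{\hat k}\,\det B_{\hat k}$, where $B$ is the $(n+1)\times n$ matrix with columns $\mathbf{a}_{n+1},\dots,\mathbf{a}_{2n-1},\mathbf{x}$.

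The key step is to recognise this sum over deleted rows as a Cauchy--Binet expansion: since the $n$-element subsets of $\{1,\dots,n+1\}$ are precisely the complements of single indices, one has $\sum_{k}\det A_{\hat k}\,\det B_{\hat k}=\det(B^{\mathsf T}A)$. The matrix $B^{\mathsf T}A$ is exactly the Gram-type array of dot products $\mathbf{a}_j\cdot\mathbf{a}_{n+i}$ together with a final row $\mathbf{a}_j\cdot\mathbf{x}$, i.e.\ the matrix $N$ with its $\mathbf{x}$-row moved from the top to the bottom. Finally I would assemble the signs: permuting that row back across the $n-1$ intervening rows gives $\det(B^{\mathsf T}A)=(-1)^{n-1}\det N$, whence $\mathbf{w}\cdot\mathbf{x}=(-1)^n(-1)^{n-1}\det N=-\det N=-(\det M)\cdot\mathbf{x}$. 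As $\mathbf{x}$ was arbitrary, the identity follows.

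I expect the only real obstacle to be the sign bookkeeping — the cofactor sign $(-1)^{k+n+1}$ in $b_k$, the sign from the first-column expansion, and the row transposition relating $B^{\mathsf T}A$ to $N$ — which must all be tracked carefully to land on the overall factor $-1$ rather than $+1$. The analytic content is carried entirely by the Cauchy--Binet formula, and the $n=3$ ($E_4$) case quoted before the proposition is recovered by specialisation.
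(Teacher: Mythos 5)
Your proof is correct, but it takes a genuinely different route from the paper's. The paper works entirely in components: it writes the defining determinant via the Levi-Civita symbol, expresses the nested product as a contracted double sum of two epsilon symbols, and then invokes the generalised epsilon--delta identity $\sum_{j_1}\epsilon_{j_1 i_1\dots i_n}\epsilon_{j_1 j_2\dots j_{n+1}}=\det\left(\delta_{i_a j_b}\right)$, from which the determinant of dot products is reassembled (the paper leaves that last reassembly implicit with ``the result follows''). You instead dualise against an arbitrary test vector $\mathbf{x}$, turn both sides into scalar determinants, expand the $(n+1)\times(n+1)$ determinant along the column containing $\mathbf{b}=\mathbf{a}_1\times\cdots\times\mathbf{a}_n$, and recognise the resulting sum of products of complementary maximal minors as $\det(B^{\mathsf T}A)$ via Cauchy--Binet. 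Your sign bookkeeping is exactly right: $(-1)^{k+n+1}$ from $b_k$, $(-1)^{k+1}$ from the Laplace expansion, and $(-1)^{n-1}$ from cycling the $\mathbf{x}$-row, combining to the overall factor $-1$. The two arguments carry the same multilinear content --- Cauchy--Binet and the epsilon--delta contraction are two faces of one identity --- but yours replaces index gymnastics with standard, auditable linear-algebra lemmas and makes explicit why a Gram-type matrix appears, while the paper's version is shorter on the page because the contraction identity does all the work in a single stroke.
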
 

\begin{proof}
Consider the determinantal expression
\begin{equation}
\mbox{det}(\mathbf{a}_1,\dots\mathbf{a}_{n+1})=\sum_{i_1}\dots\sum_{i_{n+1}}\varepsilon_{i_1\dots i_{n+1}}(\mathbf{a}_1)_{i_1}\dots(\mathbf{a}_{n+1})_{i_{n+1}},
\end{equation}
where $\varepsilon_{i_1\dots i_{n+1}}$ is the $(n+1)$-dimensional Levi-Civita symbol. From this it follows that
\begin{equation}
\begin{split}
[(\mathbf{a}_1&\times\mathbf{a}_2\times\dots\times\mathbf{a}_n)\times\mathbf{a}_{n+1}\times\dots\times\mathbf{a}_{2n-1}]_{j_n}\\
&=\sum_{i_1}\dots\sum_{i_{n+1}}\sum_{j_1}\dots\sum_{j_{n}}\epsilon_{j_1i_1\dots i_n}\epsilon_{j_1\dots j_{n+1}}(\mathbf{a}_1)_{i_1}\dots(\mathbf{a}_{n})_{i_{n}}(\mathbf{a}_{n+1})_{j_2}\dots(\mathbf{a}_{2n-1})_{j_{n}}.
\end{split}
\end{equation}
Noting that the Levi-Civita symbol satisfies the following product rule
\begin{equation}
\sum_{j_1}\epsilon_{j_1i_1\dots i_n}\epsilon_{j_1\dots j_{n+1}}=\left|
\begin{array}{ccc}
\delta_{i_1j_2}&\dots & \delta_{i_1j_{n+1}}\\
\vdots & &\vdots\\
\delta_{i_nj_2}&\dots & \delta_{i_{n}j_{n+1}}\\
\end{array}
\right|,
\end{equation}
the result follows.
\end{proof}

Vectorial addition identities follow from the Pl\"ucker relations in projective geometry. The Pl\"ucker relations are identities involving minors of non-square matrices which are the Pl\"ucker coordinates of corresponding Grassmannians.

\begin{prop}[Pl\"ucker Relations]
For $(2n-2)$ vectors $\mathbf{a}_1,\mathbf{a}_2,\dots,\mathbf{a}_{2n-2}\in\mathbb{R}^{n-1}$,
\begin{equation}
\label{eq:plucker}
\begin{split}
(\mathbf{a}_1,\mathbf{a}_{n+1},\dots,\mathbf{a}_{2n-2})&(\mathbf{a}_2,\dots,\mathbf{a}_n)-(\mathbf{a}_2,\mathbf{a}_{n+1},\dots,\mathbf{a}_{2n-2})(\mathbf{a}_1,\mathbf{a}_3,\dots,\mathbf{a}_n)\\
&+\dots+(-1)^{n-1}(\mathbf{a}_n,\mathbf{a}_{n+1},\dots,\mathbf{a}_{2n-2})(\mathbf{a}_1,\dots,\mathbf{a}_{n-1})=0.
\end{split}
\end{equation}
\end{prop}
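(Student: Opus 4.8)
The plan is to recognise each summand of \eqref{eq:plucker} as a cofactor in the Laplace expansion of a single $n\times n$ determinant that vanishes for an obvious reason. The essential observation is that the vectors $\mathbf{a}_{n+1},\dots,\mathbf{a}_{2n-2}$ are common to every term on the left-hand side: as one passes from term to term, only the first slot of the first determinant changes, while the second determinant ranges over the maximal minors of $\mathbf{a}_1,\dots,\mathbf{a}_n$.

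First I would exploit the linearity of the determinant in its first argument. Since $\mathbf{a}_{n+1},\dots,\mathbf{a}_{2n-2}$ are held fixed, the map $\mathbf{x}\mapsto(\mathbf{x},\mathbf{a}_{n+1},\dots,\mathbf{a}_{2n-2})$ is a linear functional on $\mathbb{R}^{n-1}$, so there is a single fixed vector $\boldsymbol{\gamma}\in\mathbb{R}^{n-1}$ (which, up to an overall sign, is the multiple vector product $\mathbf{a}_{n+1}\times\dots\times\mathbf{a}_{2n-2}$ of the preceding definition) with
\[
(\mathbf{a}_k,\mathbf{a}_{n+1},\dots,\mathbf{a}_{2n-2})=\boldsymbol{\gamma}\cdot\mathbf{a}_k,\qquad k=1,\dots,n.
\]

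Next I would assemble the $n\times n$ matrix $M$ whose first row is $(\boldsymbol{\gamma}\cdot\mathbf{a}_1,\dots,\boldsymbol{\gamma}\cdot\mathbf{a}_n)$ and whose remaining $n-1$ rows $\mathbf{r}_1,\dots,\mathbf{r}_{n-1}$ are the rows of the $(n-1)\times n$ matrix $A=(\mathbf{a}_1,\dots,\mathbf{a}_n)$, the vectors $\mathbf{a}_k$ being written as columns. Laplace expansion of $\det M$ along the first row produces $\sum_{k=1}^{n}(-1)^{1+k}(\boldsymbol{\gamma}\cdot\mathbf{a}_k)(\mathbf{a}_1,\dots,\widehat{\mathbf{a}_k},\dots,\mathbf{a}_n)$, since striking the first row and the $k$-th column of $M$ leaves exactly the determinant of $A$ with its $k$-th column removed. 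By the previous step this is precisely the left-hand side of \eqref{eq:plucker}. On the other hand the entries of the first row are $\boldsymbol{\gamma}\cdot\mathbf{a}_k=\sum_{i=1}^{n-1}\gamma_i(\mathbf{a}_k)_i$, so the first row is the linear combination $\sum_{i=1}^{n-1}\gamma_i\,\mathbf{r}_i$ of the lower rows of $A$; hence $\det M=0$ and the identity follows.

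The main obstacle is purely bookkeeping: one must check that the cofactor signs $(-1)^{1+k}$ arising from the expansion coincide with the stated alternating signs $(-1)^{k-1}$, and that the minor obtained by deleting row $1$ and column $k$ is exactly $(\mathbf{a}_1,\dots,\widehat{\mathbf{a}_k},\dots,\mathbf{a}_n)$ with no spurious reordering sign. As an independent confirmation of this sign count, note that the identity is conceptually just the assertion that the $n$ vectors $\mathbf{a}_1,\dots,\mathbf{a}_n$ are linearly dependent in $\mathbb{R}^{n-1}$: the left-hand side is an alternating multilinear function of these $n$ arguments, and such a form must vanish because the number of arguments exceeds the dimension $n-1$.
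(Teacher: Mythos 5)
Your proof is correct and takes essentially the same route as the paper: both arguments encode the linear dependence of the $n$ vectors $\mathbf{a}_1,\dots,\mathbf{a}_n\in\mathbb{R}^{n-1}$ as the vanishing of a bordered $n\times n$ determinant whose Laplace expansion along the first row is exactly the left-hand side of \eqref{eq:plucker}. The only cosmetic difference is the order of operations: the paper first expands a formal vector-valued determinant to get the identity $\sum_k(-1)^{k-1}\mathbf{a}_k(\mathbf{a}_1,\dots,\widehat{\mathbf{a}_k},\dots,\mathbf{a}_n)=0$ and then contracts with $\mathbf{a}_{n+1},\dots,\mathbf{a}_{2n-2}$, whereas you contract first (via the vector $\boldsymbol{\gamma}$ representing that linear functional) and then observe that the resulting scalar matrix is singular because its first row lies in the span of the others.
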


\begin{proof}
Consider $n$ vectors, $\mathbf{a}_1, \mathbf{a}_2,\dots,\mathbf{a}_n$ in $\mathbb{R}^{n-1}$. As there are $n$ vectors in $(n-1)$ dimensional space they must be linearly dependent, hence,
\begin{equation}
\left|\begin{array}{cccc}
(\mathbf{a}_1)&(\mathbf{a}_2)&\cdots&(\mathbf{a}_n)\\
\mathbf{a}_1&\mathbf{a}_2& \cdots &\mathbf{a}_n\\
\end{array}\right|=0,
\end{equation}
where $(\mathbf{a}_i)$ represents the column vector of the components of $\mathbf{a}_i$, $i\in\{1,2,\dots,n\}$. This implies
\begin{equation}
\mathbf{a}_1(\mathbf{a}_2,\dots,\mathbf{a}_n)-\mathbf{a_2}(\mathbf{a}_1,\mathbf{a}_3,\dots,\mathbf{a}_n)+\dots+(-1)^{n-1}\mathbf{a}_n(\mathbf{a}_1,\dots,\mathbf{a}_{n-1})=0,
\end{equation}
which, in turn gives
\begin{equation}
\begin{split}
(\mathbf{a}_1,\mathbf{a}_{n+1},\dots,\mathbf{a}_{2n-2})&(\mathbf{a}_2,\dots,\mathbf{a}_n)-(\mathbf{a}_2,\mathbf{a}_{n+1},\dots,\mathbf{a}_{2n-2})(\mathbf{a}_1,\mathbf{a}_3,\dots,\mathbf{a}_n)\\
&+\dots+(-1)^{n-1}(\mathbf{a}_n,\mathbf{a}_{n+1},\dots,\mathbf{a}_{2n-2})(\mathbf{a}_1,\dots,\mathbf{a}_{n-1})=0,
\end{split}
\end{equation}
for some arbitrary $\mathbf{a}_{n+1},\dots, \mathbf{a}_{2n-2}\in\mathbb{R}^{n-1}$, as required. 
\end{proof}

\begin{cor}
For vectors $\mathbf{a}_1, \dots, \mathbf{a}_{2n-2}\in\mathbb{R}^{n-1}$,
\begin{equation}
\begin{split}
&(\mathbf{a}_1\times\dots\times\mathbf{a}_n)\times\mathbf{a}_{n+1}\times\dots\times\mathbf{a}_{2n-1}\\
&+\mathbf{a}_n\times(\mathbf{a}_1\times\dots\times\mathbf{a}_{n-1}\times\mathbf{a}_{n+1})\times\mathbf{a}_{n+2}\times\dots\times\mathbf{a}_{2n-1}+\dots\\
&+\mathbf{a}_n\times\dots\times\mathbf{a}_{2n-2}\times(\mathbf{a}_1\times\dots\times\mathbf{a}_{n-1}\times\mathbf{a}_{2n-1})\\
&=\mathbf{a}_1\times\dots\times\mathbf{a}_{n-1}\times(\mathbf{a}_{n}\times\dots\times\mathbf{a}_{2n-1}).
\end{split}
\end{equation}
\end{cor}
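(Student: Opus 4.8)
The plan is to reduce the vectorial identity to a scalar one by pairing both sides with an arbitrary test vector $\mathbf{d}\in\mathbb{R}^{n+1}$, and then to recognise the resulting scalar relation as an instance of the Pl\"ucker Relations established above (with the parameter $n$ shifted to $n+1$).

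First I would use the antisymmetry of the multidimensional vector product to bring, in every term, the bracketed (inner) product into the leading slot of the surrounding product. In the $k$-th summand on the left the inner factor sits in the $k$-th position, so moving it to the front costs a sign $(-1)^{k-1}$; on the right the inner factor sits in the $n$-th position, costing $(-1)^{n-1}$. Applying the Nested Vector Product Identity to each reordered term then expresses it, up to the collected sign, as an $n\times n$ determinant whose first row is vectorial and whose remaining $n-1$ rows are scalar dot products. Concretely the $k$-th left-hand term becomes $(-1)^{k}$ times the determinant built from the inner vectors $\mathbf{a}_1,\dots,\mathbf{a}_{n-1},\mathbf{a}_{n+k-1}$ against the outer vectors $\{\mathbf{a}_n,\dots,\mathbf{a}_{2n-1}\}\setminus\{\mathbf{a}_{n+k-1}\}$, while the right-hand side becomes $(-1)^{n}$ times the determinant built from $\mathbf{a}_n,\dots,\mathbf{a}_{2n-1}$ against $\mathbf{a}_1,\dots,\mathbf{a}_{n-1}$.

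Next I would contract both sides with $\mathbf{d}$. This turns each vectorial first row into the scalar row $(\mathbf{a}_\bullet\cdot\mathbf{d})$, so every term is now an ordinary (generalised Gram) determinant of two families of $n$ vectors drawn from $\{\mathbf{d},\mathbf{a}_1,\dots,\mathbf{a}_{2n-1}\}$. The decisive observation is that across all $n+1$ terms the block $\{\mathbf{a}_1,\dots,\mathbf{a}_{n-1}\}$ always occupies one factor together with exactly one further vector, and that extra vector runs precisely over the $n+1$ choices $\mathbf{d},\mathbf{a}_n,\mathbf{a}_{n+1},\dots,\mathbf{a}_{2n-1}$, the complementary $n$ vectors filling the other factor. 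Expanding each Gram determinant by the Cauchy--Binet formula over the choice of $n$ of the $n+1$ ambient coordinates, and fixing the omitted coordinate, the alternating sum of these products of maximal minors is exactly the Pl\"ucker Relations applied to the $2n$ projected vectors with fixed block $\{\mathbf{a}_1,\dots,\mathbf{a}_{n-1}\}$. Summing over the omitted coordinate reproduces the contracted identity, and since $\mathbf{d}$ is arbitrary the vector identity follows. As a check, for $n=2$ this is just the three-term relation $[\mathbf{d}\,\mathbf{a}_1][\mathbf{a}_2\mathbf{a}_3]-[\mathbf{a}_2\mathbf{a}_1][\mathbf{d}\,\mathbf{a}_3]+[\mathbf{a}_3\mathbf{a}_1][\mathbf{d}\,\mathbf{a}_2]=0$, which is equivalent to the classical Jacobi identity for the cross product recorded in Section II.

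The routine part is the structural bookkeeping just described; the genuine obstacle is sign control. One must reconcile four independent sources of signs — the reordering signs $(-1)^{k-1}$ and $(-1)^{n-1}$, the global minus in the Nested Vector Product Identity, the Cauchy--Binet signs attached to each omitted coordinate, and the alternating signs of the Pl\"ucker Relations — and verify that the $\mathbf{d}$-term produced by the Pl\"ucker expansion carries exactly the sign needed to match $-(\text{right-hand side})$, so that the whole combination vanishes as the Pl\"ucker relation demands. Once the signs are aligned the identity is immediate.
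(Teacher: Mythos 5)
Your proof is correct, but it is not the paper's argument run verbatim: it reverses the direction and uses different scaffolding, even though both ultimately rest on the Pl\"ucker relations. The paper starts from Proposition 2, sets two of the vectors there equal to a running basis vector $\mathbf{e}_i$ and sums over $i$ --- this basis-vector summation is exactly your Cauchy--Binet step in disguise, since $\sum_i \det(\dots,\mathbf{e}_i)\det(\dots,\mathbf{e}_i)$ assembles the inner product of two vector products --- and then lifts the resulting scalar relations to the vectorial identity (``summing over $(n-2)$ copies''), without ever introducing a test vector or invoking Proposition 1. You instead contract the asserted identity with an arbitrary $\mathbf{d}\in\mathbb{R}^{n+1}$, turn each term into a Gram determinant via antisymmetry plus the Nested Vector Product Identity, expand by Cauchy--Binet, and recognise each coordinate slice as a Pl\"ucker relation (Proposition 2 with $n\to n+1$, i.e.\ $2n$ vectors in $\mathbb{R}^n$) with fixed block $\{\mathbf{a}_1,\dots,\mathbf{a}_{n-1}\}$. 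Your sign bookkeeping checks out: the $k$-th left-hand term carries $(-1)^k$, the right-hand side $(-1)^n$, and multiplying the Pl\"ucker relation by $(-1)^{n-1}$ (the cost of moving each extra vector past the fixed block) matches the two sides slice by slice, so the contracted identity holds and the arbitrary-test-vector argument closes the proof. What your route buys is explicitness --- every sign has a visible origin, whereas the paper's final reassembly step is terse to the point of obscurity; what the paper's route buys is brevity, needing only Proposition 2 and antisymmetry, with no appeal to Proposition 1 or to a test vector. Incidentally, you silently and correctly repair the statement's typos by working with $2n-1$ vectors in $\mathbb{R}^{n+1}$ rather than the printed $2n-2$ vectors in $\mathbb{R}^{n-1}$.
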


\begin{proof}
From the Pl\"uker relations, (\ref{eq:plucker}), set $\mathbf{a}_{2n-2}=\mathbf{a}_n=\mathbf{e}_i$ and sum over $i$. This implies
\begin{equation}
\begin{split}
&(\mathbf{a}_1\times\mathbf{a}_{n+1}\times\dots\times\mathbf{a}_{2n-3})\cdot(\mathbf{a}_2\times\dots\times\mathbf{a}_{n-1})\\
&-(\mathbf{a}_2\times\mathbf{a}_{n+1}\times\dots\times\mathbf{a}_{2n-3})\cdot(\mathbf{a}_1\times\mathbf{a}_3\dots\times\mathbf{a}_{n-1})\\
&+\dots+(-1)^n(\mathbf{a}_{n-1}\times\mathbf{a}_{n+1}\times\dots\times\mathbf{a}_{2n-3})\cdot(\mathbf{a}_1\times\dots\times\mathbf{a}_{n-2})\\
&=0,
\end{split}
\end{equation}
which, using $(\mathbf{a}_1\times\mathbf{a}_2\times\dots\mathbf{a}_{n-2})\cdot\mathbf{a}_{n-1}=-(\mathbf{a}_2\times\dots\times\mathbf{a}_{n-1})\cdot\mathbf{a}_1$, implies
\begin{equation}
\begin{split}
&\mathbf{a}_1\times\mathbf{a}_{n+1}\times\dots\times\mathbf{a}_{2n-4}\times(\mathbf{a_2}\times\dots\times\mathbf{a}_{n-1})\\
&-\mathbf{a}_2\times\mathbf{a}_{n+1}\times\dots\times\mathbf{a}_{2n-4}\times(\mathbf{a_1}\times\mathbf{a}_{3}\times\dots\times\mathbf{a}_{n-1})\\
&+\dots+(-1)^n\mathbf{a}_{n-1}\times\mathbf{a}_{n+1}\times\dots\times\mathbf{a}_{2n-4}\times(\mathbf{a_1}\times\dots\times\mathbf{a}_{n-2})\\
&=0.
\end{split}
\end{equation}
Summing over $(n-2)$ copies, the result follows.
\end{proof}

\begin{cor}
More specifically, for $\mathbf{a}, \mathbf{b}, \mathbf{c}, \mathbf{d}, \mathbf{e}\in\mathbb{R}^4$,
\begin{equation}\label{9}
(\mathbf{a}\times\mathbf{b}\times\mathbf{c})\times\mathbf{d}\times\mathbf{e}+(\mathbf{a}\times\mathbf{b}\times\mathbf{d})\times\mathbf{e}\times\mathbf{c}+(\mathbf{a}\times\mathbf{b}\times\mathbf{e})\times\mathbf{c}\times\mathbf{d}=\mathbf{a}\times\mathbf{b}\times(\mathbf{c}\times\mathbf{d}\times\mathbf{e}).
\end{equation}
\end{cor}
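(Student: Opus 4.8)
The plan is to read this off as the $n=3$ specialisation of the preceding corollary, but since the general statement carries some dimension bookkeeping, I would rather give a self-contained verification directly from the nested vector product identity of the Proposition, specialised to $\mathbb{R}^4$. The first genuine step is to put the right-hand side into a usable shape: as written, the compound factor $\mathbf{c}\times\mathbf{d}\times\mathbf{e}$ sits in the \emph{last} slot, whereas the nested identity expands products whose compound factor is in the first slot. Using total antisymmetry of the ternary product, a cyclic rotation of the three arguments (an even permutation) gives $\mathbf{a}\times\mathbf{b}\times(\mathbf{c}\times\mathbf{d}\times\mathbf{e})=(\mathbf{c}\times\mathbf{d}\times\mathbf{e})\times\mathbf{a}\times\mathbf{b}$, which is now in the required form.

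Next I would apply the nested identity to all four compound products — the three on the left and the rewritten one on the right — turning each into an explicit $3\times 3$ determinant whose top row is a triple of the vectors and whose lower two rows consist of dot products. Expanding each determinant along its top row writes the whole equation as one explicit linear combination of $\mathbf{a},\mathbf{b},\mathbf{c},\mathbf{d},\mathbf{e}$ with coefficients that are degree-two polynomials in the ten pairwise dot products. The three left-hand terms each contribute $\mathbf{a},\mathbf{b}$ together with one of $\mathbf{c},\mathbf{d},\mathbf{e}$, while the right-hand term contributes only $\mathbf{c},\mathbf{d},\mathbf{e}$. The task then reduces to showing that, for each of the five vectors, the coefficient of that vector in $(\text{left}-\text{right})$ is the zero polynomial: the coefficients of $\mathbf{a}$ and of $\mathbf{b}$ collapse by cancellation in pairs (using symmetry of the dot product), and the coefficients of $\mathbf{c}$, $\mathbf{d}$, $\mathbf{e}$ agree with the matching right-hand coefficients term for term.

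The conceptual point I would flag as the main obstacle is that one is \emph{not} entitled to ``compare coefficients'' in the usual linear-algebra sense: five vectors in $\mathbb{R}^4$ are necessarily linearly dependent, so a vanishing linear combination need not have vanishing coefficients. The argument sidesteps this entirely. The nested identity produces, for the difference of the two sides, one \emph{specific} expansion $\sum_{\mathbf{v}} c_{\mathbf{v}}\,\mathbf{v}$, and I show that each $c_{\mathbf{v}}$ is identically zero as a polynomial in the dot products. Hence the difference is literally the combination with all coefficients zero, and therefore the zero vector, irrespective of any linear relations among $\mathbf{a},\dots,\mathbf{e}$.

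Finally, a labour-saving observation I would exploit to cut the verification roughly in half: both sides are totally antisymmetric in the trio $(\mathbf{c},\mathbf{d},\mathbf{e})$. The three left-hand terms form the cyclic orbit of the first under $\mathbf{c}\to\mathbf{d}\to\mathbf{e}\to\mathbf{c}$, and a single transposition such as $\mathbf{c}\leftrightarrow\mathbf{d}$ is checked, via antisymmetry of the ternary product, to send the left-hand side to its negative. Consequently it suffices to verify the coefficients of $\mathbf{a}$, $\mathbf{b}$ and of a single representative of $\{\mathbf{c},\mathbf{d},\mathbf{e}\}$, the remaining cases following by symmetry.
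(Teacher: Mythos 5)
Your proof is correct, but it takes a genuinely different route from the paper's. The paper gives no separate argument for this corollary: it is stated (``More specifically'') as the $n=3$ specialisation of the preceding general corollary, whose proof goes through the Pl\"ucker relations --- linear dependence of $n$ vectors in $(n-1)$-dimensional space, contracted against basis vectors $\mathbf{e}_i$ and summed. Your verification instead runs everything through the nested vector product identity of Proposition 1: after the (correct) observation that a cyclic shift of the three slots is an even permutation, so $\mathbf{a}\times\mathbf{b}\times(\mathbf{c}\times\mathbf{d}\times\mathbf{e})=(\mathbf{c}\times\mathbf{d}\times\mathbf{e})\times\mathbf{a}\times\mathbf{b}$, all four compound products become $3\times 3$ determinants, and the identity reduces to polynomial identities among the ten dot products --- which do check out: the coefficients of $\mathbf{a}$ and $\mathbf{b}$ cancel in pairs, e.g.\ the $\mathbf{a}$-coefficient is $-(\mathbf{b}\cdot\mathbf{d})(\mathbf{c}\cdot\mathbf{e})+(\mathbf{c}\cdot\mathbf{d})(\mathbf{b}\cdot\mathbf{e})-(\mathbf{b}\cdot\mathbf{e})(\mathbf{c}\cdot\mathbf{d})+(\mathbf{d}\cdot\mathbf{e})(\mathbf{b}\cdot\mathbf{c})-(\mathbf{b}\cdot\mathbf{c})(\mathbf{d}\cdot\mathbf{e})+(\mathbf{c}\cdot\mathbf{e})(\mathbf{b}\cdot\mathbf{d})=0$, while the $\mathbf{c},\mathbf{d},\mathbf{e}$ coefficients on the left match those on the right term for term. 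Your point about the logical direction of ``comparing coefficients'' is well taken and handled correctly: you only use that a specific expansion whose coefficients vanish identically is the zero vector, never the (false, for five vectors in $\mathbb{R}^4$) converse. As to what each approach buys: the paper's derivation yields the identity in all dimensions at once and exhibits its conceptual source in the Grassmannian/Pl\"ucker structure, whereas yours is self-contained, sidesteps the dimension bookkeeping of the general corollary (whose statement in the paper indeed carries typos --- the vectors should lie in $\mathbb{R}^{n+1}$, not $\mathbb{R}^{n-1}$, and $2n-1$ of them appear), and your antisymmetry observation in $(\mathbf{c},\mathbf{d},\mathbf{e})$ legitimately cuts the checking down to the coefficients of $\mathbf{a}$, $\mathbf{b}$ and one representative of the trio.
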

This identity will be particularly important in proving the various hyperspherical identities in the four-dimensional case.

\section{Hyperspherical Trigonometry}

In  this section we review the formulae for spherical trigonometry as a preparation for developing similar formulae in higher dimensions. The approach we take is by exploiting the higher dimensional vector product introduced in section~\ref{sec:multidim products}.
\subsection{Spherical Trigonometry}
We review the derivation of the basic formulae of spherical trigonometry, cf. also \cite{1}. Consider a big spherical triangle on the surface of a $2$-sphere of unit radius embedded in three dimensional Euclidean space, that is a triangle bounded by three big circles on $S^2$, restricted such that each side is less than a semicircle, and hence, each angle is less than $\pi$. Take the centre of the sphere to be the origin in $\mathbf{E}_3=\mathbb{R}^3$ and denote the position vectors of the three vertices of the spherical triangle by $\mathbf{n}_1,\mathbf{n}_2$ and $\mathbf{n}_3$, with the angles, $\theta_{ij}$, between them, corresponding to the edges, being defined by
\begin{equation}
\mathbf{n}_i\cdot\textbf{n}_j\equiv\cos\theta_{ij},\hspace{20 mm}i,j=1,2,3.
\end{equation}
Introduce the vectors
\begin{equation}
\textbf{u}_{ij}\equiv\frac
{\textbf{n}_i\times\textbf{n}_j}
{\left|\textbf{n}_i\times\textbf{n}_j\right|},
\end{equation}
and define the spherical angles, $\alpha_j$, between them by
\begin{equation}
\textbf{u}_{ij}\cdot\textbf{u}_{jk}\equiv-\cos\alpha_{j}.
\end{equation}

\begin{figure}[htb]
\centering
\tdplotsetmaincoords{60}{120}
\begin{tikzpicture}[tdplot_main_coords,scale=5]
\draw[thick,->](0,0,0) -- (1,0,0)node[anchor=north east]{$\mathbf{n}_k$};
\draw[thick,->](0,0,0) -- (0,1,0)node[anchor=west]{$\mathbf{n}_j$};
\draw[thick,->](0,0,0) -- (0,0,1)node[anchor=south]{$\mathbf{n}_i$};
\draw (0,0)circle (1 cm);
\tdplotdrawarc[thick]{(0,0,0)}{1}{0}{90}{}{}
\tdplotdrawarc{(0,0,0)}{0.1}{0}{90}{anchor=north}{$\theta_{jk}$}
\tdplotdrawarc[dashed]{(0,0,0)}{1}{90}{360}{}{}
\tdplotdrawarc{(0,0,1)}{0.1}{2}{86}{anchor=north}{$\alpha_{i}$}

\tdplotsetrotatedcoords{90}{90}{0}

\tdplotdrawarc[tdplot_rotated_coords,thick]{(0,0,0)}{1}{180}{270}{}{}
\tdplotdrawarc[tdplot_rotated_coords]{(0,0,0)}{0.1}{180}{270}{anchor=east}{$\theta_{ik}$}
\tdplotdrawarc[tdplot_rotated_coords,dashed]{(0,0,0)}{1}{0}{180}{}{}
\tdplotdrawarc[tdplot_rotated_coords,dashed]{(0,0,0)}{1}{270}{360}{}{}
\tdplotdrawarc[tdplot_rotated_coords]{(0,0,1)}{0.1}{184}{268}{anchor=east}{$\alpha_{j}$}

\tdplotsetrotatedcoords{0}{90}{90}
\tdplotdrawarc[tdplot_rotated_coords,thick]{(0,0,0)}{1}{0}{90}{}{}
\tdplotdrawarc[tdplot_rotated_coords]{(0,0,0)}{0.1}{0}{90}{anchor=south west}{$\theta_{ij}$}
\tdplotdrawarc[tdplot_rotated_coords,dashed]{(0,0,0)}{1}{90}{360}{}{}
\tdplotdrawarc[tdplot_rotated_coords]{(0,0,1)}{0.1}{0}{90}{anchor=west}{$\alpha_k$}

\end{tikzpicture}
\caption{A spherical triangle}
\label{fig:spherical_triangle}
\end{figure}
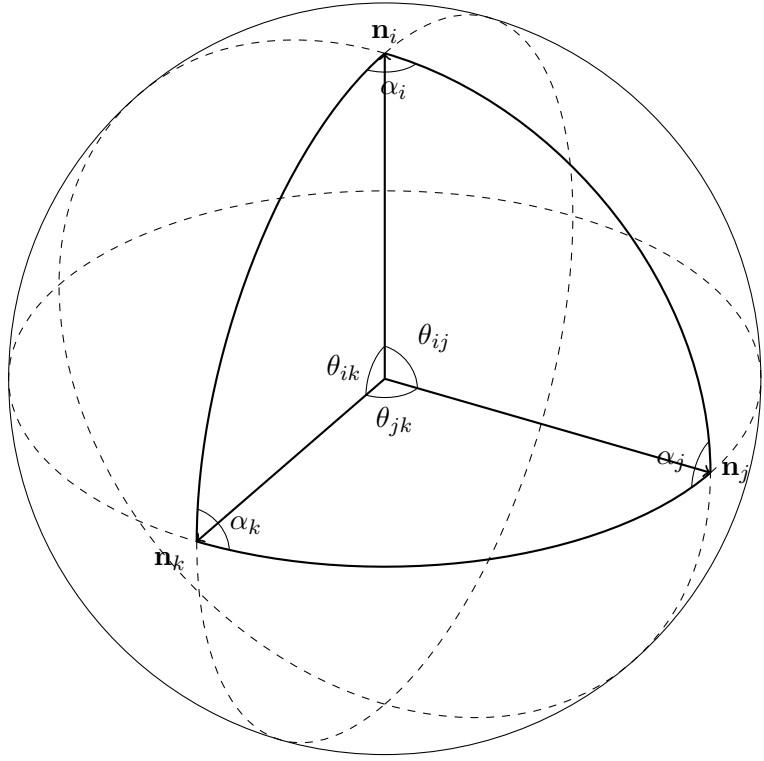

\begin{theorem}[Spherical Excess]
The area, $A$, of a spherical triangle is given by its $\mathbf{Spherical}$ $\mathbf{Excess}$,
\begin{equation} 
A= \alpha_i+\alpha_j+\alpha_k-\pi.
\end{equation}
\end{theorem}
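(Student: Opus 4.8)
The plan is to prove this via the classical lune (or ``double wedge'') argument, which converts the additive relation among the angles into a statement about areas on $S^2$. First I would establish the area of a \emph{lune}: the region on the unit sphere bounded by two great semicircles meeting at a pair of antipodal points with interior angle $\alpha$. Since this area depends linearly on the opening angle and the whole sphere (with $\alpha$ running up to $2\pi$) has area $4\pi$, a single lune of angle $\alpha$ has area $2\alpha$.

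Next I would extend each of the three sides of the triangle $T$, with vertices $\mathbf{n}_i,\mathbf{n}_j,\mathbf{n}_k$, to a complete great circle. The three resulting great circles partition $S^2$ into eight spherical triangles, one of which is $T$ and another its antipodal image $T^{*}$; by the antipodal symmetry of the sphere these two have equal area. Each vertex of $T$, say the one carrying the interior angle $\alpha_i$, is the crossing point of two of these great circles, and the two lunes of angle $\alpha_i$ cut out by that pair, namely the lune containing $T$ together with its antipodal partner containing $T^{*}$, form a ``double lune'' of total area $4\alpha_i$.

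The core of the argument is the covering claim: the three double lunes associated with $\alpha_i,\alpha_j,\alpha_k$ cover the whole sphere, and they do so in such a way that $T$ and $T^{*}$ are each covered three times while every other point is covered exactly once. Granting this, summing areas with multiplicity gives
\[
4\alpha_i+4\alpha_j+4\alpha_k=4\pi+4A,
\]
since the total overcount is $2\,\mathrm{Area}(T)+2\,\mathrm{Area}(T^{*})=4A$; dividing by $4$ yields $A=\alpha_i+\alpha_j+\alpha_k-\pi$.

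The main obstacle is verifying this covering multiplicity precisely, i.e.\ checking that the eight triangles produced by the three great circles are distributed among the three double lunes exactly as claimed. I would handle it by elementary bookkeeping: placing the three great circles as the coordinate circles, one labels the eight regions by the signs of the coordinates, so that each double lune is a sign condition of the form $\{x_px_q>0\}$, and a direct inspection shows $T$ and $T^{*}$ satisfy all three conditions while each of the remaining six regions satisfies exactly one. The hypothesis that each side is shorter than a semicircle is what guarantees $T$ is a genuine ``small'' triangle, so that the relevant wedge at each vertex is the one of angle $\alpha$ rather than $\pi-\alpha$. An alternative that sidesteps the combinatorics entirely would be to invoke the Gauss--Bonnet theorem, under which $A=\iint_T K\,dA=\alpha_i+\alpha_j+\alpha_k-\pi$ because $K\equiv1$ on the unit sphere; but the lune argument is more elementary and better matches the synthetic, vector-product spirit of this section.
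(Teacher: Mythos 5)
Your proof is correct, but it is not the route the paper takes. The paper proves the theorem by invoking Schl\"afli's differential volume formula, $\mathrm{d}A=\sum_{i}\mathrm{d}\alpha_i$, integrating to get $A=\alpha_i+\alpha_j+\alpha_k+c$, and then fixing the integration constant $c=-\pi$ by comparing with the octant triangle $\alpha_i=\alpha_j=\alpha_k=\pi/2$, whose area is $1/8$ of the sphere, i.e.\ $\pi/2$. Your argument is the classical Girard lune-covering proof, which the paper explicitly mentions only as an alternative (``This theorem may also be proven by considering the areas of intersecting lunars'') without carrying it out. The trade-off: the paper's proof is very short but imports Schl\"afli's formula as a black box, so the real content is hidden in that cited result; your proof is self-contained and elementary, with the only delicate point being the covering-multiplicity bookkeeping, which you identify and resolve correctly (three double lunes of areas $4\alpha_i,4\alpha_j,4\alpha_k$ covering $S^2$ with $T$ and $T^*$ counted three times each). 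One small imprecision: you cannot in general ``place the three great circles as the coordinate circles,'' since the planes of the three sides need not be mutually orthogonal; however, your sign-labelling scheme does not actually require this --- labelling the eight regions by the signs of the three linear functionals defining the planes of the sides works for any three planes through the origin in general position, and the count (the all-plus and all-minus regions satisfy all three conditions $\varepsilon_p\varepsilon_q>0$, every mixed-sign region exactly one) goes through verbatim. Your fallback via Gauss--Bonnet is also valid but, as you note, less in keeping with the elementary vector-product spirit of the section.
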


\begin{proof}
Scl\"afli's Differential Volume formula states that the area of a spherical triangle must satisfy
\begin{equation}
\mathrm{d}A=\sum_{i=1}^3\mathrm{d}\alpha_i.
\end{equation}
Integrating this gives
\begin{equation}
A=\alpha_i+\alpha_j+\alpha_k+c,
\end{equation}
where $c=-\pi$ is the integration constant determined by considering, say, the area of a spherical triangle with planar angles $\alpha_i=\alpha_j=\alpha_k=\frac{\pi}{2}$ and comparing this to the known area of $1/8$ of a sphere.
\end{proof}
This theorem may also be proven by considering the areas of intersecting lunars, \cite{1}.
\begin{defn}[Polar Triangle]
The spherical triangle defined by the vectors $\mathbf{u}_{ij}$, $\mathbf{u}_{ik}$, $\mathbf{u}_{jk}$ is called the $\mathbf{polar}$ of the spherical triangle defined by the vectors $\mathbf{n}_{i}$, $\mathbf{n}_{j}$, $\mathbf{n}_{k}$.
\end{defn}
By considering various relations for the scalar and vector products between the polar vectors $\textbf{u}_{ij}$, we can define a number of important relations between the angles of a spherical triangle.
\begin{prop}[Cosine Rule]
\begin{equation}\label{eq:23}
\cos\alpha_{j}=\frac{\cos\theta_{ik}-\cos\theta_{ij}\cos\theta_{jk}}{\sin\theta_{ij}\sin\theta_{jk}},
\end{equation} for all $i,j,k=1,2,3$. 
\end{prop}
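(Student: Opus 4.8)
The plan is to compute the left-hand inner product $\mathbf{u}_{ij}\cdot\mathbf{u}_{jk}$ directly from its definition and to recognise the right-hand side as the outcome. First I would record the two ingredients that convert cross products into trigonometric quantities: since the $\mathbf{n}_i$ are unit vectors, the Area-of-a-Parallelogram property gives $|\mathbf{n}_i\times\mathbf{n}_j|=\sin\theta_{ij}$, so that $\mathbf{u}_{ij}\cdot\mathbf{u}_{jk}=\big[(\mathbf{n}_i\times\mathbf{n}_j)\cdot(\mathbf{n}_j\times\mathbf{n}_k)\big]/(\sin\theta_{ij}\sin\theta_{jk})$. The whole problem then reduces to evaluating the numerator, which is the scalar product of two cross products sharing the common vector $\mathbf{n}_j$.

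Next I would establish the Lagrange (Binet--Cauchy) identity $(\mathbf{a}\times\mathbf{b})\cdot(\mathbf{c}\times\mathbf{d})=(\mathbf{a}\cdot\mathbf{c})(\mathbf{b}\cdot\mathbf{d})-(\mathbf{a}\cdot\mathbf{d})(\mathbf{b}\cdot\mathbf{c})$ using only tools already present in the paper. Writing the scalar product as a scalar triple product, $(\mathbf{a}\times\mathbf{b})\cdot(\mathbf{c}\times\mathbf{d})=\big[(\mathbf{a}\times\mathbf{b})\times\mathbf{c}\big]\cdot\mathbf{d}$, and then applying the Vector Triple Product rule $(\mathbf{a}\times\mathbf{b})\times\mathbf{c}=(\mathbf{a}\cdot\mathbf{c})\mathbf{b}-(\mathbf{b}\cdot\mathbf{c})\mathbf{a}$ quoted in Section~\ref{sec:multidim products}, dotting the result with $\mathbf{d}$ yields the identity immediately.

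Specialising to $\mathbf{a}=\mathbf{n}_i$, $\mathbf{b}=\mathbf{c}=\mathbf{n}_j$, $\mathbf{d}=\mathbf{n}_k$ and using $\mathbf{n}_j\cdot\mathbf{n}_j=1$, the numerator becomes $(\mathbf{n}_i\cdot\mathbf{n}_j)(\mathbf{n}_j\cdot\mathbf{n}_k)-(\mathbf{n}_i\cdot\mathbf{n}_k)=\cos\theta_{ij}\cos\theta_{jk}-\cos\theta_{ik}$. Substituting this back and invoking the defining relation $\mathbf{u}_{ij}\cdot\mathbf{u}_{jk}=-\cos\alpha_j$ gives $-\cos\alpha_j=(\cos\theta_{ij}\cos\theta_{jk}-\cos\theta_{ik})/(\sin\theta_{ij}\sin\theta_{jk})$, and a single sign change in the numerator produces the stated formula.

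I do not anticipate a serious analytic obstacle; the computation is short once the Lagrange identity is in hand. The point demanding the most care is the bookkeeping of the shared index together with the sign conventions: one must pair the cross products so that $\mathbf{n}_j$ is genuinely the common vertex (so the two edges meeting at vertex $j$ are the ones involved), and then track the minus sign built into the definition $\mathbf{u}_{ij}\cdot\mathbf{u}_{jk}=-\cos\alpha_j$, which is exactly what flips $\cos\theta_{ij}\cos\theta_{jk}-\cos\theta_{ik}$ into the numerator $\cos\theta_{ik}-\cos\theta_{ij}\cos\theta_{jk}$ appearing in the statement. I would also note that the normalisation $|\mathbf{n}_i\times\mathbf{n}_j|=\sin\theta_{ij}$ tacitly uses $0<\theta_{ij}<\pi$, which is guaranteed by the ``big spherical triangle'' restriction imposed at the outset, so the $\mathbf{u}_{ij}$ are well defined.
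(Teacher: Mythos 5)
Your proof is correct and follows essentially the same route as the paper: both compute $\mathbf{u}_{ij}\cdot\mathbf{u}_{jk}$ from its definition and reduce the numerator $(\mathbf{n}_i\times\mathbf{n}_j)\cdot(\mathbf{n}_j\times\mathbf{n}_k)$ to $\cos\theta_{ij}\cos\theta_{jk}-\cos\theta_{ik}$, then invoke $\mathbf{u}_{ij}\cdot\mathbf{u}_{jk}=-\cos\alpha_j$. The only difference is that you explicitly derive the Binet--Cauchy/Lagrange identity from the vector triple product rule, a step the paper's proof leaves implicit.
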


\begin{proof}
Consider the scalar product,
\begin{equation}
\begin{split}
\textbf{u}_{ij}\cdot\textbf{u}_{jk}
=&\frac
{(\textbf{n}_i\times\textbf{n}_j)\cdot(\textbf{n}_j\times\textbf{n}_k)}
{\left|\textbf{n}_i\times\textbf{n}_j\right|\left|\textbf{n}_j\times\textbf{n}_k\right|},\\
=&\frac{\cos\theta_{ij}\cos\theta_{jk}-\cos\theta_{ik}}
{\sin\theta_{ij}\sin\theta_{jk}}.
\end{split}
\end{equation}
Hence, the Cosine Rule follows.
\end{proof}

\begin{defn}
The \textbf{generalised} \textbf{sine} \textbf{function} \textbf{of} \textbf{three} \textbf{variables}, $\sin(\theta_{ij},\theta_{jk},\theta_{ik})$, is defined by
\begin{equation}
\begin{split}
\sin(\theta_{ij},\theta_{jk},\theta_{ik})&=\sqrt{1-\cos^2\theta_{ij}-\cos^2\theta_{jk}-\cos^2\theta_{ik}+2\cos\theta_{jk}\cos\theta_{ij}\cos\theta_{ik}},\\
&=\left|\begin{array}{ccc}
1 & \cos\theta_{ij} &\cos\theta_{ik} \\
\cos\theta_{ij} & 1 & \cos\theta_{jk} \\
\cos\theta_{ik} & \cos\theta_{jk} & 1 \\
\end{array}\right|^{\frac{1}{2}}, \hspace{10mm}(0\leq\theta_{ij},\theta_{ik},\theta_{jk}\leq\pi).
\end{split}
\end{equation}
Note that the restrictions for a spherical triangle, that each side is less than a semi-circle and each angle less than $\pi$, ensure that the generalised sine function is always real and positive.
\end{defn}

This sine function is equivalent to the volume of a parallelepiped with edges $\mathbf{n}_i$, $\mathbf{n}_j$ and $\mathbf{n}_k$, given by
\begin{equation}
\text{Volume}=|\textbf{n}_i\cdot(\textbf{n}_j\times\textbf{n}_k)|=\sin(\theta_{ij},\theta_{ik},\theta_{jk}).
\end{equation}
This identity can be viewed of as a consequence of the triple product when we embed the 3D vectors $\mathbf{n}_i$, $\mathbf{n}_j$ and $\mathbf{n}_k$ in $E_4$ by
\begin{equation}
\widetilde{\mathbf{n}}_i=\left(\begin{array}{c} \mathbf{n}_i \\ 0\end{array}\right)\ , \quad  
\widetilde{\mathbf{n}}_j=\left(\begin{array}{c} \mathbf{n}_j \\ 0\end{array}\right)\ , \quad
\widetilde{\mathbf{n}}_k=\left(\begin{array}{c} \mathbf{n}_k \\ 0\end{array}\right)\ ,
\end{equation}
such that 
\begin{equation}
\widetilde{\mathbf{n}}_i\times\widetilde{\mathbf{n}}_j\times\widetilde{\mathbf{n}}_k=\left(\begin{array}{c} 0 \\ 0\\ 0 \\ \det(\mathbf{n}_i,\mathbf{n}_j,\mathbf{n}_k)\end{array}\right)\  , 
\end{equation} 
where the last entry equals $|\mathbf{n}_i\cdot(\mathbf{n}_j\times\mathbf{n}_k)|$.

\begin{prop}[Sine Rule]
\begin{equation}
\frac{\sin\alpha_i}{\sin\theta_{jk}}=
\frac{\sin\alpha_j}{\sin\theta_{ik}}=
\frac{\sin\alpha_k}{\sin\theta_{ij}}=
\frac{\sin(\theta_{ij},\theta_{jk},\theta_{ik})}{\sin\theta_{ij}\sin\theta_{jk}\sin\theta_{ik}}=k,
\end{equation}
where $k$ is a constant.
\end{prop}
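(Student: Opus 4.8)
The plan is to compute $\sin\alpha_j$ directly from the polar vectors $\mathbf{u}_{ij}$ and $\mathbf{u}_{jk}$, in exact parallel to the way the Cosine Rule was obtained from their scalar product. Since $\mathbf{u}_{ij}\cdot\mathbf{u}_{jk}=-\cos\alpha_j$ and both vectors are unit, the angle enclosed between them is $\pi-\alpha_j$, whose sine is again $\sin\alpha_j$; hence I would begin from the relation $\sin\alpha_j=|\mathbf{u}_{ij}\times\mathbf{u}_{jk}|$ for the magnitude of the cross product of two unit vectors.

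First I would substitute the definitions to write
\[
\mathbf{u}_{ij}\times\mathbf{u}_{jk}=\frac{(\mathbf{n}_i\times\mathbf{n}_j)\times(\mathbf{n}_j\times\mathbf{n}_k)}{|\mathbf{n}_i\times\mathbf{n}_j|\,|\mathbf{n}_j\times\mathbf{n}_k|},
\]
and then expand the numerator using the vector triple product identity, in the form $\mathbf{A}\times(\mathbf{n}_j\times\mathbf{n}_k)=(\mathbf{A}\cdot\mathbf{n}_k)\mathbf{n}_j-(\mathbf{A}\cdot\mathbf{n}_j)\mathbf{n}_k$, with $\mathbf{A}=\mathbf{n}_i\times\mathbf{n}_j$. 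The key simplification is that $\mathbf{A}\cdot\mathbf{n}_j=(\mathbf{n}_i\times\mathbf{n}_j)\cdot\mathbf{n}_j=0$ by perpendicularity, whereas $\mathbf{A}\cdot\mathbf{n}_k=\det(\mathbf{n}_i,\mathbf{n}_j,\mathbf{n}_k)=\mathbf{n}_i\cdot(\mathbf{n}_j\times\mathbf{n}_k)$, so the numerator collapses to $[\mathbf{n}_i\cdot(\mathbf{n}_j\times\mathbf{n}_k)]\,\mathbf{n}_j$, a scalar multiple of the unit vector $\mathbf{n}_j$.

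Next, taking magnitudes and using $|\mathbf{n}_j|=1$, together with $|\mathbf{n}_i\times\mathbf{n}_j|=\sin\theta_{ij}$, $|\mathbf{n}_j\times\mathbf{n}_k|=\sin\theta_{jk}$, and the volume identity $|\mathbf{n}_i\cdot(\mathbf{n}_j\times\mathbf{n}_k)|=\sin(\theta_{ij},\theta_{jk},\theta_{ik})$ established above, I would obtain
\[
\sin\alpha_j=\frac{\sin(\theta_{ij},\theta_{jk},\theta_{ik})}{\sin\theta_{ij}\sin\theta_{jk}},
\]
and hence $\sin\alpha_j/\sin\theta_{ik}$ equals the fully symmetric expression $\sin(\theta_{ij},\theta_{jk},\theta_{ik})/(\sin\theta_{ij}\sin\theta_{jk}\sin\theta_{ik})$. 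Since the right-hand side is invariant under permutation of $\{i,j,k\}$ — the generalised sine function being symmetric in its arguments and the denominator being the product of all three edge sines — the identical value results for each distinguished index, giving the full chain of equalities and identifying the common constant $k$.

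I do not anticipate a serious obstacle; the only point needing care is the sign bookkeeping at the outset, namely ensuring that $\mathbf{u}_{ij}\cdot\mathbf{u}_{jk}=-\cos\alpha_j$ is compatible with reading $|\mathbf{u}_{ij}\times\mathbf{u}_{jk}|$ as $\sin\alpha_j$ rather than its negative. This is resolved by observing that the enclosed angle is $\pi-\alpha_j$ and that the generalised sine function is defined to be nonnegative on the admissible range $0\leq\theta_{ij},\theta_{ik},\theta_{jk}\leq\pi$. A secondary check is that the orientation-dependent sign of $\det(\mathbf{n}_i,\mathbf{n}_j,\mathbf{n}_k)$ drops out once the modulus is taken.
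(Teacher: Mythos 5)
Your proposal is correct and follows essentially the same route as the paper: both compute $\mathbf{u}_{ij}\times\mathbf{u}_{jk}$ via the vector triple product, observe that the result collapses to $\bigl(\mathbf{n}_i\cdot(\mathbf{n}_j\times\mathbf{n}_k)\bigr)\mathbf{n}_j/(\sin\theta_{ij}\sin\theta_{jk})$, take moduli using the volume identity, and conclude by the permutation symmetry of the generalised sine function. Your extra care with the sign convention ($\mathbf{u}_{ij}\cdot\mathbf{u}_{jk}=-\cos\alpha_j$ versus $|\mathbf{u}_{ij}\times\mathbf{u}_{jk}|=\sin\alpha_j$) is a point the paper leaves implicit, but it does not change the argument.
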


\begin{proof}
Consider the ratio
\begin{equation}\label{ratio}
\frac{|\textbf{u}_{ij}\times\textbf{u}_{jk}|}{|\textbf{n}_i\times\textbf{n}_k|}=\frac{\sin\alpha_j}{\sin\theta_{ik}}.
\end{equation}
Now, note the vector product
\begin{equation}\label{eq:29}
\textbf{u}_{ij}\times\textbf{u}_{jk}=\frac{(\textbf{n}_i\times\textbf{n}_j)\times(\textbf{n}_j\times\textbf{n}_k)}{|\textbf{n}_i\times\textbf{n}_j||\textbf{n}_j\times\textbf{n}_k|}
=\frac{\left(\textbf{n}_i\cdot(\textbf{n}_j\times\textbf{n}_k)\right)\textbf{n}_j}{\sin\theta_{ij}\sin\theta_{jk}}.
\end{equation}
This implies
\begin{equation}
|\textbf{u}_{ij}\times\textbf{u}_{jk}|=\left|\frac{\textbf{n}_i\cdot(\textbf{n}_j\times\textbf{n}_k)}{\sin\theta_{ij}\sin\theta_{jk}}\right|
=\frac{\sin(\theta_{ij},\theta_{ik},\theta_{jk})}{\sin\theta_{ij}\sin\theta_{jk}},
\end{equation}
and hence, the sine rule follows.
\end{proof}

\begin{prop}[Polar Cosine Rule]
\begin{equation}
\cos\theta_{jk}=\frac{\cos\alpha_{j}\cos\alpha_{k}+\cos\alpha_{i}}{\sin\alpha_j\sin\alpha_k},
\end{equation}for all $i,j,k=1,2,3$.
\end{prop}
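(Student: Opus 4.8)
The plan is to mimic the direct vector computations already used for the Cosine and Sine Rules, now applied to a dot product of two cross products of the polar vectors $\mathbf{u}_{ij}$, $\mathbf{u}_{jk}$, $\mathbf{u}_{ik}$; the polar-triangle duality then provides the conceptual picture. The key idea is that the combination $(\mathbf{u}_{ij}\times\mathbf{u}_{jk})\cdot(\mathbf{u}_{jk}\times\mathbf{u}_{ik})$ can be expanded in two ways: once purely in terms of the scalar products $\mathbf{u}\cdot\mathbf{u}$ (which are the $-\cos\alpha$'s), and once in terms of the $\mathbf{n}$'s (which recover $\cos\theta_{jk}$). Equating the two evaluations yields the identity.

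First I would evaluate this combination algebraically using the scalar quadruple-product identity $(\mathbf{A}\times\mathbf{B})\cdot(\mathbf{C}\times\mathbf{D})=(\mathbf{A}\cdot\mathbf{C})(\mathbf{B}\cdot\mathbf{D})-(\mathbf{A}\cdot\mathbf{D})(\mathbf{B}\cdot\mathbf{C})$, which itself follows from the vector triple product rule, taking $\mathbf{A}=\mathbf{u}_{ij}$, $\mathbf{B}=\mathbf{u}_{jk}$, $\mathbf{C}=\mathbf{u}_{jk}$, $\mathbf{D}=\mathbf{u}_{ik}$. Using the defining relation $\mathbf{u}_{ab}\cdot\mathbf{u}_{bc}=-\cos\alpha_b$ for the shared index $b$, together with $\mathbf{u}_{jk}\cdot\mathbf{u}_{jk}=1$, the right-hand side collapses to $(-\cos\alpha_j)(-\cos\alpha_k)-(-\cos\alpha_i)(1)=\cos\alpha_j\cos\alpha_k+\cos\alpha_i$.

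Next I would evaluate the same combination geometrically. By Eq.~(\ref{eq:29}) the vector $\mathbf{u}_{ij}\times\mathbf{u}_{jk}$ is parallel to $\mathbf{n}_j$, with magnitude $\sin\alpha_j$ (the sine of the angle $\pi-\alpha_j$ between the two unit vectors $\mathbf{u}_{ij},\mathbf{u}_{jk}$); likewise $\mathbf{u}_{jk}\times\mathbf{u}_{ik}$ is parallel to $\mathbf{n}_k$ with magnitude $\sin\alpha_k$. Hence their dot product equals $\pm\sin\alpha_j\sin\alpha_k\cos\theta_{jk}$. Equating the two evaluations and dividing through gives $\cos\theta_{jk}=(\cos\alpha_j\cos\alpha_k+\cos\alpha_i)/(\sin\alpha_j\sin\alpha_k)$ up to an overall sign.

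The main obstacle is fixing that overall sign, i.e. confirming that $\mathbf{u}_{ij}\times\mathbf{u}_{jk}$ and $\mathbf{u}_{jk}\times\mathbf{u}_{ik}$ point along $+\mathbf{n}_j$ and $+\mathbf{n}_k$ respectively rather than with opposite orientations, so that the dot product carries $+\cos\theta_{jk}$. This requires tracking the orientation in Eq.~(\ref{eq:29}) carefully, and it is settled by the standing admissibility assumptions (each side less than a semicircle and each angle less than $\pi$), which keep all the relevant sines positive and fix the signs consistently. An equivalent and perhaps more transparent route is the classical polar-duality argument: the relation $\mathbf{u}_{ab}\cdot\mathbf{u}_{bc}=-\cos\alpha_b$ shows the sides of the polar triangle are $\pi-\alpha_i,\pi-\alpha_j,\pi-\alpha_k$, while Eq.~(\ref{eq:29}) shows its polar recovers the $\mathbf{n}$'s, so that its angles are $\pi-\theta_{ij},\pi-\theta_{jk},\pi-\theta_{ik}$; applying the Cosine Rule, Eq.~(\ref{eq:23}), to the polar triangle at the vertex $\mathbf{u}_{jk}$ and using $\cos(\pi-x)=-\cos x$, $\sin(\pi-x)=\sin x$ reproduces the same identity after a single sign flip.
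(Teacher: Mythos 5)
Your overall strategy is the same as the paper's --- evaluate $(\mathbf{u}_{ij}\times\mathbf{u}_{jk})\cdot(\mathbf{u}_{jk}\times\mathbf{u}_{ik})$ once algebraically and once geometrically via Eq.~(\ref{eq:29}) --- but both of your evaluations carry a sign error, and the stated formula emerges only because the two errors cancel. The relation $\mathbf{u}_{ab}\cdot\mathbf{u}_{bc}=-\cos\alpha_b$ holds only for the cyclic pattern in which the shared index is the \emph{second} index of the first factor and the \emph{first} index of the second; since $\mathbf{u}_{ik}=-\mathbf{u}_{ki}$, the other two scalar products you need are
\[
\mathbf{u}_{jk}\cdot\mathbf{u}_{ik}=-\,\mathbf{u}_{jk}\cdot\mathbf{u}_{ki}=+\cos\alpha_k,
\qquad
\mathbf{u}_{ij}\cdot\mathbf{u}_{ik}=-\,\mathbf{u}_{ki}\cdot\mathbf{u}_{ij}=+\cos\alpha_i,
\]
so the quadruple-product identity gives $(-\cos\alpha_j)(\cos\alpha_k)-(\cos\alpha_i)(1)=-(\cos\alpha_j\cos\alpha_k+\cos\alpha_i)$, not $+(\cos\alpha_j\cos\alpha_k+\cos\alpha_i)$. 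Correspondingly, your proposed resolution of the ``main obstacle'' --- that both cross products point along $+\mathbf{n}_j$ and $+\mathbf{n}_k$ --- is false: Eq.~(\ref{eq:29}) gives $\mathbf{u}_{ij}\times\mathbf{u}_{jk}=\frac{\det(\mathbf{n}_i,\mathbf{n}_j,\mathbf{n}_k)}{\sin\theta_{ij}\sin\theta_{jk}}\,\mathbf{n}_j$, whereas $\mathbf{u}_{jk}\times\mathbf{u}_{ik}=-\mathbf{u}_{jk}\times\mathbf{u}_{ki}=-\frac{\det(\mathbf{n}_i,\mathbf{n}_j,\mathbf{n}_k)}{\sin\theta_{jk}\sin\theta_{ik}}\,\mathbf{n}_k$. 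These two coefficients have opposite signs for \emph{every} nondegenerate triangle; the relative minus sign comes from the antisymmetry $\mathbf{u}_{ik}=-\mathbf{u}_{ki}$, not from the geometry, so no admissibility assumption can remove it. The normalised dot product is therefore $-\cos\theta_{jk}$ --- this is precisely why the paper's second evaluation contains the transposed factor $\mathbf{n}_i\cdot(\mathbf{n}_k\times\mathbf{n}_j)$ and an overall minus sign --- and equating the two \emph{corrected} evaluations, $-(\cos\alpha_j\cos\alpha_k+\cos\alpha_i)/(\sin\alpha_j\sin\alpha_k)=-\cos\theta_{jk}$, yields the proposition. Two compensating errors do not constitute a proof, so as written the argument is not sound.

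Your fallback polar-duality argument suffers from the same oversight: the triangle with vertices $\mathbf{u}_{ij},\mathbf{u}_{jk},\mathbf{u}_{ik}$ (the paper's definition of the polar) does \emph{not} have all three sides equal to $\pi-\alpha$; for instance the side between $\mathbf{u}_{jk}$ and $\mathbf{u}_{ik}$ is $\alpha_k$, by the computation above. The duality route does work --- and would then be a genuinely different, classical proof --- provided you pass to the cyclically oriented normals $\mathbf{u}_{ij},\mathbf{u}_{jk},\mathbf{u}_{ki}$, whose pairwise scalar products are all of the form $-\cos\alpha$; then every side of the polar triangle is $\pi-\alpha$, its angles are $\pi-\theta$, and Eq.~(\ref{eq:23}) applied to that triangle gives the proposition after the flips $\cos(\pi-x)=-\cos x$ and $\sin(\pi-x)=\sin x$. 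In short: the idea is right and coincides with the paper's, but the sign bookkeeping that you yourself identify as the crux is carried out incorrectly in both branches of your argument.
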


\begin{proof}
Consider the product
\begin{equation}
\frac{(\textbf{u}_{ij}\times\textbf{u}_{jk})\cdot(\textbf{u}_{jk}\times\textbf{u}_{ik})}
{|\textbf{u}_{ij}\times\textbf{u}_{jk}||\textbf{u}_{jk}\times\textbf{u}_{ik}|}.
\end{equation}
This product can be calculated in two ways. First,
\begin{equation}
\frac{(\textbf{u}_{ij}\times\textbf{u}_{jk})\cdot(\textbf{u}_{jk}\times\textbf{u}_{ik})}
{|\textbf{u}_{ij}\times\textbf{u}_{jk}||\textbf{u}_{jk}\times\textbf{u}_{ik}|}=\frac{-\cos\alpha_j\cos\alpha_k-\cos\alpha_i}{\sin\alpha_j\sin\alpha_k}
,
\end{equation}
and second, using (\ref{eq:29}),
\begin{equation}
\begin{split}
\frac{(\textbf{u}_{ij}\times\textbf{u}_{jk})\cdot(\textbf{u}_{jk}\times\textbf{u}_{ik})}
{|\textbf{u}_{ij}\times\textbf{u}_{jk}||\textbf{u}_{jk}\times\textbf{u}_{ik}|}&=\frac{1}{\sin\alpha_j\sin\alpha_k}\left(\frac{\textbf{n}_i\cdot(\textbf{n}_j\times\textbf{n}_k)}{\sin\theta_{ij}\sin\theta_{jk}}\textbf{n}_j\right)\cdot \left(\frac{\textbf{n}_i\cdot(\textbf{n}_k\times\textbf{n}_j)}{\sin\theta_{ik}\sin\theta_{jk}}\textbf{n}_k\right),\\
&=-\frac{\sin^2(\theta_{ij},\theta_{ik},\theta_{jk})}{\sin\alpha_j\sin\alpha_k\sin\theta_{ij}\sin^2\theta_{jk}\sin\theta_{ik}}\cos\theta_{jk}.
\end{split}
\end{equation}
Reducing this using the Sine Rule and equating with the previous result gives the Polar Cosine Rule.
\end{proof}

Note that, by now considering
\begin{equation}
k^2=\frac{\sin^2\alpha_i}{\sin^2\theta_{jk}}
=\frac{\sin^2\alpha_i}{1-\cos^2\theta_{jk}},
\end{equation}
and substituting in the polar cosine rule,
\begin{equation}
k^2=\frac{\sin^2\alpha_i}{1-\left(\frac{\cos\alpha_i+\cos\alpha_j\cos\alpha_k}{\sin\alpha_j\sin\alpha_k}\right)^2}
=\frac{\sin^2\alpha_i\sin^2\alpha_j\sin^2\alpha_k}{1-\cos^2\alpha_i-\cos^2\alpha_j-\cos^2\alpha_k-2\cos\alpha_i\cos\alpha_j\cos\alpha_k},
\end{equation}
the constant $k$ may also be written in terms of the spherical angles, so that the sine rule now becomes
\begin{equation}\label{eq:ssine}
\frac{\sin\alpha_i}{\sin\theta_{jk}}=\frac{\sin\alpha_j}{\sin\theta_{ik}}=\frac{\sin\alpha_k}{\sin\theta_{ij}}=\frac{\sin\alpha_i\sin\alpha_j\sin\alpha_k}{\sqrt{1-\cos^2\alpha_i-\cos^2\alpha_j-\cos^2\alpha_k-2\cos\alpha_i\cos\alpha_j\cos\alpha_k}}.
\end{equation}
We will use both forms of the sine rule laterwhen discussing the link between spherical trigonometry 
and the Jacobi elliptic functions.

\subsection{Hyperspherical Trigonometry in 4-Dimensional Euclidean Space}

We now extend these principles to the four dimensional hyperspherical case. Consider a $3$-sphere embedded in four dimensional Euclidean space, $\mathbf{E}_4=\mathbb{R}^4$. In this case, we have four unit vectors, $\textbf{n}_i, i=1,\dots,4$ pointing to the four vertices of a hyperspherical tetrahedron. The angles between these unit vectors, $\theta_{ij}$, are defined, in the same way as for the spherical case, by
\begin{equation}
\textbf{n}_i\cdot\textbf{n}_j\equiv\cos\theta_{ij},
\end{equation}
whereas now we must also define orthogonal vectors $\textbf{u}_{ijk}$ to each hyperplane $[i,j,k]$ (defined by $\mathbf{n}_i$, $\mathbf{n}_j$, $\mathbf{n}_k$). This is done using the ternary cross product,
\begin{equation}
\textbf{u}_{ijk}\equiv\frac
{\textbf{n}_i\times\textbf{n}_j\times\textbf{n}_k}
{\left|\textbf{n}_i\times\textbf{n}_j\times\textbf{n}_k\right|}.
\end{equation}

\begin{figure}[hpb]
\centering
\tdplotsetmaincoords{30}{30}
\begin{tikzpicture}[scale=6,tdplot_main_coords]
\draw[thick](0,0,0.8) -- (0,0,3.1);
\draw[thick](0,0,2) -- (0,0,2.3)node[anchor=west]{$\theta_{ik}$};
\tdplotdrawarc[thick]{(0,0,0)}{1}{85}{104}{}{}
\tdplotdrawarc[thick]{(0,0,0)}{1}{136}{155}{}{}
\tdplotdrawarc[dashed]{(0,0,0)}{1}{104}{136}{}{}
\tdplotdrawarc{(0,0,1)}{0.1}{120}{90}{anchor=south west}{$\alpha_k^{(ijk)}$}
\fill[black] (0,1) circle (0.5pt)node[anchor=south west]{$\mathbf{n}_j$};

\draw[thick](0,0,2) -- (-0.1,0,2);
\tdplotsetrotatedcoords{0}{0}{-30}
\draw[tdplot_rotated_coords,thick](0,0,2) -- (0,0.1,2);
\tdplotdrawarc{(0,0,2)}{0.05}{180}{60}{anchor=south}{$\phi_{ik}$}

\tdplotsetrotatedcoords{60}{0}{0}
\fill[tdplot_rotated_coords,black] (0,1) circle (0.5pt)node[anchor=south east]{$\mathbf{n}_l$};

\tdplotsetrotatedcoords{60}{90}{90}
\tdplotdrawarc[tdplot_rotated_coords,thick]{(0,0,0)}{1}{100}{-10}{anchor=north}{$\theta_{kl}$}
\fill[tdplot_rotated_coords,black] (0,1) circle (0.5pt);

\tdplotsetrotatedcoords{0}{90}{90}
\tdplotdrawarc[tdplot_rotated_coords,thick]{(0,0,0)}{1}{100}{-10}{anchor=north}{$\theta_{jk}$}
\fill[tdplot_rotated_coords,black] (0,1) circle (0.5pt)node[anchor=west]{$\mathbf{n}_k$};
\fill[tdplot_rotated_coords,black] (0,2.9) circle (0.5pt)node[anchor=west]{$\mathbf{n}_i$};

\tdplotsetrotatedcoords{90}{90}{0}
\coordinate (Shift) at (-1,1,0);
\tdplotsetrotatedcoordsorigin{(Shift)}
\tdplotdrawarc[tdplot_rotated_coords,thick]{(0,0,0)}{1}{200}{280}{anchor=south west}{$\theta_{ij}$}

\tdplotsetrotatedcoords{150}{90}{0}
\coordinate (Shift) at (0,1,1);
\tdplotsetrotatedcoordsorigin{(Shift)}
\tdplotdrawarc[tdplot_rotated_coords,thick]{(0,0.15,0)}{1}{80}{160}{anchor=south east}{$\theta_{il}$}

\tdplotsetrotatedcoords{0}{0}{180}
\tdplotdrawarc{(0,0,2.95)}{0.1}{335}{300}{anchor=north}{$\alpha_i^{(ijk)}$}

\end{tikzpicture}
\caption{A hyperspherical tetrahedron}
\label{fig:hyperspherical_tetrahedron}
\end{figure}
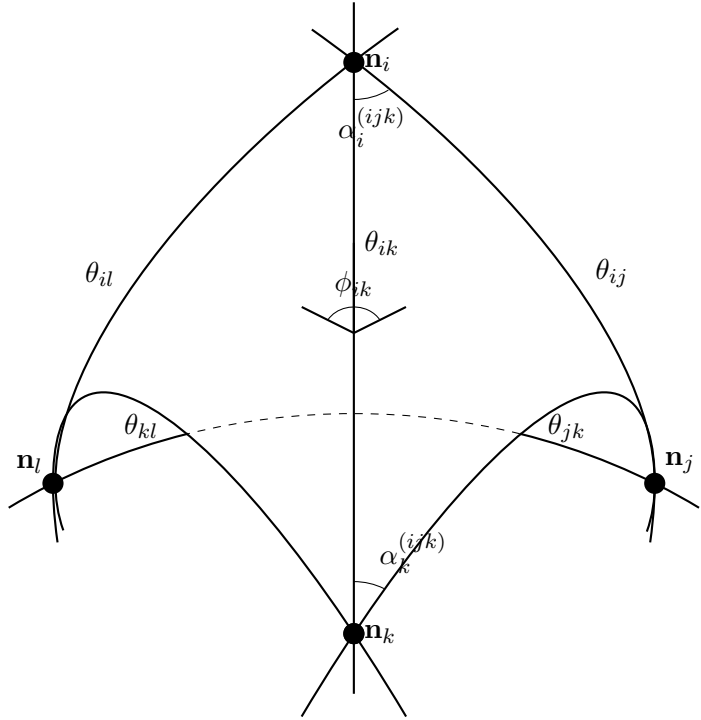

The four vectors $\mathbf{u}_{123}$, $\mathbf{u}_{124}$, $\mathbf{u}_{134}$, $\mathbf{u}_{234}$ define the polar of the hyperspherical tetrahedron, as illustrated in Figure \ref{fig:hyperspherical_tetrahedron}, between which we define dihedral angles $\phi_{jk}$ by
\begin{equation}
\textbf{u}_{ijk}\cdot\textbf{u}_{jkl}\equiv-\cos\phi_{jk}.
\end{equation}
We also have 
\begin{equation}
|\textbf{n}_i\times\textbf{n}_j\times\textbf{n}_k|=\sin(\theta_{ij},\theta_{jk},\theta_{ik}),
\end{equation}
together with
\begin{equation}
|\textbf{u}_{ijk}\times\textbf{u}_{jkl}\times\textbf{u}_{kli}|=\sin(\phi_{jk},\phi_{ik},\phi_{kl}).
\end{equation}

The four faces of the hyperspherical tetrahedron 
\ref{fig:hyperspherical_tetrahedron} 
are spherical triangles, with the various sine and cosine rules for the spherical case still holding true for these. For ease of notation, we now label the angles in the faces of the spherical triangles $\alpha_i^{(ijk)}$ to indicate which triangle is being considered.

In addition to the spherical relations, there are now various relations involving the dihedral angles, $\phi_{ij}$, which connect the various faces of the hyperspherical tetrahedron. By taking various relations between the scalar and vector products for the vectors $\textbf{u}_{ijk}$, we can again derive  a number of relations.
\begin{prop}[Cosine Rule]
\begin{equation}\label{eq:cosine}
\cos\phi_{jk}=\frac{-\left|\begin{array}{ccc}
\cos\theta_{ij} & \cos\theta_{ik} & \cos\theta_{il} \\
1 & \cos\theta_{jk} & \cos\theta_{jl} \\
\cos\theta_{jk} & 1 & \cos\theta_{kl} \\
\end{array}\right|}
{\sin(\theta_{ij},\theta_{ik},\theta_{jk})\sin(\theta_{jk},\theta_{jl},\theta_{kl})}.
\end{equation} for all $i,j,k,l=1,2,3,4$. 
\end{prop}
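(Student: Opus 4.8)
The plan is to mimic the spherical Cosine Rule by evaluating the scalar product $\mathbf{u}_{ijk}\cdot\mathbf{u}_{jkl}$ directly. Writing out the normalisations,
\begin{equation}
\mathbf{u}_{ijk}\cdot\mathbf{u}_{jkl}=\frac{(\mathbf{n}_i\times\mathbf{n}_j\times\mathbf{n}_k)\cdot(\mathbf{n}_j\times\mathbf{n}_k\times\mathbf{n}_l)}{|\mathbf{n}_i\times\mathbf{n}_j\times\mathbf{n}_k|\,|\mathbf{n}_j\times\mathbf{n}_k\times\mathbf{n}_l|},
\end{equation}
the two factors in the denominator are immediately identified with $\sin(\theta_{ij},\theta_{ik},\theta_{jk})$ and $\sin(\theta_{jk},\theta_{jl},\theta_{kl})$ using the magnitude identity recorded above together with the symmetry of the generalised sine function in its three arguments. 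Everything therefore reduces to computing the numerator, the scalar product of two ternary vector products in $\mathbb{R}^4$.

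The key step is the $\mathbb{R}^4$ analogue of the Cauchy--Binet identity,
\begin{equation}
(\mathbf{a}\times\mathbf{b}\times\mathbf{c})\cdot(\mathbf{d}\times\mathbf{e}\times\mathbf{f})=\left|\begin{array}{ccc}\mathbf{a}\cdot\mathbf{d}&\mathbf{a}\cdot\mathbf{e}&\mathbf{a}\cdot\mathbf{f}\\ \mathbf{b}\cdot\mathbf{d}&\mathbf{b}\cdot\mathbf{e}&\mathbf{b}\cdot\mathbf{f}\\ \mathbf{c}\cdot\mathbf{d}&\mathbf{c}\cdot\mathbf{e}&\mathbf{c}\cdot\mathbf{f}\end{array}\right|,
\end{equation}
which expresses the scalar product of two triple products as the $3\times3$ Gram determinant of the pairwise dot products. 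I would establish this exactly as in the proof of Proposition~1: write each triple product through the four-dimensional Levi-Civita symbol, $(\mathbf{a}\times\mathbf{b}\times\mathbf{c})_p=\sum_{i,j,k}\varepsilon_{pijk}(\mathbf{a})_i(\mathbf{b})_j(\mathbf{c})_k$, so that the scalar product carries a single contracted index, and then apply the single-index contraction of two Levi-Civita symbols --- the same product rule used there --- which produces a determinant of Kronecker deltas whose contraction against the components is precisely the Gram determinant above. Equivalently, one may dot the five-vector nested product identity recorded above with a sixth vector.

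With this identity in hand, I substitute $\mathbf{a},\mathbf{b},\mathbf{c}=\mathbf{n}_i,\mathbf{n}_j,\mathbf{n}_k$ and $\mathbf{d},\mathbf{e},\mathbf{f}=\mathbf{n}_j,\mathbf{n}_k,\mathbf{n}_l$, using $\mathbf{n}_a\cdot\mathbf{n}_b=\cos\theta_{ab}$ together with the unit normalisation $\mathbf{n}_a\cdot\mathbf{n}_a=1$. The resulting $3\times3$ determinant is exactly the one appearing in the numerator of the stated Cosine Rule, the diagonal entries $\mathbf{n}_j\cdot\mathbf{n}_j$ and $\mathbf{n}_k\cdot\mathbf{n}_k$ supplying the two $1$'s. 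Finally, recalling the defining relation $\mathbf{u}_{ijk}\cdot\mathbf{u}_{jkl}=-\cos\phi_{jk}$ and dividing by the two sine factors yields the claimed formula, the overall minus sign coming from this defining convention for the dihedral angle.

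I expect the only real obstacle to be the careful bookkeeping in the Cauchy--Binet step: fixing the sign of the single-index Levi-Civita contraction in four dimensions and confirming that the rows and columns of the Gram determinant line up with the determinant as written, in particular that the $1$'s land in the stated entries rather than off-diagonal. Once the index placement is settled, the remainder is a direct substitution.
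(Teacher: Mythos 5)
Your proposal is correct and takes essentially the same approach as the paper: both evaluate $\mathbf{u}_{ijk}\cdot\mathbf{u}_{jkl}$ by identifying the denominator with $\sin(\theta_{ij},\theta_{ik},\theta_{jk})\sin(\theta_{jk},\theta_{jl},\theta_{kl})$, reduce the numerator $(\mathbf{n}_i\times\mathbf{n}_j\times\mathbf{n}_k)\cdot(\mathbf{n}_j\times\mathbf{n}_k\times\mathbf{n}_l)$ to the very same $3\times 3$ determinant of pairwise dot products, and obtain the overall minus sign from the defining convention $\mathbf{u}_{ijk}\cdot\mathbf{u}_{jkl}\equiv-\cos\phi_{jk}$. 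The only difference is bookkeeping: the paper reaches that determinant by permuting the triple product inside the four-dimensional determinant definition and then invoking the nested five-vector identity of Proposition 1 (exactly the ``equivalently'' route you mention in passing), whereas you prove the symmetric Cauchy--Binet form directly via the single-index Levi-Civita contraction, which does carry a plus sign, so the sign issue you flag resolves cleanly and the rest of your argument goes through as stated.
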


\begin{proof}
From the computation
\begin{equation}
\begin{split}
(\textbf{n}_i\times\textbf{n}_j\times\textbf{n}_k)\cdot(\textbf{n}_j\times\textbf{n}_k\times\textbf{n}_l)&=-(\mathbf{n}_j\times\mathbf{n}_k\times(\mathbf{n}_j\times\mathbf{n}_k\times\mathbf{n}_l))\mathbf{n}_i,\\
&=\left|\begin{array}{ccc}
\textbf{n}_j & \textbf{n}_k & \textbf{n}_l \\
\textbf{n}_j\cdot\textbf{n}_j & \textbf{n}_j\cdot\textbf{n}_k & \textbf{n}_j\cdot\textbf{n}_l \\
\textbf{n}_k\cdot\textbf{n}_j & \textbf{n}_k\cdot\textbf{n}_k & \textbf{n}_k\cdot\textbf{n}_l \\
\end{array}\right|\cdot\mathbf{n}_i,\\
&=\left|\begin{array}{ccc}
\textbf{n}_i\cdot\textbf{n}_j & \textbf{n}_i\cdot\textbf{n}_k & \textbf{n}_i\cdot\textbf{n}_l \\
\textbf{n}_j\cdot\textbf{n}_j & \textbf{n}_j\cdot\textbf{n}_k & \textbf{n}_j\cdot\textbf{n}_l \\
\textbf{n}_k\cdot\textbf{n}_j & \textbf{n}_k\cdot\textbf{n}_k & \textbf{n}_k\cdot\textbf{n}_l \\
\end{array}\right|,
\end{split}
\end{equation}
we find that for the following inner product (for all $i,j,k,l=1,2,3,4$) we have
\begin{equation}
\begin{split}
\textbf{u}_{ijk}\cdot\textbf{u}_{jkl}=&\frac
{(\textbf{n}_i\times\textbf{n}_j\times\textbf{n}_k)\cdot(\textbf{n}_j\times\textbf{n}_k\times\textbf{n}_l)}
{\left|\textbf{n}_i\times\textbf{n}_j\times\textbf{n}_k\right|\left|\textbf{n}_j\times\textbf{n}_k\times\textbf{n}_l\right|},\\
=&\frac{\left|\begin{array}{ccc}
\textbf{n}_i\cdot\textbf{n}_j & \textbf{n}_i\cdot\textbf{n}_k & \textbf{n}_i\cdot\textbf{n}_l \\
\textbf{n}_j\cdot\textbf{n}_j & \textbf{n}_j\cdot\textbf{n}_k & \textbf{n}_j\cdot\textbf{n}_l \\
\textbf{n}_k\cdot\textbf{n}_j & \textbf{n}_k\cdot\textbf{n}_k & \textbf{n}_k\cdot\textbf{n}_l \\
\end{array}\right|}
{\sin(\theta_{ij},\theta_{ik},\theta_{jk})\sin(\theta_{jk},\theta_{jl},\theta_{kl})}.\\
\end{split}
\end{equation}
Hence, the Cosine Rule follows.
\end{proof}
We now need to extend to definition of the triple sine function for the four-dimensional case, to give a further generalisation of the sine function dependent on six variables.
\begin{defn}
The \textbf{generalised} \textbf{sine} \textbf{function} \textbf{of} \textbf{six} \textbf{variables}, $\sin(\theta_{ij},\theta_{ik},\theta_{il},\theta_{jk},\theta_{jl},\theta_{kl})$, is the four dimensional analogue of the triple sine function, and is given by
\begin{equation}
\sin(\theta_{ij},\theta_{ik},\theta_{il},\theta_{jk},\theta_{jl},\theta_{kl})=\left|\begin{array}{cccc}
1&
\cos\theta_{ij}&
\cos\theta_{ik}&
\cos\theta_{il}\\
\cos\theta_{ij}&
1&
\cos\theta_{jk}&
\cos\theta_{jl}\\
\cos\theta_{ik}&
\cos\theta_{jk}&
1&
\cos\theta_{kl}\\
\cos\theta_{il}&
\cos\theta_{jl}&
\cos\theta_{kl}&
1\\
\end{array}\right|^{\frac{1}{2}},\end{equation}
\begin{equation}
0\leq\theta_{ij},\theta_{ik},\theta_{il},\theta_{jk},\theta_{jl},\theta_{kl}\leq\pi.
\end{equation}
Note that the restrictions on being a spherical tetrahedron ensure that the generalised sine function is always real and positive.
\end{defn}
This sine function is equivalent to the volume of a 4D parallelotope with edges $\mathbf{n}_i$, $\mathbf{n}_j$, $\mathbf{n}_k$ and $\mathbf{n}_l$, given by
\begin{equation}
\text{Volume}=|\textbf{n}_i\cdot(\textbf{n}_j\times\textbf{n}_k\times\textbf{n}_l)|=\sin(\theta_{ij},\theta_{ik},\theta_{il},\theta_{jk},\theta_{jl},\theta_{kl}).
\end{equation}
This relation follows from embedding 4D vectors $\mathbf{n}_i$, $\mathbf{n}_j$, $\mathbf{n}_k$ and $\mathbf{n}_l$ into $E_5$ in a similar manner to the three-dimensional case.

\begin{prop}[Sine Rule]
\begin{equation}\label{eq:sin}
\begin{split}
\frac{\sin(\phi_{ij},\phi_{ik},\phi_{il})}{\sin(\theta_{jk},\theta_{jl},\theta_{kl})}
&=\frac{\sin(\phi_{ij},\phi_{jk},\phi_{jl})}{\sin(\theta_{ik},\theta_{il},\theta_{kl})}
=\frac{\sin(\phi_{ik},\phi_{jk},\phi_{kl})}{\sin(\theta_{ij},\theta_{il},\theta_{jl})}
=\frac{\sin(\phi_{il},\phi_{jl},\phi_{kl})}{\sin(\theta_{ij},\theta_{ik},\theta_{jk})},\\
&=\frac{\sin^2(\theta_{ij},\theta_{ik},\theta_{il},\theta_{jk},\theta_{jl},\theta_{kl})}{\sin(\theta_{ij},\theta_{ik},\theta_{jk})\sin(\theta_{ij},\theta_{il},\theta_{jl})\sin(\theta_{ik},\theta_{il},\theta_{kl})\sin(\theta_{jk},\theta_{jl},\theta_{kl})}=k_H,
\end{split}
\end{equation}
where $k_H$ is constant.
\end{prop}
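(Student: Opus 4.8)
The plan is to mirror the proof of the spherical Sine Rule, replacing the binary cross product by the ternary one. First I would observe that, just as $\sin\alpha_j=|\mathbf{u}_{ij}\times\mathbf{u}_{jk}|$ in the spherical case, the numerator $\sin(\phi_{ij},\phi_{ik},\phi_{il})$ is the generalised triple sine of the three dihedral angles meeting at the vertex $\mathbf{n}_i$, and is therefore (by the same reasoning that gives $|\mathbf{u}_{ijk}\times\mathbf{u}_{jkl}\times\mathbf{u}_{kli}|=\sin(\phi_{jk},\phi_{ik},\phi_{kl})$) the modulus of the ternary product of the three polar vectors whose hyperplanes all contain $\mathbf{n}_i$, namely
\begin{equation*}
\sin(\phi_{ij},\phi_{ik},\phi_{il})=|\mathbf{u}_{ijk}\times\mathbf{u}_{ijl}\times\mathbf{u}_{ikl}|.
\end{equation*}
The whole proof then reduces to evaluating this single ternary product and taking its modulus.

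The key step is to substitute $\mathbf{u}_{ijk}=(\mathbf{n}_i\times\mathbf{n}_j\times\mathbf{n}_k)/\sin(\theta_{ij},\theta_{ik},\theta_{jk})$ (and likewise for the other two factors) and to apply the nested product identity in $E_4$ to the leading triple product $\mathbf{n}_i\times\mathbf{n}_j\times\mathbf{n}_k$. Writing $\mathbf{B}=\mathbf{n}_i\times\mathbf{n}_j\times\mathbf{n}_l$ and $\mathbf{C}=\mathbf{n}_i\times\mathbf{n}_k\times\mathbf{n}_l$, the identity expresses $(\mathbf{n}_i\times\mathbf{n}_j\times\mathbf{n}_k)\times\mathbf{B}\times\mathbf{C}$ as a $3\times3$ determinant whose first row is $(\mathbf{n}_i,\mathbf{n}_j,\mathbf{n}_k)$ and whose remaining entries are the dot products $\mathbf{n}_\bullet\cdot\mathbf{B}$ and $\mathbf{n}_\bullet\cdot\mathbf{C}$. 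Because each triple product is orthogonal to its own constituent vectors, all but two of these dot products vanish ($\mathbf{n}_i\cdot\mathbf{B}=\mathbf{n}_j\cdot\mathbf{B}=0$ and $\mathbf{n}_i\cdot\mathbf{C}=\mathbf{n}_k\cdot\mathbf{C}=0$), and the determinant collapses to a term proportional to $\mathbf{n}_i$ times the product $(\mathbf{n}_k\cdot\mathbf{B})(\mathbf{n}_j\cdot\mathbf{C})=\det(\mathbf{n}_i,\mathbf{n}_j,\mathbf{n}_l,\mathbf{n}_k)\det(\mathbf{n}_i,\mathbf{n}_k,\mathbf{n}_l,\mathbf{n}_j)$. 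Reordering columns shows this product equals $-[\det(\mathbf{n}_i,\mathbf{n}_j,\mathbf{n}_k,\mathbf{n}_l)]^2$, so that
\begin{equation*}
(\mathbf{n}_i\times\mathbf{n}_j\times\mathbf{n}_k)\times\mathbf{B}\times\mathbf{C}=-\mathbf{n}_i\,[\det(\mathbf{n}_i,\mathbf{n}_j,\mathbf{n}_k,\mathbf{n}_l)]^2.
\end{equation*}
Taking the modulus and using $|\mathbf{n}_i|=1$ together with $|\det(\mathbf{n}_i,\mathbf{n}_j,\mathbf{n}_k,\mathbf{n}_l)|=\sin(\theta_{ij},\theta_{ik},\theta_{il},\theta_{jk},\theta_{jl},\theta_{kl})$ then delivers the modulus of the numerator product.

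Dividing out the three normalising face-sines coming from the $\mathbf{u}$'s yields
\begin{equation*}
\sin(\phi_{ij},\phi_{ik},\phi_{il})=\frac{\sin^2(\theta_{ij},\theta_{ik},\theta_{il},\theta_{jk},\theta_{jl},\theta_{kl})}{\sin(\theta_{ij},\theta_{ik},\theta_{jk})\sin(\theta_{ij},\theta_{il},\theta_{jl})\sin(\theta_{ik},\theta_{il},\theta_{kl})}.
\end{equation*}
Dividing both sides by the remaining face-sine $\sin(\theta_{jk},\theta_{jl},\theta_{kl})$ produces exactly the central fraction of the claim, whose right-hand side is manifestly invariant under every permutation of the labels $i,j,k,l$ (the numerator is the squared full volume, the denominator the product over all four faces). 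Repeating the same vertex computation at $\mathbf{n}_j$, $\mathbf{n}_k$ and $\mathbf{n}_l$ reproduces the other three ratios, each equal to this one symmetric expression, so all four coincide with the common constant $k_H$. I expect the main obstacle to be the bookkeeping in the determinant collapse: correctly identifying which dot products vanish and tracking the column-permutation signs that turn $\det(\mathbf{n}_i,\mathbf{n}_j,\mathbf{n}_l,\mathbf{n}_k)\det(\mathbf{n}_i,\mathbf{n}_k,\mathbf{n}_l,\mathbf{n}_j)$ into $-[\det(\mathbf{n}_i,\mathbf{n}_j,\mathbf{n}_k,\mathbf{n}_l)]^2$, since a sign slip there is invisible after taking the modulus yet would corrupt the intermediate vector identity itself.
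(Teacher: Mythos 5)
Your proposal is correct and takes essentially the same approach as the paper: both proofs evaluate the ternary product of the three polar vectors meeting at one vertex via the nested vector product identity, use orthogonality of each triple product to its constituents to collapse the determinant into a multiple of the vertex vector with coefficient $\pm[\det(\mathbf{n}_i,\mathbf{n}_j,\mathbf{n}_k,\mathbf{n}_l)]^2$, and then conclude from the manifest symmetry of the normalised expression under permutations of the labels. The only differences are cosmetic: you work at vertex $\mathbf{n}_i$ with $\mathbf{u}_{ijk},\mathbf{u}_{ijl},\mathbf{u}_{ikl}$ while the paper works at vertex $\mathbf{n}_k$ with $\mathbf{u}_{ijk},\mathbf{u}_{jkl},\mathbf{u}_{kli}$ (whence your overall minus sign versus the paper's plus, immaterial after taking the modulus).
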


\begin{proof}
Consider the triple product
\begin{equation}\label{eq:47}
\begin{split}
\textbf{u}_{ijk}\times\textbf{u}_{jkl}\times\textbf{u}_{kli}&=\frac{(\textbf{n}_i\times\textbf{n}_j\times\textbf{n}_k)\times(\textbf{n}_j\times\textbf{n}_k\times\textbf{n}_l)\times(\textbf{n}_k\times\textbf{n}_l\times\textbf{n}_i)}
{|\textbf{n}_i\times\textbf{n}_j\times\textbf{n}_k||\textbf{n}_j\times\textbf{n}_k\times\textbf{n}_l||\textbf{n}_k\times\textbf{n}_l\times\textbf{n}_i|},\\
&=\frac{-\left|\begin{array}{ccc}
\textbf{n}_i & \textbf{n}_j & \textbf{n}_k \\
\textbf{n}_i\cdot(\textbf{n}_j\times\textbf{n}_k\times\textbf{n}_l) & 0& 0\\
0 & \textbf{n}_j\cdot(\textbf{n}_k\times\textbf{n}_l\times\textbf{n}_i)& 0
\end{array}\right|}
{\sin(\theta_{ij},\theta_{ik},\theta_{jk})\sin(\theta_{jk},\theta_{jl},\theta_{kl})\sin(\theta_{kl},\theta_{ki},\theta_{li})},\\
&=\frac{\left(\textbf{n}_i\cdot(\textbf{n}_j\times\textbf{n}_k\times\textbf{n}_l)\right)^2}{\sin(\theta_{ij},\theta_{ik},\theta_{jk})\sin(\theta_{jk},\theta_{jl},\theta_{kl})\sin(\theta_{kl},\theta_{ki},\theta_{li})}\,\textbf{n}_k.
\end{split}
\end{equation}
This implies
\begin{equation}
\begin{split}
|\textbf{u}_{ijk}\times\textbf{u}_{jkl}\times\textbf{u}_{kli}|&=\frac{\left(\textbf{n}_i\cdot(\textbf{n}_j\times\textbf{n}_k\times\textbf{n}_l)\right)^2}{\sin(\theta_{ij},\theta_{ik},\theta_{jk})\sin(\theta_{jk},\theta_{jl},\theta_{kl})\sin(\theta_{kl},\theta_{ki},\theta_{li})},\\
&=\frac{\sin^2(\theta_{ij},\theta_{ik},\theta_{il},\theta_{jk},\theta_{jl},\theta_{kl})}{\sin(\theta_{ij},\theta_{ik},\theta_{jk})\sin(\theta_{jk},\theta_{jl},\theta_{kl})\sin(\theta_{kl},\theta_{ki},\theta_{li})}.
\end{split}
\end{equation}
From this we have that the ratio
\begin{equation}\label{ratio2}
\frac{|\textbf{u}_{ijk}\times\textbf{u}_{jkl}\times\textbf{u}_{kli}|}{|\textbf{n}_i\times\textbf{n}_j\times\textbf{n}_l|}=\frac{\sin(\phi_{ik},\phi_{jk},\phi_{kl})}{\sin(\theta_{ij},\theta_{il},\theta_{jl})}
\end{equation}
is symmetric with respect to the interchange of the labels $i,j,k,l$. Thus, it follows that
\begin{equation}
\frac{\sin(\phi_{ik},\phi_{jk},\phi_{kl})}{\sin(\theta_{ij},\theta_{il},\theta_{jl})}=\frac{\sin^2(\theta_{ij},\theta_{ik},\theta_{il},\theta_{jk},\theta_{jl},\theta_{kl})}{\sin(\theta_{ij},\theta_{ik},\theta_{jk})\sin(\theta_{jk},\theta_{jl},\theta_{kl})\sin(\theta_{kl},\theta_{ki},\theta_{li})\sin(\theta_{ij},\theta_{il},\theta_{jl})},
\end{equation}
and hence, the hyperspherical sine rule follows.
\end{proof}

There also exists a simpler sine relationship between the standard sine functions of the central angles and the sine of the dihedral angles.
\begin{prop}
\begin{equation}\label{eq:sinn}
\sin\phi_{kl}=\frac{\sin(\theta_{ij},\theta_{ik},\theta_{il},\theta_{jk},\theta_{jl},\theta_{kl})}{\sin(\theta_{ik},\theta_{il},\theta_{kl})\sin(\theta_{jk},\theta_{jl},\theta_{kl})}\sin\theta_{kl}.
\end{equation}
\end{prop}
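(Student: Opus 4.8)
The plan is to evaluate the magnitude of the single triple product $\mathbf{u}_{ikl}\times\mathbf{u}_{jkl}\times\mathbf{n}_k$ in two independent ways and equate the results; here $\phi_{kl}$ is the dihedral angle along the edge $\{k,l\}$, defined through $\mathbf{u}_{ikl}\cdot\mathbf{u}_{jkl}=-\cos\phi_{kl}$.

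Geometrically, both $\mathbf{u}_{ikl}$ and $\mathbf{u}_{jkl}$ are unit vectors perpendicular to each of their constituent vectors, so in particular each is orthogonal to $\mathbf{n}_k$. Hence $\mathbf{n}_k$ is a unit vector orthogonal to the plane spanned by $\mathbf{u}_{ikl}$ and $\mathbf{u}_{jkl}$, and the triple product magnitude equals the base-times-height volume of the corresponding parallelepiped, namely the area of the parallelogram spanned by the two unit vectors $\mathbf{u}_{ikl},\mathbf{u}_{jkl}$ times $|\mathbf{n}_k|=1$. Since the angle between these unit vectors is $\phi_{kl}$, this area is $\sin\phi_{kl}$, giving
\[
|\mathbf{u}_{ikl}\times\mathbf{u}_{jkl}\times\mathbf{n}_k|=\sin\phi_{kl}.
\]

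For the algebraic evaluation I would substitute $\mathbf{u}_{ikl}=(\mathbf{n}_i\times\mathbf{n}_k\times\mathbf{n}_l)/\sin(\theta_{ik},\theta_{il},\theta_{kl})$ and the analogous expression for $\mathbf{u}_{jkl}$, and then apply the nested product identity in $E_4=\mathbb{R}^4$ with $(\mathbf{a},\mathbf{b},\mathbf{c})=(\mathbf{n}_i,\mathbf{n}_k,\mathbf{n}_l)$, $\mathbf{d}=\mathbf{n}_j\times\mathbf{n}_k\times\mathbf{n}_l$ and $\mathbf{e}=\mathbf{n}_k$. The resulting $3\times3$ determinant collapses because $\mathbf{d}\cdot\mathbf{n}_k=\det(\mathbf{n}_j,\mathbf{n}_k,\mathbf{n}_l,\mathbf{n}_k)=0$ and $\mathbf{d}\cdot\mathbf{n}_l=0$ by repeated columns, leaving only $\mathbf{d}\cdot\mathbf{n}_i=\mathbf{n}_i\cdot(\mathbf{n}_j\times\mathbf{n}_k\times\mathbf{n}_l)$ in the middle row. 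Expanding along that row produces a vector proportional to $\mathbf{n}_l-\cos\theta_{kl}\,\mathbf{n}_k$, whose length is $\sqrt{1-\cos^2\theta_{kl}}=\sin\theta_{kl}$, while the scalar prefactor is $|\mathbf{n}_i\cdot(\mathbf{n}_j\times\mathbf{n}_k\times\mathbf{n}_l)|=\sin(\theta_{ij},\theta_{ik},\theta_{il},\theta_{jk},\theta_{jl},\theta_{kl})$. Reinstating the two normalising denominators then yields
\[
|\mathbf{u}_{ikl}\times\mathbf{u}_{jkl}\times\mathbf{n}_k|=\frac{\sin(\theta_{ij},\theta_{ik},\theta_{il},\theta_{jk},\theta_{jl},\theta_{kl})\,\sin\theta_{kl}}{\sin(\theta_{ik},\theta_{il},\theta_{kl})\,\sin(\theta_{jk},\theta_{jl},\theta_{kl})}.
\]

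Equating the two expressions gives the stated identity. I expect the main obstacle to be bookkeeping rather than conceptual: the crucial insight is to take the third factor in the triple product to be one of the shared vertex vectors $\mathbf{n}_k$ (equivalently $\mathbf{n}_l$), since this is precisely the unit vector orthogonal to the $\mathbf{u}$-plane that turns the triple-product magnitude into the pure sine $\sin\phi_{kl}$, and it is simultaneously the choice that makes two of the three inner products in the nested-product determinant vanish. Care is needed only in tracking the overall sign coming from the antisymmetry of the determinant and in confirming that the surviving vector component has length exactly $\sin\theta_{kl}$; the restrictions guaranteeing a genuine hyperspherical tetrahedron ensure all the sines are positive, so the moduli are unambiguous.
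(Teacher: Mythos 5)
Your proof is correct, but it takes a genuinely different route from the paper's. The paper proves this proposition purely determinantally: it applies the Desnanot--Jacobi identity to the $4\times 4$ Gram determinant whose square root is $\sin(\theta_{ij},\theta_{ik},\theta_{il},\theta_{jk},\theta_{jl},\theta_{kl})$, deleting the rows and columns labelled $i$ and $j$, so that
\[
\sin^2(\theta_{ij},\theta_{ik},\theta_{il},\theta_{jk},\theta_{jl},\theta_{kl})\,\sin^2\theta_{kl}
=\sin^2(\theta_{ik},\theta_{il},\theta_{kl})\,\sin^2(\theta_{jk},\theta_{jl},\theta_{kl})-D^2,
\]
where $D$ is the mixed minor obtained by crossing out one of row/column $i$ and one of row/column $j$; since $D^2=\cos^2\phi_{kl}\,\sin^2(\theta_{ik},\theta_{il},\theta_{kl})\,\sin^2(\theta_{jk},\theta_{jl},\theta_{kl})$ by the cosine rule (\ref{eq:cosine}) (a step the paper leaves implicit in ``the result follows directly''), the right-hand side factors and the identity follows on taking positive square roots. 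You instead evaluate $|\mathbf{u}_{ikl}\times\mathbf{u}_{jkl}\times\mathbf{n}_k|$ in two ways via the nested vector product identity, which is exactly the technique the paper uses for its hyperspherical sine and polar cosine rules, here applied to a cleverly chosen product; all your steps check out: the two vanishing inner products $\mathbf{d}\cdot\mathbf{n}_k=\mathbf{d}\cdot\mathbf{n}_l=0$, the surviving vector $\pm(\mathbf{n}_l-\cos\theta_{kl}\,\mathbf{n}_k)$ of length $\sin\theta_{kl}$, and the prefactor $|\mathbf{n}_i\cdot(\mathbf{n}_j\times\mathbf{n}_k\times\mathbf{n}_l)|$. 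Two small touch-ups: the angle between the unit vectors $\mathbf{u}_{ikl}$ and $\mathbf{u}_{jkl}$ is $\pi-\phi_{kl}$, not $\phi_{kl}$ (their inner product is $-\cos\phi_{kl}$), which is harmless since the sines agree; and your informal base-times-height argument can be made exact by noting that the squared norm of a triple product is the Gram determinant of its factors, which here reads
\[
\left|\begin{array}{ccc} 1 & -\cos\phi_{kl} & 0\\ -\cos\phi_{kl} & 1 & 0\\ 0 & 0 & 1\end{array}\right|=1-\cos^2\phi_{kl}=\sin^2\phi_{kl}.
\]
Comparing what each approach buys: the paper's route works entirely with squared (determinantal) quantities, reuses the cosine rule, and sets up precisely the Desnanot--Jacobi manipulation that it needs again in Section IV when linking to elliptic functions, at the cost of a square-root/positivity argument at the end; your route is sign-transparent because every quantity is a norm, needs nothing beyond the nested product identity and the definition of the generalised sine, and exhibits the geometric content of the formula -- the dihedral sine as the volume of the parallelepiped spanned by the two face normals and a shared edge vector -- which the determinantal proof hides.
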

In order to prove this proposition, we require a determinantal identity, attributed to Desnanot for the  $n\leq 6$ case\cite{Desnanot}, and to Jacobi in the general case.\cite{Jacobi2}
\begin{theorem}[Desnanot-Jacobi Identity]
Let $M$ be a $n\times n$ square matrix, and denote by $M^p_i$ the matrix obtained by removing both the $i$-th row and $p$-th column. Similarly, let $M^{p,q}_{i,j}$ denote the matrix obtained by deleting the $i$-th and $j$-th rows, and the $p$-th and $q$-th columns, respectively, with $1\leq i,j,p,q\leq n$. Then,
\begin{equation}
\mathrm{det}\left(M\right)\mathrm{det}\left(M^{1,n}_{1,n}\right)=\mathrm{det}\left(M^{1}_{1}\right)\mathrm{det}\left(M^{n}_{n}\right)-\mathrm{det}\left(M^{n}_{1}\right)\mathrm{det}\left(M^{1}_{n}\right).
\end{equation}
Diagrammatically, this is
\begin{equation}\label{eq:Sylv}
\left|
\begin{picture}(49,32)
\end{picture}
\right|\times
\left|
\begin{color}{black}
\begin{picture}(49,32)
\put(3,-22){\rule{.2mm}{1.8cm}}\put(46,-22){\rule{.2mm}{1.8cm}}
\put(1,-19){\rule{1.7cm}{.2mm}}\put(1,26){\rule{1.7cm}{.2mm}}
\end{picture}
\end{color}
\right| =
\left|
\begin{color}{black}
\begin{picture}(49,32)
\put(3,-22){\rule{.2mm}{1.8cm}}\put(1,26){\rule{1.7cm}{.2mm}}
\end{picture}
\end{color}
\right|\times\left|
\begin{color}{black}
\begin{picture}(49,32)
\put(46,-22){\rule{.2mm}{1.8cm}}\put(1,-19){\rule{1.7cm}{.2mm}}
\end{picture}
\end{color}
\right|
-\left|
\begin{color}{black}
\begin{picture}(49,32)
\put(3,-22){\rule{.2mm}{1.8cm}}\put(1,-19){\rule{1.7cm}{.2mm}}
\end{picture}
\end{color}
\right|\times\left|
\begin{color}{black}
\begin{picture}(49,32)
\put(46,-22){\rule{.2mm}{1.8cm}}\put(1,26){\rule{1.7cm}{.2mm}}
\end{picture}
\end{color}
\right|.
\end{equation}
\end{theorem}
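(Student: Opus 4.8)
The plan is to recognise the right-hand side as a single $2\times 2$ minor of the adjugate matrix $\mathrm{adj}(M)$ and then to evaluate that same minor by the classical relation between minors of the adjugate and complementary minors of $M$. Recalling that $\mathrm{adj}(M)_{ij}=(-1)^{i+j}\det\!\big(M^i_j\big)$, the four corner entries of $\mathrm{adj}(M)$ are $\mathrm{adj}(M)_{11}=\det\!\big(M^1_1\big)$, $\mathrm{adj}(M)_{nn}=\det\!\big(M^n_n\big)$, $\mathrm{adj}(M)_{1n}=(-1)^{1+n}\det\!\big(M^1_n\big)$, and $\mathrm{adj}(M)_{n1}=(-1)^{n+1}\det\!\big(M^n_1\big)$. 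Forming the $2\times 2$ minor of $\mathrm{adj}(M)$ on rows and columns $\{1,n\}$, the two off-diagonal sign factors multiply to $(-1)^{(1+n)+(n+1)}=+1$, so this minor equals exactly $\det\!\big(M^1_1\big)\det\!\big(M^n_n\big)-\det\!\big(M^n_1\big)\det\!\big(M^1_n\big)$, the right-hand side of the asserted identity.

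Next I would evaluate the very same minor using Jacobi's theorem on the minors of the adjugate: for index sets with $|I|=|J|=k$ one has $\det\!\big(\mathrm{adj}(M)[I,J]\big)=(-1)^{s(I)+s(J)}\det(M)^{\,k-1}\det\!\big(M[J^{c},I^{c}]\big)$, where $s(I)$ is the sum of the indices in $I$ and $I^{c},J^{c}$ denote complementary index sets. Specialising to $I=J=\{1,n\}$ gives $k=2$, an even exponent $s(I)+s(J)=2(n+1)$, and $I^{c}=J^{c}=\{2,\dots,n-1\}$, so the formula returns $\det(M)\det\!\big(M^{1,n}_{1,n}\big)$. Equating the two evaluations of the minor produces the stated identity.

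It then remains to establish the key input, Jacobi's adjugate-minor theorem, which I would treat in the standard way. When $M$ is invertible one uses $\mathrm{adj}(M)=\det(M)\,M^{-1}$ together with the cofactor formula for the inverse, reducing the claim to the fact that a $k\times k$ minor of $M^{-1}$ equals $(\det M)^{-1}$ times a signed complementary minor of $M$. The general, possibly singular, case then follows by a density argument: both sides are polynomials in the entries of $M$ and agree on the dense set $\{\det M\neq 0\}$, hence everywhere; equivalently, one treats the entries as indeterminates over $\mathbb{Z}[m_{ij}]$, in which $\det M$ is a nonzerodivisor, and passes to the field of fractions.

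I expect the only genuine obstacle to be the bookkeeping of signs: one must verify that the $(-1)^{i+j}$ factors carried by the corner cofactors and the $(-1)^{s(I)+s(J)}$ factor in Jacobi's formula all collapse to $+1$ for the corner index set $\{1,n\}$, as they do. Everything else is a direct specialisation of standard facts. Should a fully self-contained derivation be preferred over citing Jacobi's theorem, the $k=2$ instance needed here can instead be obtained from the Grassmann--Pl\"ucker relations of Proposition~2 applied to the $n\times 2n$ matrix $[\,M\mid I_n\,]$, whose maximal minors are the signed minors of $M$ appearing above; the route through Jacobi's theorem is, however, the most economical.
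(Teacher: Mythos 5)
Your proof is correct, but it is genuinely different from what the paper does, because the paper does not prove this theorem at all: the identity is quoted as a classical result (with citations to Desnanot for $n\le 6$ and to Jacobi in general), and the proof environment that follows the theorem statement in the paper is actually the proof of the \emph{preceding} proposition (the formula for $\sin\phi_{kl}$), obtained by specialising the identity to the $4\times 4$ Gram matrix of cosines. Your argument supplies the missing proof in the standard way: you read the right-hand side as the $2\times 2$ minor of $\mathrm{adj}(M)$ on rows and columns $\{1,n\}$ (and your sign bookkeeping is right, since the off-diagonal cofactor signs contribute $(-1)^{1+n}(-1)^{n+1}=+1$), you evaluate the same minor by Jacobi's theorem on adjugate minors with $I=J=\{1,n\}$, where $(-1)^{s(I)+s(J)}=(-1)^{2(n+1)}=+1$, and you pass from invertible to arbitrary $M$ by a polynomial-density (or generic-entries) argument, which is exactly the right way to remove the invertibility hypothesis. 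What your route buys is a self-contained statement-plus-proof; what the paper's citation buys is brevity, since only the application to Gram matrices of cosines is needed there. One caution on your supporting lemma: the ``minors of $M^{-1}$'' fact you invoke is just Jacobi's adjugate-minor theorem rescaled by powers of $\det M$, so reducing one to the other is a restatement rather than a proof, and to close the argument non-circularly you must prove that fact by independent means --- for instance a Schur-complement/block-factorisation computation of $\det\bigl((M^{-1})[I,J]\bigr)$, or an exterior-algebra (Cauchy--Binet) argument. Alternatively, the derivation you sketch at the end from the Pl\"ucker relations applied to the $n\times 2n$ matrix $[\,M\mid I_n\,]$ avoids this issue entirely and has the additional appeal of tying the theorem to the Grassmannian machinery the paper develops in Section II.
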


\begin{proof}
The result follows directly from the Desnanot-Jacobi determinant identity
\begin{equation}
\begin{split}
&\left|\begin{array}{cccc}
1& \cos\theta_{ij} & \cos\theta_{ik} & \cos\theta_{jk}\\
\cos\theta_{ij} & 1 & \cos\theta_{jk} & \cos\theta_{jl}\\
\cos\theta_{ik} & \cos\theta_{jk} & 1 & \cos\theta_{kl}\\
\cos\theta_{il} & \cos\theta_{jl} & \cos\theta_{kl} & 1\\
\end{array}\right|
\left|\begin{array}{cc}
 1 & \cos\theta_{kl}\\
 \cos\theta_{kl} & 1\\
\end{array}\right|\\
=&
\left|\begin{array}{ccc}
1 & \cos\theta_{jk} & \cos\theta_{jl}\\
\cos\theta_{jk} & 1 & \cos\theta_{kl}\\
\cos\theta_{jl} & \cos\theta_{kl} & 1\\
\end{array}\right|
\left|\begin{array}{ccc}
1 & \cos\theta_{ik} & \cos\theta_{jk}\\
\cos\theta_{ik} &  1 & \cos\theta_{kl}\\
\cos\theta_{il} & \cos\theta_{kl} & 1\\
\end{array}\right|\\
&-\left|\begin{array}{ccc}
\cos\theta_{ij} &  \cos\theta_{jk} & \cos\theta_{jl}\\
\cos\theta_{ik}  & 1 & \cos\theta_{kl}\\
\cos\theta_{il}  & \cos\theta_{kl} & 1\\
\end{array}\right|\left|\begin{array}{ccc}
\cos\theta_{ij} & \cos\theta_{ik} & \cos\theta_{jk}\\
\cos\theta_{jk} & 1 & \cos\theta_{kl}\\
\cos\theta_{jl} & \cos\theta_{kl} & 1\\
\end{array}\right|.
\end{split}
\end{equation}
The result follows directly. 
\end{proof}

From this proposition, it hence follows that the hyperspherical sine rule may be rewritten as 
\begin{equation}
\frac{\sin\phi_{ij}}{\sin\theta_{ij}}\frac{\sin\phi_{kl}}{\sin\theta_{kl}}
=\frac{\sin\phi_{ik}}{\sin\theta_{ik}}\frac{\sin\phi_{jl}}{\sin\theta_{jl}}
=\frac{\sin\phi_{il}}{\sin\theta_{il}}\frac{\sin\phi_{jk}}{\sin\theta_{jk}}
=k_H.
\end{equation}
The constant $k_H$ may also be neatly expressed in terms of cosines.\cite{Derevnin} For this we need the Polar Cosine rule.

\begin{prop}[Polar Cosine Rule]
\begin{equation}
\cos\theta_{kl}=\frac{\left|\begin{array}{ccc}
-\cos\phi_{jk} &
\cos\phi_{ik} &
-\cos\phi_{ij} \\
1 &
-\cos\phi_{kl} &
\cos\phi_{jl} \\
-\cos\phi_{kl} &
1 &
-\cos\phi_{il} 
\end{array}\right|}
{\sin(\phi_{ik},\phi_{jk},\phi_{kl})\sin(\phi_{il},\phi_{jl},\phi_{kl})}.
\end{equation}for all $i,j,k,l=1,2,3,4$.
\end{prop}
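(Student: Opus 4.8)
The plan is to run the argument of the (direct) Cosine Rule proof on the polar tetrahedron rather than on the original one. The decisive structural fact, already contained in (\ref{eq:47}), is that the polar of the polar returns the original data: the normalised triple cross product of the three polar vectors sharing a common index reproduces, with a positive coefficient, the original unit vector carrying that index. Concretely, the three faces containing the vertex $\mathbf{n}_k$ are $[i,j,k]$, $[i,k,l]$, $[j,k,l]$, and (\ref{eq:47}) gives $\mathbf{u}_{ikl}\times\mathbf{u}_{jkl}\times\mathbf{u}_{ijk}\propto\mathbf{n}_k$; permuting labels, the three faces containing $\mathbf{n}_l$ give $\mathbf{u}_{ikl}\times\mathbf{u}_{jkl}\times\mathbf{u}_{ijl}\propto\mathbf{n}_l$. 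The positive proportionality factors are precisely the generalised dihedral sines $\sin(\phi_{ik},\phi_{jk},\phi_{kl})$ and $\sin(\phi_{il},\phi_{jl},\phi_{kl})$ that appear in the denominator of the statement.

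I would then evaluate the scalar product of these two triple cross products in two ways. On one hand, by (\ref{eq:47}) it is a positive multiple of $\mathbf{n}_k\cdot\mathbf{n}_l=\cos\theta_{kl}$, the multiple being the product of the two dihedral sines above. On the other hand, exactly as in the Cosine Rule, the two triples share the pair of polar vectors $\mathbf{u}_{ikl},\mathbf{u}_{jkl}$ — the two faces meeting along the edge $[k,l]$ — so the product telescopes: writing it as $\pm(\mathbf{u}_{ikl}\times\mathbf{u}_{jkl}\times\mathbf{u}_{ijl})\times\mathbf{u}_{ikl}\times\mathbf{u}_{jkl}$ contracted with $\mathbf{u}_{ijk}$ and applying the five-vector form of the Nested Vector Product Identity turns the inner fivefold product into a $3\times3$ determinant. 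Contracting with the remaining vector replaces its top row by the inner products $\mathbf{u}_{ijk}\cdot\mathbf{u}_{\bullet}$. Since every inner product of two polar vectors equals $1$ on the diagonal and $-\cos\phi$ off it, this determinant is exactly the numerator of the statement.

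Equating the two evaluations and cancelling the common positive factor $\sin(\phi_{ik},\phi_{jk},\phi_{kl})\sin(\phi_{il},\phi_{jl},\phi_{kl})$ produces the Polar Cosine Rule. The main obstacle is entirely one of bookkeeping: the sign from each $\mathbf{u}_{abc}\cdot\mathbf{u}_{def}=-\cos\phi$, the global minus in the five-vector identity, and the permutation signs incurred in ordering the two triple products into the common-index arrangements of (\ref{eq:47}) must all be tracked, together with the column permutations of the $3\times3$ determinant, so that they conspire to yield the precise sign pattern $-\cos\phi_{jk},\,\cos\phi_{ik},\,-\cos\phi_{ij},\dots$ displayed in the matrix. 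Once the orientation in (\ref{eq:47}) is pinned down as positive, the requirement that $\cos\theta_{kl}$ emerge with the correct sign fixes the remaining ambiguity and the stated identity follows.
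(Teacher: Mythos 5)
Your proposal is correct and takes essentially the same route as the paper's own proof: both evaluate the scalar product of two triple products of polar vectors in two ways, once as the $3\times 3$ determinant of the inner products $\mathbf{u}_{abc}\cdot\mathbf{u}_{def}$ (the entries $1$ and $-\cos\phi$, via the nested product identity), and once via (\ref{eq:47}) as a positive multiple of $\mathbf{n}_k\cdot\mathbf{n}_l=\cos\theta_{kl}$, with the coefficient identified as $\sin(\phi_{ik},\phi_{jk},\phi_{kl})\,\sin(\phi_{il},\phi_{jl},\phi_{kl})$ through the sine rule. The only cosmetic difference is that you work with unnormalised products and pin the coefficient down by the magnitude identity $|\mathbf{u}_{ijk}\times\mathbf{u}_{jkl}\times\mathbf{u}_{kli}|=\sin(\phi_{jk},\phi_{ik},\phi_{kl})$, whereas the paper normalises from the outset and reduces the resulting $\theta$-expression by the sine rule.
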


\begin{proof}
Consider the product
\begin{equation}
\begin{split}
\frac{(\textbf{u}_{ijk}\times\textbf{u}_{jkl}\times\textbf{u}_{kli})\cdot(\textbf{u}_{jkl}\times\textbf{u}_{kli}\times\textbf{u}_{lij})}
{|\textbf{u}_{ijk}\times\textbf{u}_{jkl}\times\textbf{u}_{kli}||\textbf{u}_{jkl}\times\textbf{u}_{kli}\times\textbf{u}_{lij}|}&=
\frac{\left|\begin{array}{ccc}
\textbf{u}_{ijk}\cdot\textbf{u}_{jkl} &
\textbf{u}_{ijk}\cdot\textbf{u}_{kli} &
\textbf{u}_{ijk}\cdot\textbf{u}_{lij} \\
\textbf{u}_{jkl}\cdot\textbf{u}_{jkl} &
\textbf{u}_{jkl}\cdot\textbf{u}_{kli} &
\textbf{u}_{jkl}\cdot\textbf{u}_{lij} \\
\textbf{u}_{kli}\cdot\textbf{u}_{jkl} &
\textbf{u}_{kli}\cdot\textbf{u}_{kli} &
\textbf{u}_{kli}\cdot\textbf{u}_{lij} 
\end{array}\right|}
{\sin(\phi_{ik},\phi_{jk},\phi_{kl})\sin(\phi_{il},\phi_{jl},\phi_{kl})},\\
&=
\frac{\left|\begin{array}{ccc}
-\cos\phi_{jk} &
\cos\phi_{ik} &
-\cos\phi_{ij} \\
1 &
-\cos\phi_{kl} &
\cos\phi_{jl} \\
-\cos\phi_{kl} &
1 &
-\cos\phi_{il} 
\end{array}\right|}
{\sin(\phi_{ik},\phi_{jk},\phi_{kl})\sin(\phi_{il},\phi_{jl},\phi_{kl})}.
\end{split}
\end{equation}
On the other hand, using (\ref{eq:47}), we have
\begin{equation}
\textbf{u}_{ijk}\times\textbf{u}_{jkl}\times\textbf{u}_{kli}=\frac{\left(\textbf{n}_i\cdot(\textbf{n}_j\times\textbf{n}_k\times\textbf{n}_l)\right)^2}{\sin(\theta_{ij},\theta_{ik},\theta_{jk})\sin(\theta_{jk},\theta_{jl},\theta_{kl})\sin(\theta_{kl},\theta_{ki},\theta_{li})}\,\textbf{n}_k.
\end{equation}
Reducing this using the Sine Rule and equating the two results gives the hyperspherical polar cosine rule.
\end{proof}

\begin{prop}
\begin{equation}
\begin{split}
\frac{\cos\phi_{ij}\cos\phi_{kl}-\cos\phi_{ik}\cos\phi_{jl}}{\cos\theta_{ij}\cos\theta_{kl}-\cos\theta_{ik}\cos\theta_{jl}}&=\frac{\cos\phi_{ik}\cos\phi_{jl}-\cos\phi_{il}\cos\phi_{jk}}{\cos\theta_{ik}\cos\theta_{jl}-\cos\theta_{il}\cos\theta_{jk}},\\
&=\frac{\cos\phi_{il}\cos\phi_{jk}-\cos\phi_{ij}\cos\phi_{kl}}{\cos\theta_{il}\cos\theta_{jk}-\cos\theta_{ij}\cos\theta_{kl}},\\
&=k_H.
\end{split}
\end{equation}
\end{prop}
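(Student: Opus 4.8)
The plan is to translate the whole statement into the language of the Gram matrix $G$ of the four vertex vectors, whose entries are $G_{ab}=\cos\theta_{ab}$ (with $G_{aa}=1$), together with its adjugate (cofactor) matrix $C=\operatorname{adj}(G)$. First I would record the dictionary between the dihedral cosines and the cofactors of $G$. The hyperspherical Cosine Rule (\ref{eq:cosine}) already writes $\cos\phi_{ab}$ as minus a $3\times3$ minor of $G$ divided by a product of two generalised sines; the $3\times3$ minor obtained by deleting the row and column of the two vertices \emph{not} lying on the edge $ab$ is exactly a cofactor of $G$, and the generalised sines in the denominators are the square roots of the diagonal cofactors (recall $\det G=\sin^2(\theta_{ij},\ldots,\theta_{kl})$ and $C_{mm}=\sin^2$ of the opposite face). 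This yields, up to an orientation sign,
\begin{equation}
\cos\phi_{ab}=\pm\frac{C_{cd}}{\sqrt{C_{cc}C_{dd}}},\qquad \{c,d\}=\{i,j,k,l\}\setminus\{a,b\}.
\end{equation}

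Next I would substitute this dictionary into the three numerators. The key observation is that each product such as $\cos\phi_{ij}\cos\phi_{kl}$ carries the \emph{same} normalisation $\sqrt{C_{ii}C_{jj}C_{kk}C_{ll}}$, so each numerator collapses to
\begin{equation}
\cos\phi_{ij}\cos\phi_{kl}-\cos\phi_{ik}\cos\phi_{jl}=\frac{C_{ij}C_{kl}-C_{ik}C_{jl}}{\sqrt{C_{ii}C_{jj}C_{kk}C_{ll}}},
\end{equation}
and cyclically for the other two. The expression $C_{ij}C_{kl}-C_{ik}C_{jl}$ is precisely a $2\times2$ minor of the adjugate matrix $C$ (rows $\{i,l\}$, columns $\{j,k\}$), and likewise for the remaining two.

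The heart of the proof is then Jacobi's theorem on the minors of the adjugate, the general form of the Desnanot--Jacobi identity already established above: a $k\times k$ minor of $\operatorname{adj}(G)$ equals $(\det G)^{k-1}$ times the complementary $k\times k$ minor of $G$, up to an explicit sign. For $k=2$ and $n=4$ the sign is $+1$ for the index sets that occur, and the complementary minor of $G$ is exactly the matching product of cosines, for instance
\begin{equation}
C_{ij}C_{kl}-C_{ik}C_{jl}=\det(G)\,\bigl(\cos\theta_{ij}\cos\theta_{kl}-\cos\theta_{ik}\cos\theta_{jl}\bigr).
\end{equation}
Dividing, every ratio in the statement reduces to the single symmetric quantity $\det(G)/\sqrt{C_{ii}C_{jj}C_{kk}C_{ll}}$, which by the Sine Rule (\ref{eq:sin}) is exactly $k_H$.

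Finally I would note the consistency check that closes the argument cheaply: both the three numerators and the three denominators sum to zero identically, so once two ratios are shown equal to $k_H$ the third follows at once from the elementary fact that $a/b=c/d=\lambda$ forces $(a+c)/(b+d)=\lambda$. The main obstacle I anticipate is not the algebra but the sign bookkeeping: one must verify that the orientation signs in the dictionary for $\cos\phi_{ab}$ combine to $+1$ in each product of opposite-edge cosines, and that the Jacobi sign $(-1)^{\sum I+\sum J}$ is $+1$ for each of the three relevant choices of rows and columns. Since the labelling $\{i,j,k,l\}$ is fixed, both are finite, purely combinatorial checks, and their outcome is forced by the fact that $k_H>0$ already holds in the Sine Rule.
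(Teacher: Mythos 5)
Your proposal is correct, but it follows a genuinely different route from the paper's. The paper proves the identity by direct computation: it substitutes the four-dimensional cosine rule (\ref{eq:cosine}) into the numerator $\cos\phi_{ij}\cos\phi_{kl}-\cos\phi_{ik}\cos\phi_{jl}$, so each term becomes a quotient of products of $3\times3$ determinants by products of triple sines, and then ``multiplying out and factorising'' arrives at $(\cos\theta_{ij}\cos\theta_{kl}-\cos\theta_{ik}\cos\theta_{jl})\,k_H$; the other two equalities follow by relabelling. You replace that unexplained expansion-and-factorisation step with a structural fact: writing the cosine rule as $\cos\phi_{ab}=C_{cd}/\sqrt{C_{cc}C_{dd}}$ in terms of cofactors of the Gram matrix $G$ (the orientation sign is in fact $+1$ in every case, since the cofactor sign $(-1)^{a+b}$ cancels the parity of sorting the rows and columns of the paper's determinant together with the overall minus in (\ref{eq:cosine})), each numerator becomes a $2\times2$ minor of $\operatorname{adj}(G)$, and Jacobi's theorem on adjugate minors yields the factorisation $C_{ij}C_{kl}-C_{ik}C_{jl}=\det(G)\bigl(\cos\theta_{ij}\cos\theta_{kl}-\cos\theta_{ik}\cos\theta_{jl}\bigr)$ with no expansion at all; the sine rule then identifies $\det(G)/\sqrt{C_{ii}C_{jj}C_{kk}C_{ll}}$ with $k_H$. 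This buys conceptual clarity --- the paper's mysterious factorisation is revealed as an instance of a classical determinantal identity, unifying this proposition with the paper's other Desnanot--Jacobi applications (the proof of (\ref{eq:sinn}) and the later tetrahedral-angle identity) --- plus a uniform treatment of all three ratios and the economical remark that the third equality is free because numerators and denominators each telescope to zero. What the paper's approach buys in exchange is self-containedness: it uses nothing beyond the cosine and sine rules already on the table, at the price of an opaque algebraic verification.

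Two caveats, neither fatal. First, your closing claim that the sign bookkeeping is ``forced by the fact that $k_H>0$'' is not a valid argument: positivity of $k_H$ says nothing about the sign of a quantity like $\cos\phi_{ij}\cos\phi_{kl}-\cos\phi_{ik}\cos\phi_{jl}$, so it cannot select between $+k_H$ and $-k_H$. The signs must be checked, and they do check out: with positions $(i,j,k,l)=(1,2,3,4)$ the Jacobi sign $(-1)^{\sigma(I)+\sigma(J)}$ is $+1$ for all three row/column pairs $(\{1,4\},\{2,3\})$, $(\{1,2\},\{3,4\})$, $(\{1,3\},\{2,4\})$, each index sum being $10$, and the dictionary signs are all $+1$ as noted above. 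Second, the paper's stated Desnanot--Jacobi theorem covers only the corner (first and last row/column) case, whereas you need minors of the adjugate on arbitrary complementary index pairs; this is the same theorem after permuting rows and columns, but that permutation is exactly where the deferred signs live, so the ``finite combinatorial check'' you promise is genuinely required rather than optional.
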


\begin{proof}
Consider the numerator
\begin{equation}
\cos\phi_{ij}\cos\phi_{kl}-\cos\phi_{ik}\cos\phi_{jl},
\end{equation}
and rewrite this in terms of the central angles using the cosine rule,
\begin{equation}
\begin{split}
\cos\phi_{ij}\cos\phi_{kl}&-\cos\phi_{ik}\cos\phi_{jl}=
\frac{\left|\begin{array}{ccc}
\cos\theta_{ik} & \cos\theta_{jk} & \cos\theta_{kl}\\
1 & \cos\theta_{ij} & \cos\theta_{il}\\
\cos\theta_{ij} & 1 & \cos\theta_{jl}\\
\end{array}\right|}
{\sin(\theta_{ij},\theta_{ik},\theta_{jk})\sin(\theta_{ij},\theta_{il},\theta_{jl})}
\frac{\left|\begin{array}{ccc}
\cos\theta_{ik} & \cos\theta_{il} & \cos\theta_{ij}\\
1 & \cos\theta_{kl} & \cos\theta_{jk}\\
\cos\theta_{kl} & 1 & \cos\theta_{jl}\\
\end{array}\right|}
{\sin(\theta_{ik},\theta_{il},\theta_{kl})\sin(\theta_{jk},\theta_{jl},\theta_{kl})}\\
&-
\frac{\left|\begin{array}{ccc}
\cos\theta_{ij} & \cos\theta_{jk} & \cos\theta_{jl}\\
1 & \cos\theta_{ik} & \cos\theta_{il}\\
\cos\theta_{ik} & 1 & \cos\theta_{kl}\\
\end{array}\right|}
{\sin(\theta_{ij},\theta_{ik},\theta_{jk})\sin(\theta_{ik},\theta_{il},\theta_{jk})}
\frac{\left|\begin{array}{ccc}
\cos\theta_{ij} & \cos\theta_{il} & \cos\theta_{ik}\\
1 & \cos\theta_{jl} & \cos\theta_{jk}\\
\cos\theta_{jl} & 1 & \cos\theta_{kl}\\
\end{array}\right|}
{\sin(\theta_{ij},\theta_{il},\theta_{jl})\sin(\theta_{jk},\theta_{jl},\theta_{kl})}.
\end{split}
\end{equation}
Multiplying this out and factorising, this reduces to
\begin{equation}
\cos\phi_{ij}\cos\phi_{kl}-\cos\phi_{ik}\cos\phi_{jl}=(\cos\theta_{ij}\cos\theta_{kl}-\cos\theta_{ik}\cos\theta_{jl})k_H,
\end{equation}
from which the result follows.
\end{proof}
Furthermore, in 4D there are also relations involving the spherical angles $\alpha_j^{(j\cdot \cdot)}$ at a common vertex, $\mathbf{n}_j$, of the tetrahedron involving the different spherical triangles.
\begin{prop}[Vertex Cosine Rule]
\begin{equation}
\cos\phi_{jk}=\frac{\cos\alpha_j^{(ijk)}\cos\alpha_j^{(jkl)}-\cos\alpha_j^{(ijl)}}{\sin\alpha_j^{(ijk)}\sin\alpha_j^{(jkl)}}.
\end{equation}
\end{prop}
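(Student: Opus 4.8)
The plan is to recognise the three spherical angles $\alpha_j^{(ijk)}$, $\alpha_j^{(jkl)}$, $\alpha_j^{(ijl)}$ and the dihedral angle $\phi_{jk}$ as the data of a single ordinary spherical triangle --- the \emph{vertex figure} (or link) of the hyperspherical tetrahedron at the vertex $\mathbf{n}_j$ --- and then simply invoke the spherical Cosine Rule~(\ref{eq:23}) for that triangle. Geometrically, the three edges emanating from $\mathbf{n}_j$ (towards $\mathbf{n}_i$, $\mathbf{n}_k$, $\mathbf{n}_l$) cut out a small spherical triangle in the tangent $2$-sphere at $\mathbf{n}_j$, whose sides are the face angles $\alpha_j^{(\cdots)}$ and whose angle at the direction of $\mathbf{n}_k$ is precisely the dihedral angle $\phi_{jk}$.

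Concretely, I would project the remaining vertices onto the $3$-dimensional orthogonal complement $W=\mathbf{n}_j^{\perp}$, setting $\mathbf{v}_{jm}=\mathbf{n}_m-\cos\theta_{jm}\,\mathbf{n}_j$ for $m=i,k,l$. Each $\mathbf{v}_{jm}$ has length $\sin\theta_{jm}$, and a direct computation of $\mathbf{v}_{ji}\cdot\mathbf{v}_{jk}=\cos\theta_{ik}-\cos\theta_{ij}\cos\theta_{jk}$ together with the spherical Cosine Rule shows that the angle between $\mathbf{v}_{ji}$ and $\mathbf{v}_{jk}$ is exactly $\alpha_j^{(ijk)}$, and likewise for the other pairs. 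Thus the unit vectors $\hat{\mathbf{v}}_{ji},\hat{\mathbf{v}}_{jk},\hat{\mathbf{v}}_{jl}$ span a spherical triangle in $W\cong\mathbb{R}^3$ whose three sides are $\alpha_j^{(ijk)},\alpha_j^{(jkl)},\alpha_j^{(ijl)}$. The key algebraic step is to reduce the ambient ternary products to genuine $3$D cross products inside $W$: using multilinearity and antisymmetry to discard the $\mathbf{n}_j$-components, one obtains $\mathbf{n}_i\times\mathbf{n}_j\times\mathbf{n}_k=\mathbf{v}_{ji}\times\mathbf{n}_j\times\mathbf{v}_{jk}$ and $\mathbf{n}_j\times\mathbf{n}_k\times\mathbf{n}_l=\mathbf{n}_j\times\mathbf{v}_{jk}\times\mathbf{v}_{jl}$, and for $\mathbf{a},\mathbf{b}\in W$ the ternary product $\mathbf{a}\times\mathbf{n}_j\times\mathbf{b}$ coincides, up to an orientation sign fixed by the position of $\mathbf{n}_j$ in the ordering, with the $3$D cross product $\mathbf{a}\times\mathbf{b}$ in $W$. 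Hence $\mathbf{u}_{ijk}$ and $\mathbf{u}_{jkl}$ are, up to sign, the polar (edge-normal) vectors $\widehat{\mathbf{v}_{ji}\times\mathbf{v}_{jk}}$ and $\widehat{\mathbf{v}_{jk}\times\mathbf{v}_{jl}}$ of the link triangle.

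With this identification the proof closes exactly as in the spherical case: the scalar-product computation underlying the spherical Cosine Rule~(\ref{eq:23}), applied to $\hat{\mathbf{v}}_{ji},\hat{\mathbf{v}}_{jk},\hat{\mathbf{v}}_{jl}$ with the dictionary $\theta_{ij}\mapsto\alpha_j^{(ijk)}$, $\theta_{jk}\mapsto\alpha_j^{(jkl)}$, $\theta_{ik}\mapsto\alpha_j^{(ijl)}$, gives
\begin{equation*}
\frac{(\hat{\mathbf{v}}_{ji}\times\hat{\mathbf{v}}_{jk})\cdot(\hat{\mathbf{v}}_{jk}\times\hat{\mathbf{v}}_{jl})}{|\hat{\mathbf{v}}_{ji}\times\hat{\mathbf{v}}_{jk}|\,|\hat{\mathbf{v}}_{jk}\times\hat{\mathbf{v}}_{jl}|}=\frac{\cos\alpha_j^{(ijk)}\cos\alpha_j^{(jkl)}-\cos\alpha_j^{(ijl)}}{\sin\alpha_j^{(ijk)}\sin\alpha_j^{(jkl)}}.
\end{equation*}
Feeding in the relative sign picked up in the previous step, together with the defining relation $\mathbf{u}_{ijk}\cdot\mathbf{u}_{jkl}=-\cos\phi_{jk}$, then yields the Vertex Cosine Rule as stated.

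I expect the only real subtlety --- and the main obstacle --- to be the bookkeeping of signs. Because $\mathbf{n}_j$ occupies the middle slot in $(\mathbf{n}_i,\mathbf{n}_j,\mathbf{n}_k)$ but the first slot in $(\mathbf{n}_j,\mathbf{n}_k,\mathbf{n}_l)$, the two reductions to $3$D cross products acquire \emph{opposite} orientation signs; it is exactly this relative minus sign, interacting with the $-\cos$ conventions in both $\mathbf{u}_{ijk}\cdot\mathbf{u}_{jkl}=-\cos\phi_{jk}$ and the polar-vector dot product, that places the numerator in the order $\cos\alpha_j^{(ijk)}\cos\alpha_j^{(jkl)}-\cos\alpha_j^{(ijl)}$ rather than its negative. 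I would pin this down once and for all by fixing an orthonormal frame with $\mathbf{n}_j=\mathbf{e}_4$ and evaluating the ternary product on basis vectors. A purely computational alternative --- substituting the spherical Cosine Rule expressions for the $\cos\alpha_j^{(\cdots)}$ into the hyperspherical Cosine Rule~(\ref{eq:cosine}) and simplifying --- would also close the argument, but the resulting determinantal algebra is far less transparent than the vertex-figure reduction.
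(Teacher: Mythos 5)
Your proof is correct, but it takes a genuinely different route from the paper's. The paper's own argument is purely algebraic: it expands the $3\times 3$ determinant in the hyperspherical Cosine Rule (\ref{eq:cosine}), rewrites the denominator $\sin(\theta_{ij},\theta_{ik},\theta_{jk})\sin(\theta_{jk},\theta_{jl},\theta_{kl})$ as $\sin\alpha_j^{(ijk)}\sin\alpha_j^{(jkl)}\sin\theta_{ij}\sin^2\theta_{jk}\sin\theta_{jl}$ using the spherical sine rule on the two faces meeting along the edge $jk$, factorises the numerator into the combination
\begin{equation*}
\left(\frac{\cos\theta_{ik}-\cos\theta_{ij}\cos\theta_{jk}}{\sin\theta_{ij}\sin\theta_{jk}}\right)\left(\frac{\cos\theta_{kl}-\cos\theta_{jk}\cos\theta_{jl}}{\sin\theta_{jk}\sin\theta_{jl}}\right)-\left(\frac{\cos\theta_{il}-\cos\theta_{ij}\cos\theta_{jl}}{\sin\theta_{ij}\sin\theta_{jl}}\right),
\end{equation*}
and recognises each bracket via (\ref{eq:23}) --- i.e.\ precisely the ``computational alternative'' you mention and set aside. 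Your vertex-figure reduction is the more geometric alternative, and it does close as you outline: the projections $\mathbf{v}_{jm}=\mathbf{n}_m-\cos\theta_{jm}\,\mathbf{n}_j$ satisfy $\mathbf{v}_{ji}\cdot\mathbf{v}_{jk}=\cos\theta_{ik}-\cos\theta_{ij}\cos\theta_{jk}$, so the link triangle indeed has sides $\alpha_j^{(ijk)},\alpha_j^{(jkl)},\alpha_j^{(ijl)}$, and the sign subtlety you flag resolves exactly as you predict. Fixing an orthonormal basis of $W=\mathbf{n}_j^{\perp}$ whose union with $\mathbf{n}_j$ is positively oriented, one finds $\mathbf{n}_i\times\mathbf{n}_j\times\mathbf{n}_k=\mathbf{v}_{ji}\times\mathbf{n}_j\times\mathbf{v}_{jk}=+\,\mathbf{v}_{ji}\times_W\mathbf{v}_{jk}$ but $\mathbf{n}_j\times\mathbf{n}_k\times\mathbf{n}_l=-\,\mathbf{v}_{jk}\times_W\mathbf{v}_{jl}$; this relative minus sign is independent of the orientation chosen for $W$ (an overall flip cancels in the dot product), and combined with $\mathbf{u}_{ijk}\cdot\mathbf{u}_{jkl}=-\cos\phi_{jk}$ it produces the numerator $\cos\alpha_j^{(ijk)}\cos\alpha_j^{(jkl)}-\cos\alpha_j^{(ijl)}$ with the stated sign. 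What each approach buys: the paper's computation stays entirely within the determinantal machinery it has already built and needs no auxiliary construction; yours explains the formula conceptually as nothing more than the ordinary spherical cosine rule in the vertex figure at $\mathbf{n}_j$, makes the sign structure transparent, and generalises verbatim to the $m$-dimensional simplices treated in the paper's final subsection.
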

\begin{proof}
Expanding out the cosine rule (\ref{eq:cosine}) and applying the sine rule (\ref{eq:sin}) to the denominator gives
\begin{equation}
\begin{split}
\cos\phi_{jk}=&(\cos\theta_{ij}\cos\theta_{jl}+\cos\theta_{ik}\cos\theta_{kl}-\cos\theta_{ij}\cos\theta_{jk}\cos\theta_{kl}\\
&-\cos\theta_{ik}\cos\theta_{jk}\cos\theta_{jl}-\cos\theta_{il}+\cos\theta_{il}\cos^2\theta_{jk})\\
&/{\sin\alpha_j^{(ijk)}\sin\alpha_j^{(jkl)}\sin\theta_{ij}\sin^2\theta_{jk}\sin\theta_{jl}}.
\end{split}
\end{equation}
This can be factorised as
\begin{equation}
\cos\phi_{jk}=\frac{\left(\frac{\cos\theta_{ik}-\cos\theta_{ij}\cos\theta_{jk}
}{\sin\theta_{ij}\sin\theta_{jk}}\right)\left(\frac{\cos\theta_{kl}-\cos\theta_{jk}\cos\theta_{jl}
}{\sin\theta_{jk}\sin\theta_{jl}}\right)-\left(\frac{\cos\theta_{il}-\cos\theta_{ij}\cos\theta_{jl}
}{\sin\theta_{ij}\sin\theta_{jl}}\right)}
{\sin\alpha_j^{(ijk)}\sin\alpha_j^{(jkl)}}.
\end{equation}
Applying (\ref{eq:23}) the result follows.
\end{proof}

This set of relations imply, in turn, corresponding vertex polar cosine relations, expressing the spherical angles at a vertex in terms of the dihedral angles as follows,
\begin{equation}
\cos\alpha_j^{(ijl)}=\frac{\cos\phi_{jk}+\cos\phi_{ij}\cos\phi_{jl}}{\sin\phi_{ij}\sin\phi_{jl}},
\end{equation}
together with a corresponding sine rule of the form
\begin{equation}
\frac{\sin\phi_{jk}}{\sin\alpha_{j}^{(ijl)}}=\frac{\sin\phi_{jl}}{\sin\alpha_{j}^{(ijk)}}=\frac{\sin\phi_{ij}}{\sin\alpha_{j}^{(jkl)}}=\frac{\sin(\alpha_{j}^{(ijk)},\alpha_{j}^{(ijl)},\alpha_{j}^{(jkl)})}{\sin\alpha_{j}^{(ijk)}\sin\alpha_{j}^{(ijl)}\sin\alpha_{j}^{(jkl)}},
\end{equation}
for $i,j,k,l=1,2,3,4$. Note the similarities between these formulae and those of the spherical case.

We also present a, to our knowledge, new formula which again follows from the Desnanot-Jacobi 
identity, which proves particularly useful when providing the link between hyperspherical trigonometry 
and elliptic functions given later.  
\begin{prop} The following identity holds between the various tetrahedral angles: 
\begin{equation}
\begin{split} 
&  \frac{\sin^2(\theta_{ij},\theta_{ik},\theta_{il},\theta_{jk},\theta_{jl},\theta_{kl})\, 
\cos\alpha_l^{(jkl)}\,\sin\theta_{jl}\,\sin\theta_{kl}}{\sin^2(\theta_{jk},\theta_{jl},\theta_{kl})} \\ 
& = \sin(\theta_{ij},\theta_{il},\theta_{jl})\,\sin(\theta_{ik},\theta_{il},\theta_{kl})\,
\left(\cos\phi_{jl}\,\cos\phi_{kl}-\cos\phi_{il}\right). 
\end{split}
\end{equation}
\end{prop}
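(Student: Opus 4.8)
My plan is to express every quantity on both sides in terms of minors of the single $4\times4$ Gram matrix $G=\left(\mathbf{n}_a\cdot\mathbf{n}_b\right)$ with $a,b\in\{i,j,k,l\}$, and then to recognise the resulting polynomial relation as an instance of the Desnanot--Jacobi identity. The generalised sine functions are, by their determinantal definitions, exactly (principal) minors of $G$, so both sides are rational in these minors and the claim should become an algebraic identity among them.

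First I would treat the left-hand side. Applying the spherical cosine rule (\ref{eq:23}) to the face $(jkl)$ at the vertex $l$ gives $\cos\alpha_l^{(jkl)}\,\sin\theta_{jl}\,\sin\theta_{kl}=\cos\theta_{jk}-\cos\theta_{jl}\cos\theta_{kl}$, which is exactly the $2\times2$ minor $\Gamma$ of $G$ on rows $\{j,l\}$ and columns $\{k,l\}$. Since $\sin^2(\theta_{ij},\theta_{ik},\theta_{il},\theta_{jk},\theta_{jl},\theta_{kl})=\det G$ and $\sin^2(\theta_{jk},\theta_{jl},\theta_{kl})$ is the principal $3\times3$ minor $G_{jkl}$ of $G$, the left-hand side collapses to $\det G\cdot\Gamma/G_{jkl}$.

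Next I would treat the right-hand side. Using the hyperspherical cosine rule (\ref{eq:cosine}) for each of $\cos\phi_{jl}$, $\cos\phi_{kl}$ and $\cos\phi_{il}$, every one of these is a signed $3\times3$ Gram minor of $G$ divided by the product of the two triple-sine functions attached to the two faces meeting along the corresponding edge. Writing $A=\sin(\theta_{ij},\theta_{il},\theta_{jl})$, $B=\sin(\theta_{ik},\theta_{il},\theta_{kl})$ and $C=\sin(\theta_{jk},\theta_{jl},\theta_{kl})$ for the triple-sines that occur, the prefactor $AB$ together with the denominators cancel, and $\cos\phi_{jl}\cos\phi_{kl}-\cos\phi_{il}$ reduces to $(M_{jl}M_{kl}+C^2M_{il})/C^2$, where $M_{jl},M_{kl},M_{il}$ are the three $3\times3$ minors furnished by the cosine rule and $C^2=G_{jkl}$. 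Equating the two reduced expressions and clearing $C^2$ leaves the purely determinantal claim $\det G\cdot\Gamma=M_{jl}M_{kl}+G_{jkl}\,M_{il}$, whose degree count ($4+2=3+3$ in the matrix entries) is precisely that of the Desnanot--Jacobi identity for a $4\times4$ matrix.

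Finally I would match this with Desnanot--Jacobi (\ref{eq:Sylv}). Reordering the rows of $G$ as $(k,j,l,i)$ and the columns as $(j,k,l,i)$, deletion of the first and last rows and columns produces exactly $\Gamma$, the full determinant is $\pm\det G$, and the four corner minors $M^1_1,M^4_4,M^4_1,M^1_4$ are, as index sets, $M_{il}$, $G_{jkl}$, $M_{jl}$ and one further minor. That last corner minor does not coincide with $M_{kl}$ on the nose, but it becomes $M_{kl}$ after using the symmetry $G^{T}=G$, which lets one transpose a minor by swapping its row and column index sets; this symmetry of the Gram matrix is what makes the identity close. The main obstacle, and really the entire content of the bookkeeping, is to control the signs: the permutation signs generated by the simultaneous reordering of rows and columns, the intrinsic minus sign of (\ref{eq:Sylv}), and the minus sign in the defining convention $\mathbf{u}_{ijk}\cdot\mathbf{u}_{jkl}=-\cos\phi_{jk}$ (extended consistently to $\phi_{jl},\phi_{kl},\phi_{il}$) must be combined so that the cross term $M_{jl}M_{kl}$ and the term $G_{jkl}M_{il}$ end up with the signs displayed, in particular yielding the stated minus in $\cos\phi_{jl}\cos\phi_{kl}-\cos\phi_{il}$. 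Getting this sign pattern right, rather than any hard algebra, is where the care is needed.
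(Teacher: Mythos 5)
Your proposal is correct and follows essentially the same route as the paper: the paper's proof applies the Desnanot--Jacobi identity to the $4\times4$ Gram determinant $\sin^4(\theta_{ij},\theta_{ik},\theta_{il},\theta_{jk},\theta_{jl},\theta_{kl})$, with the deleted rows/columns chosen so that the complementary $2\times2$ minor is $\cos\theta_{jk}-\cos\theta_{jl}\cos\theta_{kl}=\cos\alpha_l^{(jkl)}\sin\theta_{jl}\sin\theta_{kl}$ and the $3\times3$ minors are exactly the numerators of the hyperspherical cosine rule (\ref{eq:cosine}) and the principal minor $\sin^2(\theta_{jk},\theta_{jl},\theta_{kl})$. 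Your reduction of both sides to the determinantal identity $\det G\cdot\Gamma=M_{jl}M_{kl}+G_{jkl}M_{il}$, and your identification of it as a (row/column-permuted) instance of (\ref{eq:Sylv}), reproduces precisely that argument, merely spelling out the sign bookkeeping that the paper leaves implicit.
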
 
\begin{proof}
Consider the Desnanot-Jacobi determinantal identity applied to 
$\sin^4(\theta_{ij},\theta_{ik},\theta_{il},\theta_{jk},\theta_{jl},\theta_{kl})$. This yields: 
\begin{equation}
\begin{split} 
\left| \begin{array}{cc} \cos\theta_{jk} & \cos\theta_{jl} \\ \cos\theta_{kl} & 1\end{array}\right|&
\left| \begin{array}{cccc} 1 & \cos\theta_{ij} & \cos\theta_{ik} & \cos\theta_{il} \\ 
\cos\theta_{ij} & 1 & \cos\theta_{jk} & \cos\theta_{jl} \\ 
\cos\theta_{ik} & \cos\theta_{jk} & 1 & \cos\theta_{kl} \\
\cos\theta_{il} & \cos\theta_{jl}& \cos\theta_{kl} & 1 \end{array}\right| \\ 
&= \left| \begin{array}{ccc}  
 1 & \cos\theta_{jk} & \cos\theta_{jl} \\ 
 \cos\theta_{jk} & 1 & \cos\theta_{kl} \\
 \cos\theta_{jl}& \cos\theta_{kl} & 1 \end{array}\right| \, 
\left| \begin{array}{ccc} 1  & \cos\theta_{ik} & \cos\theta_{il} \\ 
\cos\theta_{ij} & \cos\theta_{jk} & \cos\theta_{jl} \\ 
\cos\theta_{il} & \cos\theta_{kl} & 1 \end{array}\right| \\ 
& - \left| \begin{array}{ccc}  \cos\theta_{ij} & \cos\theta_{ik} & \cos\theta_{il} \\ 
 1 & \cos\theta_{jk} & \cos\theta_{jl} \\ 
 \cos\theta_{jl}& \cos\theta_{kl} & 1 \end{array}\right|\, 
\left| \begin{array}{ccc} 
\cos\theta_{ij}  & \cos\theta_{jk} & \cos\theta_{jl} \\ 
\cos\theta_{ik}  & 1 & \cos\theta_{kl} \\
\cos\theta_{il} & \cos\theta_{kl} & 1 \end{array}\right|
\end{split} 
\end{equation}

from which the result follows. 
\end{proof}


In Appendix A, we consider the expansion of one of the radial vectors, $\mathbf{n}_i$, in terms of the other radial vectors. We compare this with the expansion in terms of an orthogonal frame, obtained by Gram-Schmidt orthonormalisation to obtain the four-parts formula. We follow a similar procedure for the orthogonal vectors, $\mathbf{u}_{ijk}$, to obtain the five-parts formula. We believe these expansions will be of particular use in considering a model associated with quadrilateral nets introduced by Sergeev in \cite{Sergeev}. 

\subsection{m-Dimensional Hyperspherical Trigonometry}

Some of the results in this subsection were derived in \cite{sato} from a different perspective. 
Consider an $(m-1)$-dimensional hyperspherical simplex on the surface of an $m$-dimensional hypersphere, an $(m-1)$-sphere. Let the $m$ vectors, $\mathbf{n}_1,\mathbf{n}_2,\dots,\mathbf{n}_m$ be the position vectors of the $m$ vertices of this simplex, such that the edges of this simplex are given by $\theta^{[1]}_{i_ji_k}$, with
\begin{equation}
\mathbf{n}_{i_j}\cdot \mathbf{n}_{i_k}\equiv\cos\theta^{[1]}_{i_ji_k},
\end{equation}
for $i_j,i_k\in\{1,2,\dots,m\}$. Now define orthogonal vectors $\mathbf{u}_{i_1i_2\dots i_{m-1}}$ using the $(m-1)$-ary vector product,
\begin{equation}
\mathbf{u}_{i_1i_2\dots i_{m-1}}\equiv\frac{\mathbf{n}_{i_1}\times\mathbf{n}_{i_2}\times\dots\times\mathbf{n}_{i_{m-1}}}{|\mathbf{n}_{i_1}\times\mathbf{n}_{i_2}\times\dots\times\mathbf{n}_{i_{m-1}}|}.
\end{equation}
Define the angles between these orthogonal vectors by
\begin{equation}
\mathbf{u}_{i_1i_2\dots i_{m-1}}\cdot\mathbf{u}_{i_2i_3\dots i_{m}}\equiv-\cos\theta^{[m-1](i_1\dots i_{m})}_{i_2\dots i_{m-1}}.
\end{equation}
Note that the superscript $(i_1\dots i_m)$ denotes the simplex that is being considered. In the highest dimensional case this may be omitted.
We also have
\begin{equation}
|\mathbf{n}_{i_1}\times\mathbf{n}_{i_2}\times\dots\times\mathbf{n}_{i_{m-1}}|=\sin\left(\Theta^{[1]}_{i_1\dotsi_{m-1}}\right),
\end{equation}
where $\sin\left(\Theta^{[1]}_{i_1\dotsi_{m-1}}\right)$ is the generalised sine function of $m(m-1)/2$ variables,
\begin{equation}
\sin\left(\Theta^{[1]}_{i_1\dotsi_{m-1}}\right)=\sin\left(\left\{\theta^{[1]}_{i_ji_k}\bigg|i_j< i_k,\, i_j,i_k=1,\dots,m-1\right\}\right),
\end{equation}
together with 
\begin{equation}
|\mathbf{u}_{i_1\dots i_{m-1}}\times\mathbf{u}_{i_2\dots i_{m}}\times\mathbf{u}_{i_3\dots i_{m}i_1}\times\dots\times\mathbf{u}_{i_{m-1}i_mi_1\dots i_{m-3}}|=\sin\left(\Theta^{[m-1]}_{i_1\dots i_m}\right),
\end{equation}
where similarly $\sin\left(\Theta^{[m-1]}_{i_1\dots i_m}\right)$ is the multiple sine of all of the angles between each pair of orthogonal vectors.
There follow a number of identities, similar to those for the lower dimensional cases.

\begin{prop}[Cosine Rule]
\begin{equation}
\cos\theta^{[m-1]}_{i_2\dots i_{m-1}}=-\frac{\left|\begin{array}{ccc}
\cos\theta^{[1]}_{i_1i_2}&\dots &\cos\theta^{[1]}_{i_1i_m}\\
\vdots & \ddots & \vdots\\
\cos\theta^{[1]}_{i_{m-1}i_2} &\dots &\cos\theta^{[1]}_{i_{m-1}i_m}\\
\end{array}\right|}
{\sin\left(\Theta^{[1]}_{i_1\dots i_{m-1}}\right)\sin\left(\Theta^{[1]}_{i_2\dots i_{m}}\right)}.
\end{equation}
\end{prop}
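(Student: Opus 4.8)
The plan is to mimic the lower-dimensional cosine-rule proofs by computing the inner product $\mathbf{u}_{i_1\dots i_{m-1}}\cdot\mathbf{u}_{i_2\dots i_m}$ directly. By the definition of the orthogonal vectors this equals the ratio of $(\mathbf{n}_{i_1}\times\dots\times\mathbf{n}_{i_{m-1}})\cdot(\mathbf{n}_{i_2}\times\dots\times\mathbf{n}_{i_m})$ to the product of norms $|\mathbf{n}_{i_1}\times\dots\times\mathbf{n}_{i_{m-1}}|\,|\mathbf{n}_{i_2}\times\dots\times\mathbf{n}_{i_m}|$, and these two norms are by definition the generalised sines $\sin(\Theta^{[1]}_{i_1\dots i_{m-1}})$ and $\sin(\Theta^{[1]}_{i_2\dots i_m})$. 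Since the left-hand side is $-\cos\theta^{[m-1]}_{i_2\dots i_{m-1}}$, the whole problem reduces to showing that the numerator equals the displayed determinant of cosines.

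First I would set $\mathbf{w}=\mathbf{n}_{i_2}\times\dots\times\mathbf{n}_{i_m}$ and invoke the defining determinantal property of the $(m-1)$-ary product to write $(\mathbf{n}_{i_1}\times\dots\times\mathbf{n}_{i_{m-1}})\cdot\mathbf{w}=\det(\mathbf{n}_{i_1},\dots,\mathbf{n}_{i_{m-1}},\mathbf{w})$. Using antisymmetry of the determinant I would move $\mathbf{n}_{i_1}$ into the last slot and then pull the factor $\mathbf{w}$ to the front, so that the expression becomes a scalar multiple of $[(\mathbf{n}_{i_2}\times\dots\times\mathbf{n}_{i_m})\times\mathbf{n}_{i_2}\times\dots\times\mathbf{n}_{i_{m-1}}]\cdot\mathbf{n}_{i_1}$. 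This is precisely the shape to which the Nested Vector Product Identity applies, with $n=m-1$: the inner block is $\mathbf{a}_1,\dots,\mathbf{a}_n=\mathbf{n}_{i_2},\dots,\mathbf{n}_{i_m}$ and the outer vectors are $\mathbf{a}_{n+1},\dots,\mathbf{a}_{2n-1}=\mathbf{n}_{i_2},\dots,\mathbf{n}_{i_{m-1}}$, all of which live in $\mathbb{R}^{n+1}=\mathbb{R}^m$ as required.

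Applying that identity produces a vector equal to minus an $(m-1)\times(m-1)$ determinant whose first row carries the vectors $\mathbf{n}_{i_2},\dots,\mathbf{n}_{i_m}$ and whose remaining rows are the inner products $\mathbf{n}_{i_a}\cdot\mathbf{n}_{i_b}$. Taking the inner product with $\mathbf{n}_{i_1}$ then replaces that first (vector) row by the entries $\mathbf{n}_{i_1}\cdot\mathbf{n}_{i_b}=\cos\theta^{[1]}_{i_1 i_b}$, turning the whole object into the full determinant of cosines $\cos\theta^{[1]}_{i_a i_b}$ (with diagonal entries $\mathbf{n}_{i_a}\cdot\mathbf{n}_{i_a}=1$) that appears in the statement. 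Dividing by the two generalised sines in the denominator then yields the claimed formula.

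The main obstacle I anticipate is purely one of sign-bookkeeping. Repositioning $\mathbf{n}_{i_1}$ into the last slot costs $(m-1)$ transpositions and moving $\mathbf{w}$ to the front a further $(m-2)$, for $(2m-3)$ transpositions in total; combined with the explicit minus carried by the Nested Vector Product Identity these reduce the numerator to $+$ the cosine determinant, so that the single overall minus sign displayed in the statement arises solely from the defining relation $\mathbf{u}_{i_1\dots i_{m-1}}\cdot\mathbf{u}_{i_2\dots i_m}=-\cos\theta^{[m-1]}_{i_2\dots i_{m-1}}$. I would track the parity of the $(2m-3)$ transpositions explicitly, and check that the row/column conventions of the identity place the $1$'s correctly on the diagonal, to confirm the normalisation matches the $m=3,4$ specialisations in (\ref{eq:23}) and (\ref{eq:cosine}).
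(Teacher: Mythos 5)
Your proposal is correct and takes essentially the same route as the paper: both reduce the claim to computing $\mathbf{u}_{i_1\dots i_{m-1}}\cdot\mathbf{u}_{i_2\dots i_m}$ as the ratio of $(\mathbf{n}_{i_1}\times\dots\times\mathbf{n}_{i_{m-1}})\cdot(\mathbf{n}_{i_2}\times\dots\times\mathbf{n}_{i_m})$ to the product of generalised sines, identify the numerator with the displayed determinant of cosines, and finish with the defining relation $\mathbf{u}_{i_1\dots i_{m-1}}\cdot\mathbf{u}_{i_2\dots i_m}=-\cos\theta^{[m-1]}_{i_2\dots i_{m-1}}$. The only difference is that the paper asserts the Gram-type identity (inner product of two multiple vector products equals the determinant of pairwise inner products) without proof, whereas you derive it from the Nested Vector Product Identity with correct sign bookkeeping ($(-1)^{2m-3}$ from the transpositions cancelling against the identity's explicit minus) — which is precisely the justification the paper itself supplies in its four-dimensional cosine-rule proof, so this is a filled-in detail rather than a different method.
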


\begin{proof}
Consider the scalar product
\begin{equation}
\begin{split}
\mathbf{u}_{i_1\dots i_{m-1}}\cdot\mathbf{u}_{i_2\dots i_{m}}&=\frac{(\mathbf{n}_{i_1}\times\dots\times\mathbf{n}_{i_{m-1}})\cdot(\mathbf{n}_{i_2}\times\dots\times\mathbf{n}_{i_{m}})}
{|\mathbf{n}_{i_1}\times\dots\times\mathbf{n}_{i_{m-1}}||\mathbf{n}_{i_2}\times\dots\times\mathbf{n}_{i_{m}}|},\\
&=\frac{\left|\begin{array}{ccc}
\mathbf{n}_{i_1}\cdot\mathbf{n}_{i_2}&\dots&\mathbf{n}_{i_1}\cdot\mathbf{n}_{i_m}\\
\vdots &\ddots & \vdots\\
\mathbf{n}_{i_{m-1}}\cdot\mathbf{n}_{i_2}&\dots&\mathbf{n}_{i_{m-1}}\cdot\mathbf{n}_{i_m}\\
\end{array}\right|}
{\sin\left(\Theta^{[1]}_{i_1\dots i_{m-1}}\right)\sin\left(\Theta^{[1]}_{i_2\dots i_{m}}\right)},\\
&=\frac{\left|\begin{array}{ccc}
\cos\theta^{[1]}_{i_1i_2}&\dots &\cos\theta^{[1]}_{i_1i_m}\\
\vdots & \ddots & \vdots\\
\cos\theta^{[1]}_{i_{m-1}i_2} &\dots &\cos\theta^{[1]}_{i_{m-1}i_m}\\
\end{array}\right|}
{\sin\left(\Theta^{[1]}_{i_1\dots i_{m-1}}\right)\sin\left(\Theta^{[1]}_{i_2\dots i_{m}}\right)}.
\end{split}
\end{equation}
Hence, the cosine rule follows.
\end{proof}

\begin{prop}[Sine Rule]
\begin{equation}\label{eq:81}
\frac{\sin\left(\Theta^{[1](i_{m-1})}_{i_1\dots i_{m}}\right)}{\sin\left(\Theta^{[m-1]}_{i_mi_1\dots i_{m-2}}\right)}==\frac{\sin^{m-2}\left(\Theta^{[1]}_{i_1\dots i_m}\right)}{\sin\left(\Theta^{[m-1]}_{i_1\dots i_{m-1}}\right)\dots\sin\left(\Theta^{[m-1]}_{i_{m-1}i_mi_1\dots i_{m-3}}\right)}=k,\;\textrm{a\;constant},
\end{equation}
where
\begin{equation}
\sin\left(\Theta^{[1](i_{m-1})}_{i_1 \dots i_m}\right)=\sin\left(\left\{\theta^{[1]}_{j_1\dots j_{m-1}}\bigg| i_{m-1}=j_k\, \mathrm{for\,some}\, k\in\{1,\dots,m-1\}\right\}\right).
\end{equation}

\end{prop}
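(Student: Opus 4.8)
The plan is to follow the strategy already used for the four-dimensional sine rule, with the single ad hoc manipulation there replaced by one application of the general Nested Vector Product Identity with $n=m-1$. First I would write out the cyclic $(m-1)$-ary product of orthogonal vectors $\mathbf{u}_{i_1\dots i_{m-1}}\times\mathbf{u}_{i_2\dots i_m}\times\dots\times\mathbf{u}_{i_{m-1}i_mi_1\dots i_{m-3}}$, whose modulus equals $\sin(\Theta^{[m-1]}_{i_1\dots i_m})$ by definition. Substituting the definition of each normalised factor pulls out a denominator equal to the product of the $m-1$ face sines $\sin(\Theta^{[1]})$ attached to the $m-1$ constituent hyperplanes, leaving in the numerator the $(m-1)$-ary product of the unnormalised vectors $\mathbf{n}_{i_1}\times\dots\times\mathbf{n}_{i_{m-1}}$ together with its cyclic shifts.

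The central step is to treat the leading factor $\mathbf{n}_{i_1}\times\dots\times\mathbf{n}_{i_{m-1}}$ as the inner nested product and the remaining $m-2$ factors as the outer vectors, so that the Nested Vector Product Identity (with $2n-1=2m-3$ vectors in $\mathbb{R}^m$) rewrites the numerator as an $(m-1)\times(m-1)$ determinant whose first row is $\mathbf{n}_{i_1},\dots,\mathbf{n}_{i_{m-1}}$ and whose row $r+1$ lists the inner products of these with the $r$-th outer factor. Here I would invoke orthogonality: the $r$-th outer factor omits exactly the label $i_r$ among the column labels, so its inner product with $\mathbf{n}_{i_p}$ vanishes unless $p=r$ and otherwise equals $\pm\det(\mathbf{n}_{i_1},\dots,\mathbf{n}_{i_m})$. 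Thus each lower row carries a single nonzero entry, and these sit in the distinct columns $1,\dots,m-2$, leaving column $m-1$ empty below the first row. Expanding along that column collapses the determinant to $\pm\left(\mathbf{n}_{i_1}\cdot(\mathbf{n}_{i_2}\times\dots\times\mathbf{n}_{i_m})\right)^{m-2}\mathbf{n}_{i_{m-1}}$; for $m=4$ this reproduces the surviving direction $\mathbf{n}_k$ of equation~(\ref{eq:47}).

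Taking moduli and using $|\mathbf{n}_{i_1}\cdot(\mathbf{n}_{i_2}\times\dots\times\mathbf{n}_{i_m})|=\sin(\Theta^{[1]}_{i_1\dots i_m})$ then gives $\sin(\Theta^{[m-1]}_{i_1\dots i_m})$ as $\sin^{m-2}(\Theta^{[1]}_{i_1\dots i_m})$ divided by the product of the $m-1$ face sines belonging to the faces through the vertex $\mathbf{n}_{i_{m-1}}$. Dividing through by the modulus of the one remaining $(m-1)$-fold product of $\mathbf{n}$ vectors, namely the face omitting $i_{m-1}$, introduces the final, $m$-th, face sine into the denominator. The resulting ratio then has numerator $\sin^{m-2}(\Theta^{[1]}_{i_1\dots i_m})$, which is manifestly totally symmetric in $i_1,\dots,i_m$, over a denominator equal to the product of all $m$ face sines, which is likewise totally symmetric. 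Hence the ratio is invariant under every permutation of the labels, and equating its cyclic variants yields the chain of equalities in~(\ref{eq:81}) with common value $k$.

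The step I expect to be the main obstacle is the determinant reduction for general $m$: one must check that the single nonzero entry in each lower row lands in a distinct column so that the cofactor expansion reduces to one surviving term, keep track of the accumulated sign of the associated permutation, and confirm that precisely $m-2$ copies of the full volume $\det(\mathbf{n}_{i_1},\dots,\mathbf{n}_{i_m})$ are produced, this being what yields the exponent $m-2$ in the numerator. Once the magnitude formula is established, the symmetry argument of the final paragraph proceeds exactly as in the four-dimensional case and presents no further difficulty.
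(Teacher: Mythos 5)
Your proposal is correct and takes essentially the same route as the paper's proof: the same cyclic product of polar vectors whose modulus is $\sin\left(\Theta^{[m-1]}_{i_1\dots i_m}\right)$, the same application of the Nested Vector Product Identity with $n=m-1$ (i.e.\ $2m-3$ vectors in $\mathbb{R}^m$), the same orthogonality-driven collapse of the resulting determinant to $\left(\mathbf{n}_{i_1}\cdot(\mathbf{n}_{i_2}\times\dots\times\mathbf{n}_{i_m})\right)^{m-2}\mathbf{n}_{i_{m-1}}$, and the same symmetry-of-the-ratio argument to obtain the chain of equalities. The only difference is that you make explicit the column/sign bookkeeping in the determinant reduction, which the paper leaves implicit.
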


\begin{proof}
Consider the ratio
\begin{equation}
\frac{|\mathbf{u}_{i_1\dots i_{m-1}}\times\mathbf{u}_{i_2\dots i_{m}}\times\dots\times\mathbf{u}_{i_{m-1}i_mi_1\dots i_{m-3}}|}{|\mathbf{u}_{i_mi_1\dots i_{m-2}}|}
=\frac{\sin\left(\Theta^{[1](i_{m-1})}_{i_1\dots i_{m}}\right)}{\sin\left(\Theta^{[m-1]}_{i_mi_1\dots i_{m-2}}\right)}.
\end{equation}
However,
\begin{equation}\label{eq:84}
\begin{split}
&(\mathbf{u}_{i_1\dots i_{m-1}}\times\mathbf{u}_{i_2\dots i_{m}}\times\dots\times\mathbf{u}_{i_{m-1}i_mi_1\dots i_{m-3}})\\
&=\frac{(\mathbf{n}_{i_1}\times\dots\times\mathbf{n}_{i_{m-1}})\times\dots\times(\mathbf{n}_{i_{m-1}}\times \mathbf{n}_{i_m}\times \mathbf{n}_{i_{m-1}}\dots\times\mathbf{n}_{i_{m-3}})}{|\mathbf{n}_{i_1}\times\dots\times\mathbf{n}_{i_{m-1}}|\dots|\mathbf{n}_{i_{m-1}}\times \mathbf{n}_{i_m}\times \mathbf{n}_{i_{m-1}}\dots\times\mathbf{n}_{i_{m-3}}|},\\
&=\frac{-\left|\begin{array}{ccc}
\mathbf{n}_{i_{1}} & \dots & \mathbf{n}_{i_{m-1}}\\
\mathbf{n}_{i_{1}}\cdot(\mathbf{n}_{i_{2}}\times\dots\times\mathbf{n}_{i_{m}})& \dots & \mathbf{n}_{i_{m-1}}\cdot(\mathbf{n}_{i_{2}}\times\dots\times\mathbf{n}_{i_{m}})\\
\vdots & \ddots & \vdots \\
\mathbf{n}_{i_{1}}\cdot(\mathbf{n}_{i_{m-1}}\times\mathbf{n}_{i_{m}}\times\mathbf{n}_{i_{1}}\times\dots\times\mathbf{n}_{i_{m-3}}) & \dots &
\mathbf{n}_{i_{m-1}}\cdot(\mathbf{n}_{i_{m-1}}\times\mathbf{n}_{i_{m}}\times\mathbf{n}_{i_{1}}\times\dots\times\mathbf{n}_{i_{m-3}})
\end{array}\right|}
{\sin\left(\Theta^{[m-1]}_{i_1\dots i_{m-1}}\right)\dots\sin\left(\Theta^{[m-1]}_{i_{m-1}i_mi_1\dots i_{m-3}}\right)},\\
&=\frac{(\mathbf{n}_{i_{1}}\cdot(\mathbf{n}_{i_{2}}\times\dots\times\mathbf{n}_{i_{m}}))^{m-2}\mathbf{n}_{i_{m-1}}}{\sin\left(\Theta^{[m-1]}_{i_1\dots i_{m-1}}\right)\dots\sin\left(\Theta^{[m-1]}_{i_{m-1}i_mi_1\dots i_{m-3}}\right)}.\\
\end{split}
\end{equation}
Hence,
\begin{equation}
\begin{split}
|\mathbf{u}_{i_1\dots i_{m-1}}\times\mathbf{u}_{i_2\dots i_{m}}\times\dots\times\mathbf{u}_{i_{m-1}i_mi_1\dots i_{m-3}}|&=\frac{(\mathbf{n}_{i_{1}}\cdot(\mathbf{n}_{i_{2}}\times\dots\times\mathbf{n}_{i_{m}}))^{m-2}}{\sin(\Theta^{[m-1]}_{i_1\dots i_{m-1}})\dots\sin\left(\Theta^{[m-1]}_{i_{m-1}i_mi_1\dots i_{m-3}}\right)},\\
&=\frac{\sin^{m-2}\left(\Theta^{[1]}_{i_1\dots i_m}\right)}{\sin\left(\Theta^{[m-1]}_{i_1\dots i_{m-1}}\right)\dots\sin\left(\Theta^{[m-1]}_{i_{m-1}i_mi_1\dots i_{m-3}}\right)},
\end{split}
\end{equation}
and so, the sine rule follows.
\end{proof}

\begin{prop}[Polar Cosine Rule]
\begin{equation}
\cos\theta^{[1]}_{i_{m-1}i_{m}}=\mathrm{det}(X),
\end{equation}
where
\begin{equation}\label{eq:X} 
(X)_{jk}=\left\{\begin{array}{cc}
-1; & j=k-1,\\
-\cos\theta^{[m-1]}_{i_1\dots i_{j-1}i_{j+1}\dots i_{k-2}i_k\dots i_m}; & j\neq k-1,\; m \;\mathrm{odd},\\
(-1)^{j+k+1}\cos\theta^{[m-1]}_{i_1\dots i_{j-1}i_{j+1}\dots i_{k-2}i_k\dots i_m}; & j\neq-1,\; m\; \mathrm{even}.
\end{array}\right.
\end{equation}
\end{prop}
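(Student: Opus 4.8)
The plan is to follow the template of the spherical and four-dimensional polar cosine rules: recover the central-angle cosine $\cos\theta^{[1]}_{i_{m-1}i_m}$ by evaluating, in two different ways, the scalar product of two $(m-1)$-fold vector products of the orthogonal polar vectors $\mathbf{u}$. Concretely, I would take the cyclic $(m-1)$-fold product of the $\mathbf{u}$ appearing in the sine-rule computation~(\ref{eq:84}), which equals a scalar multiple of $\mathbf{n}_{i_{m-1}}$, together with the once-shifted cyclic product, which by the same computation equals the corresponding scalar multiple of $\mathbf{n}_{i_m}$. The scalar product of these two vectors is the hinge of the whole argument.

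For the first evaluation, I would use that the inner product of two $(m-1)$-fold vector products in $\mathbb{R}^m$ equals the Gram determinant $\det(\mathbf{u}_a\cdot\mathbf{u}_b)$ of their pairwise dot products; this is the generalised Lagrange identity, and it follows directly from the Levi-Civita contraction identity already established in the proof of the Nested Vector Product Identity. Substituting the dihedral-angle definition $\mathbf{u}_{\dots}\cdot\mathbf{u}_{\dots}\equiv-\cos\theta^{[m-1]}$ into each entry, and noting that the adjacent members of the two shifted families coincide, so that $\mathbf{u}\cdot\mathbf{u}=1$ along the $j=k-1$ diagonal, reproduces exactly the matrix $X$ up to the overall normalisation by the generalised sine functions $\sin(\Theta^{[m-1]})$ that enter the definitions of the $\mathbf{u}$.

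For the second evaluation I would invoke~(\ref{eq:84}) directly: each of the two $(m-1)$-fold products collapses to a scalar multiple of a single radial vector, so their scalar product equals a product of scalars times $\mathbf{n}_{i_{m-1}}\cdot\mathbf{n}_{i_m}=\cos\theta^{[1]}_{i_{m-1}i_m}$. Equating the two evaluations and cancelling the common scalar factors, which are precisely the normalising sine functions that the sine rule~(\ref{eq:81}) organises into a single constant, then leaves $\det(X)=\cos\theta^{[1]}_{i_{m-1}i_m}$.

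The main obstacle will be the sign bookkeeping, which is exactly what produces the two parity-dependent cases in the definition of $(X)_{jk}$. Each $\mathbf{u}$ in the shifted family carries its index string in a cyclically rotated order relative to its partner, and a cyclic rotation of an $(m-1)$-element string is an even or odd permutation according to the parity of $m$; these permutation signs multiply into each off-diagonal entry and flip the sign of the matched $\mathbf{u}\cdot\mathbf{u}$ terms on the diagonal. Keeping these signs mutually consistent, so that they assemble into the plain $-\cos\theta^{[m-1]}$ entries when $m$ is odd and into the chequerboard $(-1)^{j+k+1}\cos\theta^{[m-1]}$ entries when $m$ is even, with the matching $-1$ on the $j=k-1$ diagonal, is the delicate part, and the step I would carry out most carefully.
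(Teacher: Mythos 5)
Your proposal is correct and is essentially the paper's own proof: the paper likewise forms the scalar product of the cyclic $(m-1)$-fold product $\mathbf{u}_{i_1\dots i_{m-1}}\times\mathbf{u}_{i_2\dots i_m}\times\dots\times\mathbf{u}_{i_{m-1}i_mi_1\dots i_{m-3}}$ with its once-shifted counterpart, evaluates it once as the Gram determinant of the pairwise $\mathbf{u}\cdot\mathbf{u}$ (giving $\det(X)$) and once via the collapse formula~(\ref{eq:84}) (giving a scalar multiple of $\mathbf{n}_{i_{m-1}}\cdot\mathbf{n}_{i_m}$), then cancels the normalising factors with the sine rule. The sign bookkeeping you single out as the delicate step is precisely what the paper absorbs, without further comment, into the parity-dependent definition of $(X)_{jk}$.
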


\begin{proof}
Consider the product
\begin{equation}
\begin{split}
&(\mathbf{u}_{i_{1}\dots i_{m-1}}\times\mathbf{u}_{i_{2}\dots i_{m}}\times\mathbf{u}_{i_{3}\dots i_{m}i_1}\times\dots\times\mathbf{u}_{i_{m-1}i_mi_1\dots i_{m-3}})\cdot (\mathbf{u}_{i_{2}\dots i_{m}}\times\dots\times\mathbf{u}_{i_{m}i_1\dots i_{m-2}})\\
&=\left|\begin{array}{ccc}
\mathbf{u}_{i_{1}\dots i_{m-1}}\cdot\mathbf{u}_{i_{2}\dots i_{m}} & \dots & \mathbf{u}_{i_{1}\dots i_{m-1}}\cdot\mathbf{u}_{i_mi_{1}\dots i_{m-2}}\\
\vdots&\ddots&\vdots\\
\mathbf{u}_{i_{m-1}i_mi_1\dots i_{m-3}}\cdot\mathbf{u}_{i_{2}\dots i_{m}} & \dots & \mathbf{u}_{i_{m-1}i_mi_1\dots i_{m-3}}\cdot\mathbf{u}_{i_{m}i_1\dots i_{m-2}}
\end{array}\right|,\\
&=\mathrm{det}(X),
\end{split}
\end{equation}
where $X$ is as defined in \eqref{eq:X}. .
Alternatively, recalling (\ref{eq:84}) gives
\begin{equation}
\begin{split}
&(\mathbf{u}_{i_{1}\dots i_{m-1}}\times\mathbf{u}_{i_{2}\dots i_{m}}\times\mathbf{u}_{i_{3}\dots i_{m}i_1}\times\dots\times\mathbf{u}_{i_{m-1}i_mi_1\dots i_{m-3}})\cdot (\mathbf{u}_{i_{2}\dots i_{m}}\times\dots\times\mathbf{u}_{i_{m}i_1\dots i_{m-2}}),\\
&=\left(\frac{(\mathbf{n}_{i_{1}}\cdot(\mathbf{n}_{i_{2}}\times\dots\times\mathbf{n}_{i_{m}}))^{m-2}\mathbf{n}_{i_{m-1}}}{\sin\left(\Theta^{[m-1]}_{i_1\dots i_{m-1}}\right)\dots\sin\left(\Theta^{[m-1]}_{i_{m-1}i_mi_1\dots i_{m-3}}\right)}\right)\cdot\left(\frac{(\mathbf{n}_{i_{2}}\cdot(\mathbf{n}_{i_{3}}\times\dots\times\mathbf{n}_{i_{m}}\times\mathbf{n}_{i_{m1}}))^{m-2}\mathbf{n}_{i_{m}}}{\sin\left(\Theta^{[m-1]}_{i_2\dots i_{m}}\right)\dots\sin\left(\Theta^{[m-1]}_{i_{m}i_1\dots i_{m-2}}\right)}\right),\\
&=\frac{(\mathbf{n}_{i_{1}}\cdot(\mathbf{n}_{i_{2}}\times\dots\times\mathbf{n}_{i_{m}}))^{2m-4}\cos\theta^{[1]}_{i_{m-1}i_m}}{\sin(\Theta^{[m-1]}_{i_1\dots i_{m-1}})\sin^2\left(\Theta^{[m-1]}_{i_2\dots i_{m}}\right)\dots\sin^2\left(\Theta^{[m-1]}_{i_{m-1}i_mi_1\dots i_{m-3}}\right)\sin\left(\Theta^{[m-1]}_{i_mi_1\dots i_{m-2}}\right)},\\
&=\frac{\sin^{2m-4}(\Theta^{[1]}_{i_1\dotsi_m})\cos\theta^{[1]}_{i_{m-1}i_m}}{\sin(\Theta^{[m-1]}_{i_1\dots i_{m-1}})\sin^2\left(\Theta^{[m-1]}_{i_2\dots i_{m}}\right)\dots\sin^2\left(\Theta^{[m-1]}_{i_{m-1}i_mi_1\dots i_{m-3}}\right)\sin\left(\Theta^{[m-1]}_{i_mi_1\dots i_{m-2}}\right)}.\\
\end{split}
\end{equation}
By applying the sine rule, the Polar Cosine rule then follows.
\end{proof}

Note that the facets of the $(m-1)$-dimensional simplex are $(m-2)$-dimensional simplices with the various cosine and sine rules still holding true in these facets. The same applies to these $(m-2)$-dimensional facets, and so on, forming a  complete hierarchy of cosine and sine rules between the angles for the $j$-dimensional facets, and those for the $(j-1)$-dimensional facets. So, in terms of the cosine rule we have
\begin{equation}
\cos\theta^{[j]}_{i_2\dots i_{j-1}}=\frac{-\left|\begin{array}{cc}
\cos\theta^{[j-1](i_1\dots i_{j-1})}_{i_2\dots i_{j-1}} & \cos\theta^{[j-1](i_1\dots i_{j-2}i_j)}_{i_2\dots i_{j-1}}\\
1 & \cos\theta^{[j-1](i_2\dots i_{j})}_{i_2\dots i_{j-1}}\\
\end{array}\right|}
{\sin\theta^{[j-1](i_1\dots i_{j-1})}_{i_2\dots i_{j-1}}\sin\theta^{[j-1](i_2\dots i_{j})}_{i_2\dots i_{j-1}}}.
\end{equation}
This can be extended so that the angles of any facet can be expressed in terms of those for any other dimensional facets to give a cosine rule of the form
\begin{equation}
\cos\theta^{[j]}_{i_2\dots i_{j-1}}=\frac{-\left|\begin{array}{cccc}
\cos\theta^{[k](i_1\dots i_ki_{k+1})}_{i_2\dots i_k} & \cos\theta^{[k](i_1\dots i_ki_{k+2})}_{i_2\dots i_k} & \dots &\cos\theta^{[k](i_1\dots i_ki_{j})}_{i_2\dots i_k}\\
\cos\theta^{[k](i_2\dots i_ki_{k+1})}_{i_2\dots i_{k+1}i_{k+1}} & \cos\theta^{[k](i_2\dots i_{k+2})}_{i_2\dots i_k} & \dots &\cos\theta^{[k](i_2\dots i_{k+1}i_{j})}_{i_2\dots i_k}\\
\vdots & \vdots &\ddots & \vdots\\
\cos\theta^{[k](i_{j-1}i_2\dots i_{k+1})}_{i_2\dots i_k} & \cos\theta^{[k](i_{j-1}i_2\dots i_ki_{k+2})}_{i_2\dots i_k} & \dots &\cos\theta^{[k](i_{j-1}i_2\dots i_ki_{j})}_{i_2\dots i_k}\\
\end{array}\right|}
{\sin\left(\Theta^{[k](i_1\dots i_k)}_{i_2\dots i_k}\right)\sin\left(\Theta^{[k](i_2\dots i_{k+1})}_{i_2 \dots i_k}\right)}.
\end{equation}
From these cosine rules, there arise corresponding sine rules. Specifically, note that
\begin{equation}
\sin\theta^{[j]}_{i_2\dots i_{j}}=k_{j-1}\sin\theta^{[j-1]}_{i_2\dots i_{j-1}},
\end{equation}
where
\begin{equation}
k_{j-1}=\frac{\sin\left(\theta^{[j-1](i_1\dots i_j)}_{i_2\dots i_{j-1}},\theta^{[j-1](i_2\dots i_{j+1})}_{i_2\dots i_{j-1}},\theta^{[j-1](i_1\dots i_{j-1}i_{j+1})}_{i_2\dots i_{j-1}}
\right)}{\sin\left(\theta^{[j-1](i_1\dots i_j)}_{i_2\dots i_{j-1}}\right)\sin\left(\theta^{[j-1](i_2\dots i_{j+1})}_{i_2\dots i_{j-1}}\right)\sin\left(\theta^{[j-1](i_1\dots i_{j-1}i_{j+1})}_{i_2\dots i_{j-1}}\right)}\ ,
\end{equation}
with $k_{j-1}$ being constant. Hence, from this, it follows that a full hierarchy of intertwined sine rules exists between the angles of any two different dimensional facets.

\section{Link between Hyperspherical Trigonometry and Elliptic Functions}
\subsection{Link between Spherical Trigonometry and Jacobi Elliptic Functions}
There exists a well-known link between the formulae of spherical trigonometry and the Jacobi elliptic functions through their addition formulae, \cite{Lagrange,Legendre}. Here, we follow a derivation given by Irwin\cite{irwin}, but with a slight modification, taking as a starting point the sine rule for spherical trigonometry,
\begin{equation}
\frac{\sin\alpha_i}{\sin\theta_{jk}}=\frac{\sin\alpha_j}{\sin\theta_{ik}}=\frac{\sin\alpha_k}{\sin\theta_{ij}}=k,
\end{equation}
where
\begin{equation}\label{eq:95}
k=\frac{\sin(\theta_{ij},\theta_{ik},\theta_{jk})}{\sin\theta_{ij}\sin\theta_{ik}\sin\theta_{jk}},
\end{equation}
and considering the expression
\begin{equation}
W=k^2\sin^2\theta_{ij}\sin^2\theta_{ik}\sin^2\theta_{jk}-\sin^2(\theta_{ij},\theta_{ik},\theta_{jk}).
\end{equation}
It follows that the derivative of $W$ with respect to one of the angles, $\theta_{ij}$, is
\begin{equation}
\begin{split}
\frac{\partial W}{\partial \theta_{ij}}&=-2\sin\theta_{ij}(\cos\theta_{ij}-\cos\theta_{ik}\cos\theta_{jk}-k^2\cos\theta_{ij}\sin^2\theta_{ik}\sin^2\theta_{jk}),\\
&=2\sin\theta_{ij}\sin\theta_{ik}\sin\theta_{jk}\cos\alpha_i\cos\alpha_j.
\end{split}
\end{equation}
If the radial angles $\theta_{ij}$, $\theta_{ik}$ and $\theta_{jk}$ all vary in such a way as to keep $k$ constant, then since this is equivalent to $W=0$, this implies
\begin{equation}\label{eq:diffrel} 
\cos\alpha_i\cos\alpha_j\mathrm{d}\theta_{ij}+\cos\alpha_i\cos\alpha_k\mathrm{d}\theta_{ik}+\cos\alpha_j\cos\alpha_k\mathrm{d}\theta_{jk}=0,
\end{equation}
which is of course equivalent to
\begin{equation}
\frac{\mathrm{d\theta_{ij}}}{\cos\alpha_k}+\frac{\mathrm{d\theta_{jk}}}{\cos\alpha_i}+\frac{\mathrm{d\theta_{ik}}}{\cos\alpha_j}=0.
\end{equation}
However, the sine rule gives
\begin{equation}
\cos\alpha_i=\pm\sqrt{1-k^2\sin^2\theta_{jk}},\, \mathrm{etc.},
\end{equation}
and so, as the signs of $\theta_{ij}$, $\theta_{ik}$ and $\theta_{jk}$ can be chosen arbitrarily, without loss of generality they can all be chosen to be positive, giving
\begin{equation}
\label{eq:int}
\frac{\mathrm{d\theta_{ij}}}{\sqrt{1-k^2\sin^2\theta_{ij}}}+\frac{\mathrm{d\theta_{ik}}}{\sqrt{1-k^2\sin^2\theta_{ik}}}+\frac{\mathrm{d\theta_{jk}}}{\sqrt{1-k^2\sin^2\theta_{jk}}}=0. 
\end{equation}
Writing $\theta_{ij}=\mathrm{am}(a_k)$, with
\begin{equation}
a_k=\int_{0}^{\mathrm{am}(a_k)}\frac{\mathrm{d}t}{\sqrt{1-k^2\sin^2t}},
\end{equation}
and similarly,  $\theta_{ik}=\mathrm{am}(a_j)$ and  $\theta_{jk}=\mathrm{am}(a_i)$, the integral of the differential relation, \eqref{eq:int}, implies that
\begin{equation}
a_i+a_j+a_k=\delta,\, \mathrm{constant},
\end{equation}
together with
\begin{equation}
\begin{split}
&\sin\theta_{ij}=\sin(\mathrm{am}(a_k))=\mathrm{sn}(a_k),\\
&\sin\theta_{ik}=\sin(\mathrm{am}(a_j))=\mathrm{sn}(a_j),\\
&\sin\theta_{jk}=\sin(\mathrm{am}(a_i))=\mathrm{sn}(a_i).
\end{split}
\end{equation}
Therefore, having introduced uniformising variables $a_i$, $i=1,2,3,$ associated with the three spherical angles $\theta_{jk}$, the various spherical trigonometric functions can be identified with the Jacobi elliptic functions by using the identifications
\begin{equation}\label{eq:106}
\sin(\theta_{jk})=\text{sn}(a_i;k) \iff \sin\alpha_i\equiv k\,\text{sn}(a_i;k),
\end{equation}
in which $k$ is the modulus of the elliptic function given by (\ref{eq:95}). These identifications, through the usual relations between the three Jacobi elliptic functions $\mathrm{sn}$,  $\mathrm{cn}$ and  $\mathrm{dn}$,
\begin{equation}\begin{array}{cc}
\mathrm{cn}^2(u;k)+\mathrm{sn}^2(u;k)=1, \hspace{10mm}\mathrm{dn}^2(u;k)+k^2\mathrm{sn}^2(u;k)=1,\\
\end{array}
\end{equation}
lead to
\begin{equation}\label{eq:108}
\cos\theta_{jk}=\text{cn}(a_i;k),\,\mathrm{and}\,\cos\alpha_i=\text{dn}(a_i;k),
\end{equation}
for $i,j,k$ cyclic.
Addition formulae follow readily from the various spherical trigonometric relations. In particular, the cosine and polar cosine rules yield the relations
\begin{equation}
\begin{array}{c}
\mathrm{cn}(a_i)=\mathrm{cn}(a_j)\mathrm{cn}(a_k)+\mathrm{sn}(a_j)\mathrm{sn}(a_k)\mathrm{dn}(a_i),\\
\mathrm{dn}(a_i)=-\mathrm{dn}(a_j)\mathrm{dn}(a_k)+k^2\mathrm{sn}(a_j)\mathrm{sn}(a_k)\mathrm{cn}(a_i),
\end{array}
\end{equation}
respectively. Solving these relations as functions of $a_i$ gives 
\begin{equation}
\mathrm{cn}(a_i)=\frac{\mathrm{dn}(a_j)\mathrm{dn}(a_k)\mathrm{sn}(a_j)\mathrm{sn}(a_k)-\mathrm{cn}(a_j)\mathrm{cn}(a_k)}{1-k^2\mathrm{sn}^2(a_j)\mathrm{sn}^2(a_k)},
\end{equation}
\begin{equation}
\mathrm{dn}(a_i)=\frac{\mathrm{dn}(a_j)\mathrm{dn}(a_k)-k^2\mathrm{cn}(a_j)\mathrm{cn}(a_k)\mathrm{sn}(a_j)\mathrm{sn}(a_k)}{1-k^2\mathrm{sn}^2(a_j)\mathrm{sn}^2(a_k)},
\end{equation}
the addition formulae for the Jacobi elliptic functions with
\begin{equation}
\begin{array}{c}
\mathrm{cn}(a_i)=-\mathrm{cn}(a_j+a_k),\\
\mathrm{dn}(a_i)=\mathrm{dn}(a_j+a_k).\\
\end{array}
\end{equation}
The Jacobi elliptic functions are periodic in $K(k)$ and $K'(k)$ as
\begin{equation}
\begin{array}{c}
\mathrm{cn}(a_j+2mK+2niK';k)=(-1)^m\mathrm{cn}(a_j;k),\\
\mathrm{dn}(a_j+2mK+2niK';k)=(-1)^n\mathrm{dn}(a_j;k),\\
\end{array}
\end{equation}
for all $a_j$, and so we must restrict $\delta$ such that $a_i+a_j+a_k=2K+2iK'$. The addition formula for $\mathrm{sn}$ follows as a consequence. From these addition formulae a number of intertwined addition relations follow
\begin{equation}
\text{cn}(a_j)\text{sn}(a_i+a_j)=\text{sn}(a_i)\text{dn}(a_j)+\text{dn}(a_i)\text{sn}(a_j)\text{cn}(a_i+a_j),
\end{equation}
\begin{equation}
\text{dn}(a_i)\text{sn}(a_i+a_j)=\text{cn}(a_i)\text{sn}(a_j)+\text{sn}(a_i)\text{cn}(a_j)\text{dn}(a_i+a_j),
\end{equation}
\begin{equation}
\text{sn}(a_i)\text{cn}(a_i+a_j)+\text{sn}(a_j)\text{dn}(a_i+a_j)=\text{cn}(a_i)\text{dn}(a_j)\text{sn}(a_i+a_j).
\end{equation}
Denoting
\begin{equation}
w_1(a_j)=\frac{\rho}{\text{sn}(a_j)},\,w_2(a_j)=\frac{\rho\,\text{cn}(a_j)}{\text{sn}(a_j)},\,w_3(a_j)=\frac{\rho\,\text{dn}(a_j)}{\text{sn}(a_j)},
\end{equation}
these addition formulae may be rewritten as
\begin{equation}
w_i(a_j)w_j(a_i)+w_j(a_k)w_k(a_j)+w_k(a_i)w_i(a_k)=0,\, \,i,j,k=1,2,3,
\end{equation}
with $a_1+a_2+a_3=2K+2iK'$. Note that this relation is the functional Yang-Baxter relation\cite{Baxter}.

Now, by considering the reciprocal of $k$,
\begin{equation}
\frac{\sin\theta_{jk}}{\sin\alpha_i}=\frac{\sin\theta_{ik}}{\sin\alpha_j}=\frac{\sin\theta_{ij}}{\sin\alpha_k}=\frac{1}{k},
\end{equation}
following Irwin's method, we can derive a similar, yet slightly simpler, relationship. First, recall
\begin{equation}
k=\frac{\sin\alpha_k}{\sin\theta_{ij}}=\frac{\sin\alpha_i\sin\alpha_j\sin\alpha_k}{\sqrt{1-\cos^2\alpha_i-\cos^2\alpha_j-\cos^2\alpha_k-2\cos\alpha_i\cos\alpha_j\cos\alpha_k}},
\end{equation}
and let
\begin{equation}
\overline{W}=\frac{1}{k^2}\sin^2\alpha_i\sin^2\alpha_j\sin^2\alpha_k-(1-\cos^2\alpha_i-\cos^2\alpha_j-\cos^2\alpha_k-2\cos\alpha_i\cos\alpha_j\cos\alpha_k).
\end{equation}
Now, consider the derivative $\partial\overline{W}/\partial\alpha_i$,
\begin{equation}
\partial\overline{W}/\partial\alpha_i=-2\cos\theta_{ij}\cos\theta_{ik}\sin\alpha_i\sin\alpha_j\sin\alpha_k.
\end{equation}
Varying $\alpha_i$, $\alpha_j$ and $\alpha_k$ as to keep $k$  constant, it follows that
\begin{equation}\label{eq:int'}
\frac{\mathrm{d}\alpha_i}{\sqrt{1-\frac{1}{k^2}\sin^2\alpha_i}}+\frac{\mathrm{d}\alpha_j}{\sqrt{1-\frac{1}{k^2}\sin^2\alpha_i}}+\frac{\mathrm{d}\alpha_k}{\sqrt{1-\frac{1}{k^2}\sin^2\alpha_i}}=0.
\end{equation}
Writing $\alpha_i=\mathrm{am}(b_i)$, with
\begin{equation}
b_i=\int_{0}^{\mathrm{am}(b_i)}\frac{\mathrm{d}t}{\sqrt{1-\frac{1}{k^2}\sin^2t}},
\end{equation}
and similarly, $\alpha_j=\mathrm{am}(b_j)$ and $\alpha_k=\mathrm{am}(b_k)$, the integral of (\ref{eq:int'}) implies
\begin{equation}
b_i+b_j+b_k=\gamma,\,\,\mathrm{constant},
\end{equation}
together with
\begin{equation}
\begin{split}
&\sin\alpha_i=\sin(\mathrm{am}(b_i))=\mathrm{sn}(b_i),\\
&\sin\alpha_j=\sin(\mathrm{am}(b_j))=\mathrm{sn}(b_j),\\
&\sin\alpha_k=\sin(\mathrm{am}(b_k))=\mathrm{sn}(b_k).\\
\end{split}
\end{equation}
Therefore, having introduced spherical angles $b_i$, $i=1,2,3$, this time associated with the three spherical angles $\alpha_i$, the various spherical trigonometric functions can be identified with the Jacobi elliptic functions via the identifications
\begin{equation}
\sin\alpha_i=\mathrm{sn}\left(b_i;\frac{1}{k}\right)\iff\sin\theta_{jk}=\frac{1}{k}\mathrm{sn}\left(b_i;\frac{1}{k}\right),
\end{equation}
in which $1/k$ is the modulus of the elliptic function. These identifications lead to
\begin{equation}
\cos\alpha_i=\mathrm{cn}\left(b_i;\frac{1}{k}\right),\,\,\mathrm{and}\,\,\cos\theta_{jk}=\mathrm{dn}\left(b_i;\frac{1}{k}\right).
\end{equation}
In this case, the Jacobi elliptic function addition formulae are satisfied providing we have
\begin{equation}
\begin{split}
&\mathrm{cn}(b_i)=\mathrm{cn}(b_j+b_k),\\
&\mathrm{dn}(b_i)=-\mathrm{dn}(b_j+b_k).
\end{split}
\end{equation}
The periodicity ensures $\gamma=2K$,
\begin{equation}
b_i+b_j+b_k=2K.
\end{equation}
Note that making the identification in this way results in a slightly simpler restriction, dependent only on $K$, and not also $K'$.

\subsection{Link between Hyperspherical Trigonometry and Elliptic Functions}

We now look to provide a similar link for the 4 dimensional hyperspherical case and elliptic functions following a similar procedure as before. We take as a starting point the hyperspherical sine rule relation,
\begin{equation}\label{eq:hypsin} 
\frac{\sin(\phi_{ik},\phi_{jk},\phi_{kl})}{\sin(\theta_{ij},\theta_{il},\theta_{jl})}=\frac{\sin^2(\theta_{ij},\theta_{ik},\theta_{il},\theta_{jk},\theta_{jl},\theta_{kl})}{\sin(\theta_{ij},\theta_{ik},\theta_{jk})\sin(\theta_{ij},\theta_{il},\theta_{jl})\sin(\theta_{ik},\theta_{il},\theta_{kl})\sin(\theta_{jk},\theta_{jl},\theta_{kl})}=k,
\end{equation}
from which we introduce
\begin{align}
W &=\sin^4(\theta_{ij},\theta_{ik},\theta_{il},\theta_{jk},\theta_{jl},\theta_{kl}) \nonumber \\ 
& \qquad -k^2\sin^2(\theta_{ij},\theta_{ik},\theta_{jk})\sin^2(\theta_{ij},\theta_{il},\theta_{jl})\sin^2(\theta_{ik},\theta_{il},\theta_{kl})\sin^2(\theta_{jk},\theta_{jl},\theta_{kl}).
\end{align}
Taking $W$'s derivative with respect to $\theta_{ij}$ gives
\begin{equation}\label{eq:diff}
\begin{split}
\frac{\partial W}{\partial\theta_{ij}}=&2\sin^2(\theta_{ij},\theta_{ik},\theta_{il},\theta_{jk},\theta_{jl},\theta_{kl})\left(\frac{\partial\sin^2(\theta_{ij},\theta_{ik},\theta_{il},\theta_{jk},\theta_{jl},\theta_{kl})}{\partial\theta_{ij}}\right)\\
&-k^2\left(\frac{\partial\sin^2(\theta_{ij},\theta_{ik},\theta_{jk})}{\partial\theta_{ij}}\right)\sin^2(\theta_{ij},\theta_{il},\theta_{jl})\sin^2(\theta_{ik},\theta_{il},\theta_{kl})\sin^2(\theta_{jk},\theta_{jl},\theta_{kl})\\
&-k^2\sin^2(\theta_{ij},\theta_{ik},\theta_{jk})\left(\frac{\partial\sin^2(\theta_{ij},\theta_{il},\theta_{jl})}{\partial\theta_{ij}}\right)\sin^2(\theta_{ik},\theta_{il},\theta_{kl})\sin^2(\theta_{jk},\theta_{jl},\theta_{kl}).
\end{split}
\end{equation}
Considering the derivatives of the squares of the triple sine functions, we have
\begin{equation}
\begin{split}
\frac{\partial\sin^2(\theta_{ij},\theta_{ik},\theta_{jk})}{\partial\theta_{ij}}&=2\cos\theta_{ij}\sin\theta_{ij}-2\sin\theta_{ij}\cos\theta_{ik}\cos\theta_{jk}\\
&=2\cos\alpha_k^{(ijk)}\sin\theta_{ij}\sin\theta_{ik}\sin\theta_{jk},
\end{split}
\end{equation}
and similarly,
\begin{equation}
\frac{\partial\sin^2(\theta_{ij},\theta_{il},\theta_{jl})}{\partial\theta_{ij}}=2\cos\alpha_l^{(ijl)}\sin\theta_{ij}\sin\theta_{il}\sin\theta_{jl}.
\end{equation}
We also have, for the six term sine function,
\begin{equation}
\frac{\partial\sin^2(\theta_{ij},\theta_{ik},\theta_{il},\theta_{jk},\theta_{jl},\theta_{kl})}{\partial\theta_{ij}}
=-2\sin\theta_{ij}\cos\phi_{kl}\sin(\theta_{ik},\theta_{il},\theta_{kl})\sin(\theta_{jk},\theta_{jl},\theta_{kl}).
\end{equation}
Substituting these back into (\ref{eq:diff}) gives
\begin{equation}\label{eq:diff'}
\begin{split}
\frac{\partial W}{\theta_{ij}}=&-2\sin\theta_{ij}\sin^2(\theta_{ij},\theta_{ik},\theta_{il},\theta_{jk},\theta_{jl},\theta_{kl})\\
&\times\Bigg(2\cos\phi_{kl}\sin(\theta_{ik},\theta_{il},\theta_{kl})\sin(\theta_{jk},\theta_{jl},\theta_{kl})\\
&+\sin^2(\theta_{ij},\theta_{ik},\theta_{il},\theta_{jk},\theta_{jl},\theta_{kl})\left(
\frac{\cos\alpha_k^{(ijk)}\sin\theta_{ik}\sin\theta_{jk}}{\sin^2(\theta_{ij},\theta_{ik},\theta_{jk})}+\frac{\cos\alpha_l^{(ijl)}\sin\theta_{il}\sin\theta_{jl}}{\sin^2(\theta_{ij},\theta_{il},\theta_{jl})}
\right)\Bigg).
\end{split}
\end{equation}
In order to simplify this derivative, we now consider the Desnanot-Jacobi determinantal identity,
\begin{equation}
\begin{split}
&\left|\begin{array}{cc}
\cos\theta_{jk} & \cos\theta_{jl}\\
\cos\theta_{kl} & 1\\
\end{array}\right|
\left|\begin{array}{cccc}
1 &  \cos\theta_{ij} & \cos\theta_{ik} & \cos\theta_{il}\\
\cos\theta_{ij} & 1 & \cos\theta_{jk} & \cos\theta_{jl}\\
\cos\theta_{ik} & \cos\theta_{jk} & 1 & \cos\theta_{kl}\\
\cos\theta_{il} & \cos\theta_{jl} & \cos\theta_{kl} & 1\\
\end{array}\right|\\
&\qquad =\left|\begin{array}{ccc}
1 & \cos\theta_{jk} & \cos\theta_{jl}\\
\cos\theta_{jk} & 1 & \cos\theta_{kl}\\
\cos\theta_{jl} & \cos\theta_{kl} & 1\\
\end{array}\right|
\left|\begin{array}{ccc}
1 & \cos\theta_{ik} & \cos\theta_{il}\\
\cos\theta_{ij} & \cos\theta_{jk} & \cos\theta_{jl}\\
\cos\theta_{il} & \cos\theta_{kl} & 1\\
\end{array}\right| \nonumber \\ 
& \qquad\qquad -
\left|\begin{array}{ccc}
\cos\theta_{ij} & \cos\theta_{ik} & \cos\theta_{il}\\
1 & \cos\theta_{jk} & \cos\theta_{jl}\\
\cos\theta_{jl} & \cos\theta_{kl} & 1\\
\end{array}\right|
\left|\begin{array}{ccc}
\cos\theta_{ij} & \cos\theta_{jk} & \cos\theta_{jl}\\
\cos\theta_{ik} & 1 & \cos\theta_{kl}\\
\cos\theta_{il} & \cos\theta_{kl} & 1\\
\end{array}\right|,
\end{split}
\end{equation}
which, reduces neatly to
\begin{align}
& \frac{\sin^2(\theta_{ij},\theta_{ik},\theta_{il},\theta_{jk},\theta_{jl},\theta_{kl})\cos\alpha_l^{(jkl)}\sin\theta_{jl}\sin\theta_{kl}}{\sin^2(\theta_{jk},\theta_{jl},\theta_{kl})} 
\nonumber \\ 
& \qquad =
\sin(\theta_{ij},\theta_{il},\theta_{jl})\sin(\theta_{ik},\theta_{il},\theta_{kl})\left(\cos\phi_{jl}\cos\phi_{kl}-
\cos\phi_{il}\right).
\end{align}
Substituting this expression into (\ref{eq:diff'}), the derivative reduces to 
\begin{align}
\frac{\partial W}{\partial\theta_{ij}}& =-2\sin\theta_{ij}\sin^2(\theta_{ij},\theta_{ik},\theta_{il},\theta_{jk},\theta_{jl},\theta_{kl}) \nonumber \\ 
& \qquad \times \sin(\theta_{ik},\theta_{il},\theta_{kl})\sin(\theta_{jk},\theta_{jl},\theta_{kl})\left(\cos\phi_{il}\cos\phi_{jl}+\cos\phi_{ik}\cos\phi_{jk}\right),
\end{align}
which, using (\ref{eq:sinn}), may be rewritten as
\begin{equation}
\begin{split}
\frac{\partial W}{\partial\theta_{ij}}& =-2\sin(\theta_{ij},\theta_{ik},\theta_{il},\theta_{jk},\theta_{jl},\theta_{kl})
\sin(\theta_{ik},\theta_{il},\theta_{kl})\sin(\theta_{jk},\theta_{jl},\theta_{kl})  \\ 
& \quad \times \sin(\theta_{ij},\theta_{il},\theta_{jl})\sin(\theta_{ij},\theta_{ik},\theta_{jk})
\cdot\sin\phi_{ij}
\left(\cos\phi_{il}\cos\phi_{jl}+\cos\phi_{ik}\cos\phi_{jk}\right).
\end{split}
\end{equation}

If we vary the six `$\theta$'s such that k remains constant, then since this is equivalent to $W=0$, setting $\mathrm{d}W=0$ implies
\begin{equation}\label{eq:perm} 
\sum_{perm}\sin\phi_{ij}\left(\cos\phi_{il}\cos\phi_{jl}+\cos\phi_{ik}\cos\phi_{jk}\right)\mathrm{d}\theta_{ij}=0, 
\end{equation}
where $perm$ denotes all six index pairs $ij,ik,il,jk,jl,kl$. In spite of the similarity between \eqref{eq:perm} and the 
analogous formula \eqref{eq:diffrel} in the spherical case, the former cannot be simplified along the same line 
as in the spherical case. A way out, albeit not quite satisfactory, is to  
focus on a special symmetric case, in which the opposite dihedral angles are equivalent, i.e. 
\[ \phi_{ij}=\phi_{kl}, \quad \phi_{ik}=\phi_{jl}, \quad \phi_{il}=\phi_{jk}\ ,  \]
which we will refer to as the \textit{symmetric hyperspherical tetrahedron}.  

Restricting ourselves to this symmetric case, the differential relation \eqref{eq:perm} reduces to 
\begin{equation}\label{eq:symperm}
\sum_{perm} \sin\phi_{ij}\,\cos\phi_{ik}\,\cos\phi_{jk}\,\mathrm{d}\theta_{ij}=0, 
\end{equation}   
or equivalently 
\begin{equation}\label{eq:symperm2}
\sum_{perm} \frac{\sin\phi_{ij}}{\cos\phi_{ij}}\,\mathrm{d}\theta_{ij}=0.  
\end{equation} 
The restrictions on the tetrahedron being symmetric also simplify the hyperspherical sine rule 
\eqref{eq:sin}, reducing it to 
\begin{equation}\label{eq:redsin} 
\sin^2\phi_{ij}=k_H\sin^2\theta_{ij}\ , 
\end{equation} 
and substituting this into \eqref{eq:symperm2} we obtain 
\begin{equation}\label{eq:symperm3}
\sum_{perm} \frac{\sqrt{k_H}\,\sin\theta_{ij}}{\sqrt{1-k_H \sin^2\theta_{ij}}}\,\mathrm{d}\theta_{ij}=0,  
\end{equation} 
which yields a sum of elliptic integrals of the form 
\[ \int \frac{\sqrt{k_H}\,u}{\sqrt{(1-k_Hu^2)(1-u^2)}}\,\mathrm{d}u\ . \] 
Thus, for the case of a symmetric hyperspherical tetrahedron, the hyperspherical trigonometric functions 
which in a sense uniformise the hyperspherical sine rule \eqref{eq:hypsin} are again associated with elliptic 
functions, suggesting that for the general case this remains true. However, in that general case the 
connection becomes a bit more entangled and require the introduction of a generalised class of elliptic functions, 
introduced by Pawellek, \cite{pawellek}.

\section{Generalised Jacobi Elliptic Functions and their link to Hyperspherical Trigonometry}

In this section we review a class of `generalised' Jacobi elliptic functions as introduced by Pawellek\cite{pawellek}, and provide a link between these functions and the formulae of hyperspherical trigonometry. 

\subsection{Generalised Jacobi Elliptic Functions}
In \cite{pawellek}, Pawellek introduced the generalised Jacobi elliptic functions $s(u,k_1,k_2)$, $c(u,k_1,k_2)$, $d_1(u,k_1,k_2)$ and $d_2(u,k_1,k_2)$. These functions are algebraic curves with two distinct moduli, and are based upon Jacobi's elliptic functions, \cite{Jacobi}. After assuming without loss of generality that $1>k_1>k_2>0$ as moduli parameters, they are defined as the inversion of the hyperelliptic integrals 
\bse \begin{equation}
u(x,k_1,k_2)=\int_0^{x=s(u)}\frac{\mathrm{d}t}{\sqrt{(1-t^2)(1-k_1^2t^2)(1-k_2^2t^2)}},
\end{equation}
\begin{equation}
u(x,k_1,k_2)=\int_{x=c(u)}^1\frac{\mathrm{d}t}{\sqrt{(1-t^2)(k_1'^2+k_1^2t^2)(k_2'^2+k_2^2t^2)}},
\end{equation}\begin{equation}
u(x,k_1,k_2)=k_1\int_{x=d_1(u)}^1\frac{\mathrm{d}t}{\sqrt{(1-t^2)(t^2-k_1'^2)(k_1^2-k_2^2+k_2^2t^2)}},
\end{equation}\begin{equation}
u(x,k_1,k_2)=k_2\int_{x=d_2(u)}^1\frac{\mathrm{d}t}{\sqrt{(1-t^2)(t^2-k_2'^2)(k_2^2-k_1^2+k_1^2t^2)}},
\end{equation}\ese 
respectively, with $k_i'=\sqrt{1-k_i^2}$.
These generalised Jacobi functions are associated with an algebraic curve of the form
\begin{equation}
\mathcal{C}:y^2=(1-x^2)(1-k_1^2x^2)(1-k_2^2x^2),
\end{equation}
which can be modeled as a Riemann surface of genus 2, but is, in fact, a double cover of an elliptic curve $\mathcal{E}$,
\begin{equation}
\mathcal{E}:w^2=z(1-z)(1-k_1^2z)(1-k_2^2z),
\end{equation}
via the cover map $\mathcal{C} \overset{\pi}{\rightarrow}\mathcal{E}$,
\begin{equation}
(w,z)=\pi(y,x)=(xy,x^2).
\end{equation}
The functions satisfy a number of identities,
\begin{equation}\label{81}
\begin{split}
&c^2(u)=1-s^2(u),\hspace{10 mm} d_1^2(u)=1-k_1^2s^2(u),\hspace{10 mm} d_2^2(u)=1-k_2^2s^2(u),\\
&d_i^2(u)-k_i^2c^2(u)=1-k_i^2,\hspace{5 mm}i=1,2;\hspace{10 mm}k_1^2d_2^2(u)-k_2^2d_1^2(u)=k_1^2-k_2^2,\\
\end{split}
\end{equation}
and are related to the Jacobi elliptic functions by
\bse \begin{align}
s(u,k_1,k_2)&=\frac{\mathrm{sn}(k_2'u,\kappa)}{\sqrt{k_2'^2+k_2^2\mathrm{sn}^2(k_2'u,\kappa)}}, & 
c(u,k_1,k_2)&=\frac{k_2'\mathrm{cn}(k_2'u,\kappa)}{\sqrt{1-k_2^2\mathrm{cn}^2(k_2'u,\kappa)}},\\
d_1(u,k_1,k_2)&=\frac{\sqrt{k_1^2-k_2^2}\mathrm{dn}(k_2'u,\kappa)}{\sqrt{k_1^2-k_2^2\mathrm{dn}^2(k_2'u,\kappa)}}, & 
d_2(u,k_1,k_2)&=\frac{\sqrt{k_1^2-k_2^2}}{\sqrt{k_1^2-k_2^2\mathrm{dn}^2(k_2'u,\kappa)}},
\end{align}\ese 
with
\begin{equation}
\kappa=\frac{k_1^2-k_2^2}{1-k_2^2}.
\end{equation}
The first derivatives of these functions are given by
\begin{subequations}\label{eq:Pawders} 
\begin{align}
s'(u)&=c(u)d_1(u)d_2(u), & c'(u)&=-s(u)d_1(u)d_2(u),\\
d_1'(u)&=-k_1^2s(u)c(u)d_2(u), & d_2'(u)&=-k_2^2s(u)c(u)d_1(u).
\end{align}
\end{subequations}
These generalised Jacobi functions with moduli $k_1$ and $k_2$ satisfy the following addition formulae:
\bse \begin{equation}\label{85}
s(u\pm v)
=\frac{s(u)d_2(u)c(v)d_1(v)\pm s(v)d_2(v)c(u)d_1(u)}
{\sqrt{\left[d_2^2(u)d_2^2(v)-\kappa^2k_2^{'4}s^2(u)s^2(v)\right]^2+k^2_2\left[s(u)d_2(u)c(v)d_1(v)\pm s(v)d_2(v)c(u)d_1(u)\right]}},
\end{equation}
\begin{equation}\label{86}
c(u\pm v)=\frac{c(u)d_2(u)c(v)d_2(v)\mp k_2^{'2}s(u)d_1(u)s(v)d_1(v)}
{\sqrt{\left[d_2^2(u)d_2^2(v)-\kappa^2k_2^{'4}s^2(u)s^2(v)\right]^2+k^2_2\left[s(u)d_2(u)c(v)d_1(v)\pm s(v)d_2(v)c(u)d_1(u)\right]}},
\end{equation}
\begin{equation}\label{87}
d_1(u\pm v)=\frac{d_1(u)d_2(u)d_1(v)d_2(v)\mp\kappa^2 k_2^{'2}s(u)c(u)s(v)c(v)}
{\sqrt{\left[d_2^2(u)d_2^2(v)-\kappa^2k_2^{'4}s^2(u)s^2(v)\right]^2+k^2_2\left[s(u)d_2(u)c(v)d_1(v)\pm s(v)d_2(v)c(u)d_1(u)\right]}},
\end{equation}
\begin{equation}\label{88}
d_2(u\pm v)=\frac{d_2^2(u)d_2^2(v)-\kappa^2k_2^{'4}s^2(u)s^2(v)}
{\sqrt{\left[d_2^2(u)d_2^2(v)-\kappa^2k_2^{'4}s^2(u)s^2(v)\right]^2+k^2_2\left[s(u)d_2(u)c(v)d_1(v)\pm s(v)d_2(v)c(u)d_1(u)\right]}}.
\end{equation}\ese 
These formulae follow from the addition formulae for the standard Jacobi elliptic functions.

\subsection{Link between Hyperspherical Trigonometry and the Generalised Jacobi Elliptic Functions}

We are also able to show a connection between the generalised Jacobi elliptic functions and hyperspherical trigonometry. As the generalised Jacobi elliptic functions are dependent on two moduli, then so must the hyperspherical trigonometry be. It is the interplay of these moduli that govern the connection. Recall, that for a hyperspherical tetrahedron, we have
\begin{equation}
\sin\alpha_i^{(ijk)}=k_1\sin\theta_{jk},
\end{equation}
and
\begin{equation}
\sin\phi_{il}=k_2\sin\alpha_i^{(ijk)},
\end{equation}
for all $i,j,k,l=1,2,3,4$, where
\begin{equation}
k_1=\frac{\sin(\theta_{ij},\theta_{ik},\theta_{jk})}{\sin\theta_{ij}\sin\theta_{ik}\sin\theta_{jk}},
\end{equation}
and
\begin{equation}
k_2=\frac{\sin(\alpha_i^{(ijk)},\alpha_i^{(ijl)},\alpha_i^{(ikl)})}{\sin\alpha_i^{(ijk)}\sin\alpha_i^{(ijl)}\sin\alpha_i^{(ikl)}},
\end{equation}
respectively. Separately, these each have the same link to the Jacobi elliptic functions as the functions of a spherical triangle. They imply that
\begin{equation}
\sin\phi_{il}=k_1k_2\sin\theta_{jk}.
\end{equation}
Introducing uniformising variables $a_{jk}$, $j,k=1,2,3,4$ with $k>j$, associated with the six $\theta_{jk}$, such that
\begin{equation}
a_{jk}=\int_{0}^{\theta_{jk}}\frac{\mathrm{d}t}{\sqrt{(1-k_1^2\sin^2t)(1-k_1^2k_2^2\sin^2t)}},
\end{equation}
with $\theta_{jk}=\mathrm{am}(a_{jk})$, the various hyperspherical trigonometric functions can be identified with Pallewek's generalised Jacobi elliptic functions via the identifications
\begin{equation}
s(a_{jk})\equiv\sin\theta_{jk}\iff k_1s(a_{jk})=\sin\alpha_i^{(ijk)}\iff k_1k_2s(a_{jk})=\sin\phi_{il},
\end{equation}
in which, $k_1$ and $k_1k_2$ are the two moduli of the functions, and are as given earlier. From these and the identities listed previously it follows that
\begin{equation}
c(a_{jk})=\cos\theta_{jk},\; d_1(a_{jk})=\cos\alpha_i^{(ijk)},\; d_2(a_{jk})=\cos\phi_{il}.
\end{equation}
Note that under these identifications the identities (\ref{81}) for the generalised Jacobi functions are still obeyed.

The modulus $k_1$ governs the relations between the spherical trigonometry of each of the faces, and the elliptic functions. 
The second modulus, $k_1k_2$, acts as an overall modulus for the spherical tetrahedron. Note its dependence on the first 
modulus, $k_1$. These moduli remain constant, so that if one of the vertices of the spherical tetrahedron were moved, the 
others must be adjusted to compensate. These movements result in a change to the faces of the tetrahedron, but not to $k_1$, 
the facial modulus. 

\section{Examples}

As an example of where these hyperspherical formulae and their connection with the generalised Jacobi elliptic functions may be used,  we consider a multidimensional generalisation of the Euler top in relation to Nambu mechanics, \cite{Nambu}. We provide a link between this example and the double elliptic model.
\subsection{Multidimensional Euler Top}

The Euler top describes a free top, moving in the absence of any external torque. In their paper on Nambu Quantum mechanics, \cite{Minic}, Minic and Tze reformulated the mechanics of this top in terms of a Nambu system of order three, which they then solved in terms of the Jacobi elliptic functions. We consider a higher dimensional analogue of their example, formulating the top's mechanics in terms of a Nambu system of order four, and then solving this system in terms of the Generalised Jacobi elliptic functions.

The Euler top describing the motion in terms of the angular momentum vector 
$\boldsymbol{M}=(M_1(t),M_2(t),M_3(t)$, admits two conserved quantities: the square of the angular momentum, and the 
total energy respectively 
\begin{equation}
H_1=\frac{1}{2}\sum_{i=1}^{3}M_i^2, \quad {\rm and}\quad H_2=\frac{1}{2}\sum_{i=1}^{3}\frac{1}{I_i}M_i^2,
\end{equation} 
where $I_i$ denote the principal moments of inertia. 
The authors of \cite{Minic} take these both to be Hamiltonians for the the system, and consider, following \cite{takhtajan}, the action 
\begin{equation}\label{eq:action} 
S=\int M_1 \mathrm{d}M_2\wedge \mathrm{d}M_3 -H_1\mathrm{d}H_2\wedge \mathrm{d}t.
\end{equation}
whose critical trajectories under $\delta S=0$ yields 
the equations of motion 
\begin{equation}\label{eq:Nambu-eqs} 
\frac{dM_i}{dt}=\{M_i,H_1,H_2\},\,i=1,2,3,
\end{equation}
follow. Here $\{H_1,H_2,M_i\}$ is the Nambu 3-bracket, a ternary bracket, introduced first in \cite{Nambu}, satisfying the following axioms for Nambu brackets as given by Takhtajan, \cite{takhtajan}:
\begin{itemize}
\item
Skew-symmetry
\begin{equation}
\{A_1,A_2,A_3\}=(-1)^{\epsilon(p)}\{A_{p(1)},A_{p(2)},A_{p(3)}\},
\end{equation}
where $p(i)$ is the permutation of indices and $\epsilon(p)$ is the parity of the permutation.
\item
Derivation in each entry (the Leibniz rule)
\begin{equation}
\{A_1A_2,A_3,A_4\}=A_1\{A_2,A_3,A_4\}+\{A_1,A_3,A_4\}A_2.
\end{equation}
\item
Fundamental Identity (analogue of Jacobi identity)
\begin{equation}
\begin{split}
\left\{\left\{A_1,A_2,A_3\right\},A_4,A_5\right\}&+\left\{A_3,\left\{A_1,A_2,A_4\right\},A_5\right\}+\left\{A_3,A_4,\left\{A_1,A_2,A_5\right\}\right\}\\
&=\left\{A_1,A_2,\left\{A_3,A_4,A_5\right\}\right\}.
\end{split}
\end{equation}
\end{itemize}
Note the similarity between the Fundamental Identity and the vectorial identity (\ref{9}).
Taking $I_3>I_2>I_1$ without loss of generality, the canonical 
Nambu bracket $\{M_1,M_2,M_3\}=1$ in \eqref{eq:Nambu-eqs} 
yields the equations of motions as
\bse 
\begin{equation}
\frac{dM_1}{dt}=\left(\frac{1}{I_3}-\frac{1}{I_2}\right)M_2M_3,
\end{equation}
\begin{equation}
\frac{dM_2}{dt}=\left(\frac{1}{I_1}-\frac{1}{I_3}\right)M_1M_3,
\end{equation}
\begin{equation}
\frac{dM_3}{dt}=\left(\frac{1}{I_2}-\frac{1}{I_1}\right)M_1M_2,
\end{equation}\ese 
which have solutions of the form
\bse \begin{equation}
M_1(t)=A_1\mathrm{sn}(\lambda(t-t_0);k),
\end{equation}
\begin{equation}
M_2(t)=A_2\mathrm{cn}(\lambda(t-t_0);k),
\end{equation}
\begin{equation}
M_3(t)=A_3\mathrm{dn}(\lambda(t-t_0);k),
\end{equation}\ese 
with a velocity $\lambda$, and  
where the amplitudes can be computed in terms of the Hamiltonians $H_1$ and $H_2$ as   
\[ A_1^2=\frac{2I_1(H_1-I_3H_2)}{I_1-I_3}\ , \quad 
A_2^2=\frac{2I_2(H_1-I_3H_2)}{I_2-I_3}\ , \quad 
A_3^2=\frac{2I_3(H_1-I_2H_2)}{I_3-I_2}\  , \] 
where $H_1$ and $H_2$ are naturally constants of the motion. With the following relation for the modulus of the Jacobi elliptic functions: 
\begin{equation}
k^2=\frac{I_2-I_1}{I_3-I_1}\,\frac{H_1-I_3H_2}{H_1-I_2H_2}, 
\end{equation}
the amplitudes $A_i$ can be cast in the slightly more symmetric form: 
\[ A_1^2=\frac{-I_1^2I_2I_3}{(I_2-I_1)(I_3-I_1)}k^2\lambda^2\ , \quad 
A_2^2=\frac{I_1I_2^2I_3}{(I_1-I_2)(I_3-I_2)}k^2\lambda^2\ , \quad 
A_3^2=\frac{I_1I_2I_3^2}{(I_1-I_3)(I_2-I_3)}\lambda^2\ . \] 
Using Irwin's correspondence, \cite{irwin}, these solutions may be reparameterised in terms of spherical trigonometry. Note that in this case $\theta_{ij}=\alpha_k$ for all $i,j,k=1,2,3$.

A generalisation of this method can be made in higher dimensions, and in the case of a four dimensional space, we formulate the mechanics of the top in terms of a Nambu mechanical system of order four, which we then solve using Pawellek's generalised Jacobi elliptic functions discussed in the previous section. 
Consider a 4-dimensional extension of the action 
\eqref{eq:action} given by 
\begin{equation}
S=\int M_1\mathrm{d}M_2\wedge \mathrm{d}M_3\wedge \mathrm{d}M_4-H_1\mathrm{d}H_2\wedge \mathrm{d}H_3\wedge \mathrm{d}t,
\end{equation}
which describes a four-dimensional spinning top, with three 
Hamiltonians $H_1,$ $H_2$ and $H_3$, given by
\begin{equation}
H_1=\frac{1}{2}\sum_{i=1}^{4}M_i^2,
\quad 
H_2=\frac{1}{2}\sum_{i=1}^{4}\alpha_iM_i^2,
\quad 
H_3=\frac{1}{2}\sum_{i=1}^{4}\beta_iM_i^2,
\end{equation}
with $\alpha_i$ and $\beta_i$ constants. The top's equations of motion are given by the Nambu-Poisson brackets
\begin{equation}
\frac{dM_i}{dt}=\{M_i,H_1,H_2,H_3\},\quad i=1,\cdots,4\ , 
\end{equation}
where for functions $F,G,H,I$ of the variables $M_i$ we have 
\begin{equation}
\{F,G,H,I\}=\epsilon^{ijkl}\partial_{M_i}F\partial_{M_j}G\partial_{M_k}H\partial_{M_l}I.
\end{equation}
This gives the intertwined differential system of four variables
\bse\begin{equation}
\dot{M}_1=\left(\alpha_2\beta_3-\alpha_3\beta_2+\alpha_3\beta_4-\alpha_4\beta_3+\alpha_4\beta_2-\alpha_2\beta_4\right)M_2M_3M_4,
\end{equation}
\begin{equation}
\dot{M}_2=\left(\alpha_1\beta_4-\alpha_4\beta_1+\alpha_3\beta_1-\alpha_1\beta_3+\alpha_4\beta_3-\alpha_3\beta_4\right)M_1M_3M_4,
\end{equation}
\begin{equation}
\dot{M}_3=\left(\alpha_1\beta_2-\alpha_2\beta_1+\alpha_2\beta_4-\alpha_4\beta_2+\alpha_4\beta_1-\alpha_1\beta_4\right)M_1M_2M_4,
\end{equation}
\begin{equation}
\dot{M}_4=\left(\alpha_1\beta_3-\alpha_3\beta_1+\alpha_2\beta_1-\alpha_1\beta_2+\alpha_3\beta_2-\alpha_2\beta_3\right)M_1M_2M_3.
\end{equation}\ese 
These equations can be integrated in closed form in terms of the generalised Jacobi elliptic functions $s,$ $c,$ $d_1$ and $d_2,$ giving
\begin{equation}
\begin{array}{c}
M_1(t)=A_1s(\lambda(t-t_0);k_1,k_2),\\
M_2(t)=A_2c(\lambda(t-t_0);k_1,k_2),\\
M_3(t)=A_3d_1(\lambda(t-t_0);k_1,k_2),\\
M_4(t)=A_4d_2(\lambda(t-t_0);k_1,k_2),\\
\end{array}
\end{equation}
with $A_i,$ $\lambda$ and $t_0$ all constants. From \eqref{eq:Pawders} they satisfy 
\begin{eqnarray*}
\frac{A_1\lambda}{A_2A_3A_4}&=&\alpha_2\beta_3-\alpha_3\beta_2+\alpha_3\beta_4-\alpha_4\beta_3+\alpha_4\beta_2-\alpha_2\beta_4,\\
-\frac{A_2\lambda}{A_1A_3A_4}&=&\alpha_1\beta_4-\alpha_4\beta_1+\alpha_3\beta_1-\alpha_1\beta_3+\alpha_4\beta_3-\alpha_3\beta_4,\\
-\frac{k_1^2A_3\lambda}{A_1A_2A_4}&=&\alpha_1\beta_2-\alpha_2\beta_1+\alpha_2\beta_4-\alpha_4\beta_2+\alpha_4\beta_1-\alpha_1\beta_4,\\
-\frac{k_2^2A_4\lambda}{A_1A_2A_3}&=&\alpha_1\beta_3-\alpha_3\beta_1+\alpha_2\beta_1-\alpha_1\beta_2+\alpha_3\beta_2-\alpha_2\beta_3,\\
\end{eqnarray*}
and where the moduli $k_1,k_2$ can be expressed in terms of the 
conserved quantities $H_1,H_2,H_3$. 
Consequently, the motion of the top may be entirely parameterised in terms of the generalised Jacobi elliptic functions, and therefore, through the link with hyperspherical trigonometry, the angles of hyperspherical tetrahedra. Note again that this identification ensures that $\theta_{ij}=\alpha_k=\phi_{kl}$, for all $i,j,k,l=1,2,3,4$.

\subsection{Double Elliptic Systems (DELL)}
The so-called DELL, or double-elliptic, model is a conjectured 
generalisation of the Calogero-Moser and Ruijsenaars-Schneider models 
(integrable many-body systems), which is elliptic in both the momentum 
and position variables. So far, only the 2-particle model (reducing to 
one degree of freedom) has been explicitly 
constructed, \cite{braden,braden2}. A possible Hamiltonian for the three-particle model was later suggested in terms of Riemann theta 
functions, \cite{Aminov,MirMor}, supported by numerical evidence in 
\cite{MirMor}, although it remained to be proven. 

The Dell Hamiltonian for this 2-particle model,
\begin{equation}\label{eq:DELLHam} 
H(P,Q)=\alpha(Q)\mathrm{cn}\left(P\sqrt{k'^2+k^2\alpha^2(Q)};\frac{k\alpha(Q)}{\sqrt{k'^2+k^2\alpha^2(Q)}}\right),
\end{equation}
with
\begin{equation}
\alpha^2(Q)=\alpha^2_{rat}(Q)=1-\frac{2g^2}{Q^2},
\end{equation}
can easily be parameterised in terms of Pawellek's generalised Jacobi elliptic functions using the identity
\begin{equation}
\mathrm{cn}(k'_2u,\kappa)=\frac{c(u;k_1,k_2)}{d_2(u;k_1,k_2)},\quad \kappa^2=\frac{k_1^2-k_2^2}{{k'}_2^2}\  , 
\end{equation}
given by Pawellek. From this, it follows that the Hamiltonian takes the much neater form
\begin{equation}
H(P,Q)=\alpha(Q)\frac{c(P;k_1,k_2)}{d_2(P;k_1,k_2)},
\end{equation}
with
\begin{equation}
k_1=k, \quad 
k_2=k\sqrt{(1-\alpha^2(Q))}.
\end{equation}
The Hamilton equations of motion read
\bse\label{eq:Hameqs}
\begin{eqnarray} 
\dot{P} &=& -\frac{\partial H}{\partial Q}= -\alpha'_{ell}(Q;\widetilde{k}) \frac{c(P;k_1,k_2)}{d_2(P;k_1,k_2)}\ , \\ 
\dot{Q} &=& \frac{\partial H}{\partial P}= -{k'}_2^2\alpha'_{ell}(Q;\widetilde{k}) 
\frac{s(P;k_1,k_2)d_1(P;k_1,k_2)}{d_2^2(P;k_1,k_2)}\ , 
\end{eqnarray} 
\ese 
where 
\[ \alpha'_{ell}(Q;\widetilde{k})=\frac{d \alpha_{ell}(Q;\widetilde{k})}{dQ}=4g^2 {\rm sn}(Q;\widetilde{k})\, 
{\rm cn}(Q;\widetilde{k})\,{\rm dn}(Q;\widetilde{k})\ . 
\] 
Therefore, from this it makes sense to take a closer look at the model in terms of the generalised Jacobi elliptic functions. 

The following system of four quadrics in $\mathbb{C}^6$ provides a phase space for this two-body double elliptic system:
\begin{equation}
\begin{array}{lc}
Q_1: & x_1^2-x_2^2=1,\\
Q_2: & x_1^3-x_3^2=k^2,\\
Q_3: & -g^2x_1^2+x_4^2+x_5^2=1,\\
Q_4: & -g^2x_1^2+x_4^2+\widetilde{k}^{-2}x_6^2=\widetilde{k}^{-2},
\end{array}
\end{equation}
where $g$ is a coupling constant, and $x_i$, $i=1,2,3,4$ are affine coordinates. 
The first pair of equations provides the embedding of an elliptic curve, while the second pair 
provides a second elliptic curve which is locally fibred over the first. 
Note that when $g=0$, the system simply becomes two copies of elliptic curves embedded in $\mathbb{C}^3\times\mathbb{C}^3$. The Poisson brackets are given by 
\begin{equation}\label{eq:PBs} 
\{ x_i,x_j\} =\varepsilon_{ijk_1\cdots k_4}\frac{\partial Q_1}{\partial x_{k_1}}\frac{\partial Q_2}{\partial x_{k_2}}
\frac{\partial Q_3}{\partial x_{k_3}}\frac{\partial Q_4}{\partial x_{k_4}}\ , 
\end{equation}
with the polynomials $Q_i$ yielding the Casimirs of the algebra. The relevant Poisson brackets for this system of quadraics are then\cite{braden2}
\begin{equation}
\begin{aligned}[c]
\dot{x}_1&=\{x_1,x_5\}=x_2x_3x_4x_6,\\
\dot{x}_3&=\{x_3,x_5\}=x_1x_2x_4x_6,\\
\dot{x}_5&=\{x_5,x_5\}=0,
\end{aligned}
\qquad
\begin{aligned}[c]
\dot{x}_2&=\{x_2,x_5\}=x_1x_3x_4x_6,\\
\dot{x}_4&=\{x_4,x_5\}=x_1x_2x_3x_6,\\
\dot{x}_6&=\{x_6,x_5\}=0.
\end{aligned}
\end{equation}
These equations of motion can solved in terms of Jacobi elliptic functions, where the $x_4,x_5,x_6$ depend on 
complicated expressions for the moduli (as in \eqref{eq:DELLHam}), cf. \cite{braden2}, (we omit the formulae), 
but they are more readily solved in terms of the generalised Jacobi elliptic functions, namely by the following 
expressions: 
\begin{equation}
\begin{aligned}[c]
x_1&=A_1 s(K(t-t_0);k_1,k_2),\\
x_3&=A_3d_1(K(t-t_0);k_1,k_2),\\
x_5&=E,\,\mathrm{the\,\, energy},
\end{aligned}
\qquad
\begin{aligned}[c]
x_2&=A_2 c(K(t-t_0);k_1,k_2),\\
x_4&=A_4 d_2(K(t-t_0);k_1,k_2),\\
x_6&=K,\,\mathrm{constant},
\end{aligned}
\end{equation}
with
\begin{equation}
\begin{aligned}[c]
A_1^2&=\frac{k_1k_2}{\sqrt{-g^2}},\\
A_3^2&=\frac{-k_2}{k_1\sqrt{-g^2}},
\end{aligned}
\qquad
\begin{aligned}[c]
A_2^2&=\frac{-k_1k_2}{\sqrt{-g^2}},\\
A_4^2&=\frac{k_1\sqrt{-g^2}}{k_2},
\end{aligned}
\end{equation}
respectively, identifying $k_1=1/\widetilde{k}$ and $k_2=\widetilde{k}\sqrt{-2g^2}$. 
Note that $K$ is related to the energy, $E$, through the final two quadrics
\begin{equation}
\tilde{k}^2(1-E^2)=1-K^2.
\end{equation}
This connection through the quadrics suggests that the generalised Jacobi functions form the natural parametrisation for the 
2-particle DELL model.  
Note that these equations of motion also coincide with those for the four-dimensional Euler top example of the 
previous subsection. Hence, these two systems must be equivalent up to scaling.
 
\section{Conclusion}

We have investigated a novel connection between the formulae governing 4-dimensional hyperspherical geometry and elliptic functions. This connection is via the generalised Jacobi elliptic functions, defined through Abelian integrals associated with a doubly elliptic cover of an elliptic curve. These generalised Jacobi elliptic functions have two distinct moduli, $k_1$ and $k_2$, and are expressible in terms of the usual Jacobi elliptic functions, . As such these functions may be of interest for the study of certain elliptic integrable models where solutions in terms of elliptic functions with different moduli appear, for example the Q4 lattice equation of the ABS classification, and the DELL model. The link between addition formulae and hyperspherical geometry suggests that discrete integrable models, such as discretisation of the DELL model, may be derived exploiting this connection. This is a link that has been exploited in the spherical case by Petrera and Suris, \cite{Suris}, in producing an integrable map in the sense of multidimensional consistency, based upon the cosine and polar cosine rules. They have shown this map to be related to the Hirota-Kimura discretisation of the Euler top. Whilst this touches slightly on our work, we are more concerned with the higher dimensional hyperspherical trigonometry, and its links to elliptic functions. The Dell model, which arises in our case, although equivalent to a higher dimensional Euler top, seems to be different from Petrera and Suris' model.

We have also generalised the nested structure of hyperspherical trigonometry here to any dimension using higher-dimensional vector products, although the inter-relations between the formulae become increasingly complicated.

In recent years there has been substantial progress in obtaining 
formulas for hyperspherical tetrahedra in 4D in terms of arc lengths 
and dihedral angles, cf. \cite{Murakami,Murakami2,Murakami3}. 
These generalise results for symmetric tetrahedra obtained earlier,  
\cite{Derevnin,Derevnin2}, refining the formula for the general case 
found in \cite{ChoKim}. The expressions involve Sch\"afli type 
symbols, and can be expressed in terms of dilogarithms (Lobachevsky functions in the hyperbolic case).  The occurrence of such functions 
in Lagrangians of certain discrete integrable systems, cf. e.g. \cite{Lobb}, 
seems to point to suggest a connection between the two types 
objects from a geometrical perspective. 

Finally, the occurrence of the Pawellek functions, which involve 
elliptic functions of two distinct moduli, is reminiscent of the 
\textit{bi-elliptic addition formulae}, \cite{AHN}, that occur naturally in the 
seed and soliton solutions of the Q4 lattice equation, 
discovered by Adler, \cite{Adler}.

\subsection*{Acknowledgements}

The first author was supported by the UK Engineering and Physical Sciences Research Council (ESPRC) when this work was conducted.
Part of the work was done while the second author was supported by a Royal Society/Leverholme Trust Senior Research Fellowship (2011-12). 

\appendix

\section{Four and Five Parts Formulae}
\def\theequation{A\arabic{equation}}
\setcounter{equation}{0}

We write a general position vector on the surface of a sphere in terms of its basis vectors, and derive the four and five parts formulae.

Given 2 vectors in three dimensions, we can express any other vector on the sphere in terms of these by
\begin{equation}
\mathbf{n}_i=A\mathbf{n}_j+B\mathbf{n}_k+C\mathbf{n}_j\times\mathbf{n}_k,
\end{equation}
with
\begin{equation}
\left(\begin{array}{c}
A\\
B\\
\end{array}\right)
=\left(\begin{array}{cc}
1 & \cos\theta_{jk}\\
\cos\theta_{jk} & 1\\
\end{array}\right)^{-1}
\left(\begin{array}{c}
\cos\theta_{ij}\\
\cos\theta_{ik}\\
\end{array}\right)
=\frac{1}{\sin^2\theta_{jk}}\left(\begin{array}{c}
\sin\theta_{ik}\sin\theta_{jk}\cos\alpha_k\\
\sin\theta_{ij}\sin\theta_{jk}\cos\alpha_j\\
\end{array}\right),
\end{equation}
and
\begin{equation}
C=\frac{\mathbf{n}_i\cdot(\mathbf{n}_j\times\mathbf{n}_k)}{\sin^2\theta_{jk}}=\frac{\sin\alpha_i\sin\theta_{ik}\sin\theta_{ij}}{\sin^2\theta_{jk}},
\end{equation}
giving
\begin{equation}
\mathbf{n}_i=\frac{\sin\theta_{ik}\cos\alpha_k}{\sin\theta_{jk}}\mathbf{n}_j+\frac{\sin\theta_{ij}\cos\alpha_j}{\sin\theta_{jk}}\mathbf{n}_k+
\frac{\sin\alpha_i\sin\theta_{ik}\sin\theta_{ij}}{\sin^2\theta_{jk}}\mathbf{n}_j\times\mathbf{n}_k.
\end{equation}
Using the spherical sine rule this reduces to
\begin{equation}
\mathbf{n}_i=\frac{\sin\alpha_{j}\cos\alpha_k}{\sin\alpha_{i}}\mathbf{n}_j+\frac{\sin\alpha_{k}\cos\alpha_j}{\sin\alpha_{i}}\mathbf{n}_k+
\frac{\sin\alpha_i\sin\sin\alpha_k}{\sin\alpha_i}\mathbf{n}_j\times\mathbf{n}_k.
\end{equation}
Using the Gram-Schmidt Process $\mathbf{n}_i$ may be written in terms of an orthonormal basis,
\begin{equation}
\mathbf{n}_i=A'\mathbf{n}_{j}+B'\mathbf{n}_{k}'+C'\mathbf{n}_{j}\times\mathbf{n}_{k}',
\end{equation}
where $\mathbf{n}_j$ is as is before, and $\mathbf{n}_k'$ is given by
\begin{equation}
\mathbf{n}_k'=\frac{\mathbf{n}_k-(\mathbf{n}_j\cdot\mathbf{n}_k)\mathbf{n}_j}{|\mathbf{n}_k-(\mathbf{n}_j\cdot\mathbf{n}_k)\mathbf{n}_j|}
=\frac{\mathbf{n}_k-(\mathbf{n}_j\cdot\mathbf{n}_k)\mathbf{n}_j}{\sin\theta_{jk}}.
\end{equation}
This implies
\begin{equation}
C'=(\sin\theta_{jk})C,\hspace{15mm}B'=(\sin\theta_{jk})B,\hspace{15mm}A'=A+\cot\theta_{jk}B',
\end{equation}
from which it follows that
\begin{equation}
\mathbf{n}_i=\frac{\sin\theta_{ik}\cos\alpha_k+\cos\theta_{jk}\sin\theta_{ij}\cos\alpha_j}{\sin\theta_{jk}}\mathbf{n}_{j}+\sin\theta_{ij}
\cos\alpha_j\mathbf{n}_{k}'+\sin\theta_{ij}\sin\alpha_j\mathbf{n}_{j}\times\mathbf{n}_{k}'.
\end{equation}
Equating the two expression we have for $\mathbf{n_i}$ gives the four-parts formula
\begin{equation}
\cot\theta_{ij}\sin\theta_{jk}=\cot\alpha_k\sin\alpha_j+\cos\theta_{jk}\cos\alpha_j.
\end{equation}
Similarly, for 
\begin{equation}
\mathbf{u}_{ij}=P\mathbf{u}_{ik}+Q\mathbf{u}_{kj}+R\mathbf{u}_{ik}\times\mathbf{u}_{kj}
\end{equation}
it follows that
\begin{equation}
\mathbf{u}_{ij}=\frac{\sin\alpha_j}{\sin\alpha_k}\cos\theta_{jk}\mathbf{u}_{ik}
+\frac{\sin\alpha_i}{\sin\alpha_k}\cos\theta_{ik}\mathbf{u}_{kj}-
\frac{\sin\theta_{ik}\sin\theta_{kj}}{\sin\theta_{ij}}\mathbf{u}_{ik}\times\mathbf{u}_{kj},
\end{equation}
or alternatively, using the spherical sine rule,
\begin{equation}
\mathbf{u}_{ij}=\frac{\sin\theta_{ik}\cos\theta_{jk}}{\sin\theta_{ij}}\mathbf{u}_{ik}
+\frac{\sin\theta_{jk}\cos\theta_{ik}}{\sin\theta_{ij}}\mathbf{u}_{kj}-
\frac{\sin\theta_{ik}\sin\theta_{kj}}{\sin\theta_{ij}}\mathbf{u}_{ik}\times\mathbf{u}_{kj}.
\end{equation}
Similarly, using the Gram-Schmidt Process,
\begin{equation}
\mathbf{u}_{ij}=\frac{\sin\alpha_{j}\cos\theta_{jk}-\cos\alpha_k\sin\alpha_i\cos\theta_{ik}}{\sin\alpha_{k}}\mathbf{u}_{ik}+\sin\alpha_i\cos\theta_{ik}\mathbf{u}_{kj}'+\sin\theta_{kj}\sin\alpha_j\mathbf{u}_{ij}\times\mathbf{u}_{kj}'.
\end{equation}
Equating the two expression we have for $\mathbf{u}_{ij}$ gives the five-parts formula
\begin{equation}
\cot\theta_{jk}\sin\alpha_{j}=\cos\alpha_i\sin\alpha_k+\cos\theta_{ik}\sin\alpha_i\cos\alpha_k.
\end{equation}

In 4-dimensional Euclidean space, for the vectors  $\mathbf{n}_i,\mathbf{n}_j,\mathbf{n}_k,\mathbf{n}_l\in E_4$ of a hyperspherical tetrahedron we have the 
expansions: 
\begin{eqnarray} 
\mathbf{n}_i&=&
\frac{\sin(\theta_{ik},\theta_{il},\theta_{kl})\,\cos\phi_{kl}}
{\sin(\theta_{jk},\theta_{jl},\theta_{kl})}\mathbf{n}_j + 
\frac{\sin(\theta_{ij},\theta_{il},\theta_{jl})\,\cos\phi_{jl}}
{\sin(\theta_{jk},\theta_{jl},\theta_{kl})}\mathbf{n}_k + 
\frac{\sin(\theta_{ij},\theta_{ik},\theta_{jk})\,\cos\phi_{jk}}
{\sin(\theta_{jk},\theta_{jl},\theta_{kl})}\mathbf{n}_l +   \\ 
&& + \frac{\sqrt{
\sin(\theta_{ij},\theta_{ik},\theta_{jk})\,\sin(\theta_{ij},\theta_{il},\theta_{jl})\, 
\sin(\theta_{ik},\theta_{il},\theta_{kl})\,\sin(\phi_{ij},\phi_{ik},\phi_{il})}}
{\left(\sin(\theta_{jk},\theta_{jl},\theta_{kl})\right)^2}
\,\mathbf{n}_j\times\mathbf{n}_k\times\mathbf{n}_l\   
\end{eqnarray} 
as well as 
\begin{eqnarray}
\mathbf{n}_i&=&\cos\theta_{ij}\,\mathbf{n}_j + \sin\theta_{ij}\cos\alpha^{(ijk)}_j\, 
\mathbf{n}^\prime_k + \sin\theta_{ij}\sin\alpha^{(ijk)}_j\cos\phi_{jk}\,
\mathbf{n}_l^\prime   \\ 
&& ~~~~~ + \sin\theta_{ij}\sin\alpha^{(ijk)}_j\sin\phi_{jk}\,\mathbf{n}_j
\times\mathbf{n}^\prime_k\times\mathbf{n}^\prime_l\   , 
\end{eqnarray}  
in a (Gram-Schmidt) orthogonal basis $\mathbf{n}_j,\mathbf{n}'_k,\mathbf{n}'_l$. Comparing the results provides the following set of equations: 
\begin{equation} 
\begin{split}
\frac{\sin(\theta_{jk},\theta_{jl},\theta_{kl})}
{\sin(\theta_{ij},\theta_{ik},\theta_{jk})} \cos\phi_{jk} = 
&\cos\theta_{il} \Bigg( 1 - \tan\theta_{il}\cot\theta_{ik}\cos\phi_{ij} 
\frac{\sin\alpha_i^{(ijl)}}{\sin\alpha_i^{(ijk)}}   \\ 
&\ - \tan\theta_{il}\cot\theta_{ij}\cos\alpha_i^{(ijl)} 
\left( 1 - \tan\alpha_i^{(ijl)}\cot\alpha_i^{(ijk)}\cos\phi_{ij}\right) 
\Bigg)\   ,    \\  
\frac{\sin(\theta_{ik},\theta_{il},\theta_{kl})}
{\sin(\theta_{ij},\theta_{ik},\theta_{jk})} \cos\phi_{ik} =& 
\frac{\sin\theta_{il}\cos\alpha_i^{(ijl)}}{\sin\theta_{ij}} 
\left( 1 - \tan\alpha_i^{(ijl)}\cot\alpha_i^{(ijk)}\cos\phi_{ij}\right)\   ,  
\end{split}
\end{equation} 
which form the hyperspherical analogue of the four-parts formula. 

Similarly, for the expansions of the polar vectors $\mathbf{u}_{ijk},\mathbf{u}_{ijl},\mathbf{u}_{ikl},\mathbf{u}_{jkl}$ we have
\begin{equation}
\begin{split}
\mathbf{u}_{ijk}=&\frac{\sin(\theta_{ij},\theta_{il},\theta_{jl})}{\sin(\theta_{ij},\theta_{ik},\theta_{jk})}\cos\theta_{kl}\mathbf{u}_{ijl}
-\frac{\sin(\theta_{jk},\theta_{jl},\theta_{kl})}{\sin(\theta_{ij},\theta_{ik},\theta_{jk})}\cos\theta_{il}\mathbf{u}_{jlk}
+\frac{\sin(\theta_{ik},\theta_{il},\theta_{kl})}{\sin(\theta_{ij},\theta_{ik},\theta_{jk})}\cos\theta_{jl}\mathbf{u}_{lki}\\
&+\frac{\sin(\theta_{ij},\theta_{il},\theta_{jl})\sin(\theta_{jk},\theta_{jl},\theta_{kl})\sin(\theta_{ik},\theta_{il},\theta_{kl})}{\sin(\theta_{ij},\theta_{ik},\theta_{il},\theta_{jk},\theta_{jl},\theta_{kl})}\mathbf{u}_{ijl}\times\mathbf{u}_{jlk}\times\mathbf{u}_{kli}
\end{split} 
\end{equation}
and in terms of an orthonormal basis,
\begin{equation}
\begin{split}
\mathbf{u}_{ijk}=&\Bigg(\frac{\sin(\theta_{ij},\theta_{il},\theta_{jl})\cos\theta_{kl}}{\sin(\theta_{ij},\theta_{ik},\theta_{jk})}-\left(\frac{\cos\phi_{jk}-\cos\phi_{jl}\cos\phi_{ij}}{\sin\phi_{jl}}\right)\cos\phi_{jl}\\
&\hspace{1in}+\frac{\cos\theta_jl\cos\phi_{il}\sin\phi_{jl}\sin(\phi_{ij},\phi_{jk},\phi_{jl})}{\sin(\phi_{il},\phi_{jl},\phi_{kl})}\Bigg)\mathbf{u}_{ijl}\\
&+\frac{\cos\phi_{jk}-\cos\phi_{ij}\cos\phi_{jl}}{\sin\phi_{jl}}\mathbf{u}_{jlk}'+\frac{\cos\theta_{jl}\sin(\phi_{ij},\phi_{jk},\phi_{jl})}{\sin\phi_{ij}}\mathbf{u}_{lki}'\\
&+\sin(\theta_{ij},\theta_{ik},\theta_{il},\theta_{jk},\theta_{jl},\theta_{kl})\mathbf{u}_{ijl}\times\mathbf{u}_{jlk}'\times\mathbf{u}_{kli}'.
\end{split}
\end{equation}
Comparing the results provides
\begin{equation}
\begin{split}
\cos\phi_{jk}\cos\phi_{jl}\sin(\theta_{ij},\theta_{ik},\theta_{jk})=&\cos\phi_{ij}\sin\phi_{jl}\sin(\theta_{ij},\theta_{ik},\theta_{jk})\\
&+\cos\theta_{kl}\sin\phi_{jl}\sin(\theta_{ij},\theta_{il},\theta_{jl})\\
&+\cos^2\phi_{jl}\cos\phi_{ij}\sin(\theta_{ij},\theta_{ik},\theta_{jk})\\
&+\cos\theta_{jl}\cos\phi_{il}\sin^2\phi_{jl}\sin(\theta_{ik},\theta_{il},\theta_{kl}),
\end{split}
\end{equation}
the hyperspherical analogue of the five-parts formula.

\section{Angle Addition Formulas}
\def\theequation{B\arabic{equation}}
\setcounter{equation}{0}

We derive the cosine addition formulae by collapsing a spherical triangle. We then extend this to see what happens when we collapse a hyperspherical tetrahedron. Recall that the generalised sine function of three variables gives the volume of a three dimensional parallelepiped defined by vectors $\mathbf{n}_i,\mathbf{n}_j$ and $\mathbf{n}_k$ embedded in four dimensional Euclidean space,
\begin{equation}
\sin(\theta_{ij},\theta_{ik},\theta_{jk})=\left|\begin{array}{ccc}
1&
\cos\theta_{ij}&
\cos\theta_{ik}\\
\cos\theta_{ij}&
1&
\cos\theta_{jk}\\
\cos\theta_{ik}&
\cos\theta_{jk}&
1\\
\end{array}\right|^\frac{1}{2}.
\end{equation}
When the three vectors $\mathbf{n}_i,\mathbf{n}_j$ and $\mathbf{n}_k$ become coplanar, the volume of the parallelepiped collapses to zero, and hence, so does the generalised sine function. When this occurs
\begin{equation}
\sin^2(\theta_{ij},\theta_{ik},\theta_{jk})=0.
\end{equation}
By expanding this out and completing the square in terms of $\cos\theta_{ij}$, this reduces to
\begin{equation}
(\cos\theta_{ij}-\cos\theta_{ik}\cos\theta_{jk})^2=\sin^2\theta_{ik}\sin^2\theta_{jk}.
\end{equation}
Solving this gives
\begin{equation}
\cos\theta_{ij}=\cos\theta_{ik}\cos\theta_{jk}\pm\sin\theta_{ik}\sin\theta_{jk}.
\end{equation}
This occurs when $\theta_{ij}=\theta{ik}\mp\theta{jk}$, hence giving us the standard addition formula for cosine,
\begin{equation}
\cos(A\pm B)=\cos A\cos B\mp\sin A\sin B.
\end{equation}

Similarly, the generalised sine function of six variables gives the volume of a four dimensional parallelotope defined by vectors $\mathbf{n}_i,\mathbf{n}_j,\mathbf{n}_k$ and $\mathbf{n}_l$ embedded in five dimensional Euclidean space,
\begin{equation}
\sin(\theta_{ij},\theta_{ik},\theta_{il},\theta_{jk},\theta_{jl},\theta_{kl})=\left|\begin{array}{cccc}
1&
\cos\theta_{ij}&
\cos\theta_{ik}&
\cos\theta_{il}\\
\cos\theta_{ij}&
1&
\cos\theta_{jk}&
\cos\theta_{jl}\\
\cos\theta_{ik}&
\cos\theta_{jk}&
1&
\cos\theta_{kl}\\
\cos\theta_{il}&
\cos\theta_{jl}&
\cos\theta_{kl}&
1\\
\end{array}\right|^\frac{1}{2}.
\end{equation}
When the four vectors $\mathbf{n}_i,\mathbf{n}_j,\mathbf{n}_k$ and $\mathbf{n}_l$ become linearly dependent, the volume of the 4-parallelepiped collapses to zero, and hence, so does the generalised sine function. When this occurs, recalling the hyperspherical sine rule,
\begin{equation}
\sin^2(\theta_{ij},\theta_{ik},\theta_{il},\theta_{jk},\theta_{jl},\theta_{kl})=0\iff k_H=0.
\end{equation}
This gives
\begin{equation}\left.
\begin{array}{c}
\sin(\phi_{ij},\phi_{ik},\phi_{il})=0\\
\sin(\phi_{ij},\phi_{jk},\phi_{jl})=0\\
\sin(\phi_{ik},\phi_{jk},\phi_{kl})=0\\
\sin(\phi_{il},\phi_{jl},\phi_{kl})=0\\
\end{array}
\right\},
\end{equation}
which, in turn from the spherical case implies
\begin{equation}\left.
\begin{array}{c}
\phi_{ij}+\phi_{ik}+\phi_{il}=0\\
\phi_{ij}+\phi_{jk}+\phi_{jl}=0\\
\phi_{ik}+\phi_{jk}+\phi_{kl}=0\\
\phi_{il}+\phi_{jl}+\phi_{kl}=0\\
\end{array}
\right\},
\end{equation}
and hence,
\begin{equation}
\phi_{ij}+\phi_{ik}+\phi_{il}+\phi_{jk}+\phi_{jl}+\phi_{kl}=0,
\end{equation}
the sum of the dihedral angles is zero.


%
%

\label{lastpage}
\end{document}